\let\oldvec\vec
\RequirePackage{fix-cm}
\documentclass[final,numbook,envcountsame,smallcondensed]{svjour3}     
\let\vec\oldvec

\usepackage{amsmath,amsfonts,amssymb}
\usepackage{graphicx}
\usepackage{float}
\floatstyle{boxed}
\restylefloat{table}
\restylefloat{figure}
\usepackage{cite}
\usepackage[T1]{fontenc}
\allowdisplaybreaks
\usepackage{wrapfig}
\usepackage{proof}
\usepackage[braket]{qcircuit}
\usepackage{tikz}
\usetikzlibrary{cd,positioning}
\usepackage[only=llparenthesis,rrparenthesis,llbracket,rrbracket]{stmaryrd}
\usepackage{anyfontsize}
\SetSymbolFont{stmry}{bold}{U}{stmry}{m}{n}
\usepackage{multicol,multirow}
\usepackage{xspace}
\usepackage{adjustbox}

\newcommand\axiomd{Axiom$_{{}_{\mathsf{Dist}}}$}
\newcommand\axiomz{Axiom$_{{}_0}$}
\newcommand\cV{\mathcal V}
\newcommand\cW{\mathcal W}
\newcommand\cC{\mathcal C}

\newcommand\LambdaSM{Lambda-$\mathcal S$\xspace}
\newcommand\LambdaS{Lambda-$\mathcal S^*$\xspace}
\newcommand\xrecap[4]{\noindent {\bf #1 \ref{#3} (#2).} \emph{#4}}

\newcommand\citaCR{\cite[Thm.~7.25]{ArrighiDowekLMCS17}}
\newcommand\citaSR{\cite[Thm.~5.12]{DiazcaroDowekRinaldiBIO19}}
\newcommand\citaSN{\cite[Thm.~6.10]{DiazcaroDowekRinaldiBIO19}}

\newcommand\home[2]{[#1,#2]}
\newcommand\lra{\longrightarrow}
\newcommand\sem[1]{\left\llbracket {#1}\right\rrbracket}
\newcommand\com[1]{\mathcal{C}_{#1}}
\newcommand\B{\ensuremath{\mathbb B}}
\newcommand\types{\ensuremath{\mathcal T}}
\newcommand\qtypes{\ensuremath{\mathcal Q}}
\newcommand\btypes{\ensuremath{\mathcal B}}
\newcommand\vars{\ensuremath{\mathsf{Vars}}}
\newcommand\values{\ensuremath{\mathsf V}}
\newcommand\tbasis{\ensuremath{\mathsf B}}

\newcommand\s[1]{\ensuremath{\mathsf{#1}}}
\newcommand\head{\text{\sl head}}
\newcommand\tail{\text{\sl tail}}
\newcommand\pair[2]{({#1}+{#2})}
\newcommand\z[1][A]{\mathbf 0_{S(#1)}}
\newcommand\ite[3]{{#1}?{#2}\mathord{\cdot}{#3}}
\newcommand\C{\mathbb C}
\newcommand\I{I}
\newcommand\Id{\mathsf{Id}}
\newcommand\xlra[1]{\xrightarrow{#1}}
\newcommand\tax{\textsl{Ax}}
\newcommand\tif{\textsl{If}}
\newcommand\rbetab{(\s{\beta_b})}
\newcommand\rbetan{(\s{\beta_n})}
\newcommand\riftrue{(\s{if_{1}})}
\newcommand\riffalse{(\s{if_{0}})}
\newcommand\rlinr{(\s{lin^+_r})}
\newcommand\rlinscalr{(\s{lin^\alpha_r})}
\newcommand\rlinzr{(\s{lin^0_r})}
\newcommand\rlinl{(\s{lin^+_l})}
\newcommand\rlinscall{(\s{lin^\alpha_l})}
\newcommand\rlinzl{(\s{lin^0_l})}
\newcommand\rneut{(\s{neutral})}
\newcommand\runit{(\s{unit})}
\newcommand\rzeros{(\s{zero_\alpha})}

\newcommand\rzero{(\s{zero})}
\newcommand\rprod{(\s{prod})}
\newcommand\rdists{(\s{\alpha dist})}
\newcommand\rdistcasum{(\s{dist^+_\Uparrow})}
\newcommand\rdistcascal{(\s{dist^\alpha_\Uparrow})}
\newcommand\rdistcazeror{(\s{dist^0_{\Uparrow_r}})}
\newcommand\rdistcazerol{(\s{dist^0_{\Uparrow_\ell}})}
\newcommand\rcaneutl{(\s{neut^\Uparrow_\ell})}
\newcommand\rcaneutr{(\s{neut^\Uparrow_r})}

\newcommand\rcaneutzl{(\s{neut^\Uparrow_{0\ell}})}
\newcommand\rcaneutzr{(\s{neut^\Uparrow_{0r}})}
\newcommand\rfact{(\s{fact})}
\newcommand\rfacto{(\s{fact^1})}
\newcommand\rfactt{(\s{fact^2})}

\newcommand\rhead{(\s{head})}
\newcommand\rtail{(\s{tail})}
\newcommand\rdistzr{(\s{dist^0_r})}
\newcommand\rdistzl{(\s{dist^0_l})}
\newcommand\rdistscalr{(\s{dist^\alpha_r})}
\newcommand\rdistscall{(\s{dist^\alpha_l})}
\newcommand\rdistsumr{(\s{dist^+_r})}
\newcommand\rdistsuml{(\s{dist^+_l})}

\journalname{arXiv}

\begin{document}

\title{A categorical construction for the computational definition of vector
spaces\thanks{A.~D\'iaz-Caro has been partially supported by PICT 2015 1208,
ECOS-Sud A17C03 QuCa and PEDECIBA. O.~Malherbe has been partially supported by
MIA CSIC UdelaR.}}
\author{Alejandro D\'iaz-Caro \and Octavio Malherbe}

\institute{A.~D\'iaz-Caro \at
  Universidad Nacional de Quilmes \&\\
  Instituto de Ciencias de la Computaci\'on (CONICET-Universidad de Buenos Aires)\\
  Buenos Aires, Argentina\\
  \email{adiazcaro@icc.fcen.uba.ar}
  \and
  O.~Malherbe \at
  Departamento de Matem\'atica y Afines, CURE \& \\
  Instituto de Matem\'atica y Estad\'istica Rafael Laguardia, Facultad de Ingenier\'ia\\
  Universidad de la Rep\'ublica\\
  Montevideo, Uruguay\\
  \email{malherbe@fing.edu.uy}
}

\date{Submitted to arXiv}

\maketitle

\begin{abstract} 
  \LambdaSM is an extension to first-order lambda calculus unifying two
approaches of non-cloning in quantum lambda-calculi. One is to forbid
duplication of variables, while the other is to consider all lambda-terms as
algebraic linear functions. The type system of \LambdaSM has a constructor $S$
such that a type $A$ is considered as the base of a vector space while $S(A)$ is
its span. \LambdaSM can also be seen as a language for the computational
manipulation of vector spaces: The vector spaces axioms are given as a rewrite
system, describing the computational steps to be performed. In this paper we
give an abstract categorical semantics of \LambdaS (a fragment of \LambdaSM),
showing that $S$ can be interpreted as the composition of two functors in an
adjunction relation between a Cartesian category and an additive symmetric
monoidal category. The right adjoint is a forgetful functor $U$, which is hidden
in the language, and plays a central role in the computational reasoning.
\keywords{Quantum computing \and algebraic lambda-calculus \and categorical semantics}
\end{abstract}
  
\section{Introduction}\label{sec:Intro}
Algebraic lambda calculi aim to embed to the lambda calculus, the notion of
vector spaces over programs. This way a linear combination $\alpha.v+\beta.w$ of
programs $v$ and $w$, for some scalars $\alpha$ and $\beta$, is also a program~\cite{AssafDiazcaroPerdrixTassonValironLMCS14}.
This kind of construction has two independent origins. The {\em Algebraic Lambda
Calculus} (ALC for short)~\cite{VauxMSCS09} has been introduced as a fragment of
the {\em Differential Lambda Calculus}~\cite{EhrhardRegnierTCS03}, which is
itself originated from Linear Logic~\cite{GirardTCS87}. ALC can be seen as the
Differential Lambda Calculus without a differential operator. In the ALC the
notion of vector spaces is embedded in the calculus with an equational theory,
so the axioms of vector spaces, such as $\alpha.v+\beta.v = (\alpha+\beta).v$ are
seen as equalities between programs.
On the other hand, the {\em Linear Algebraic Lambda Calculus} (Lineal for
short)~\cite{ArrighiDowekLMCS17} was meant for quantum computation. The aim of
Lineal is to provide a computational definition of vector space and bilinear
functions, and so, it defines
the axioms of vector spaces as
rewrite rules, providing a confluent calculus. This way, an equality such as
$-v+v+3.w-2.w = w$ is described computationally step by step as
\begin{align*}
  (-1).v+v+3.w+(-2).w
  &\lra 0.v+3.w+(-2).w\\
  &\lra 0.v+1.w\\
  &\lra \mathbf{0}+1.w\\
  &\lra 1.w\\
  &\lra w
\end{align*}
Rules like $\alpha.v+\beta.v\lra(\alpha+\beta).v$ say that these expressions are
not \emph{the same} but one reduces to the other, and so, a computational step
has been performed. The backbone of this computation can be described as having
an element $\alpha.v+\beta.v$ without properties, which is decomposed into its
constituents parts $\alpha$, $\beta$, and $v$, and reconstructed in another way.
Otherwise, if we consider $\alpha.v+\beta.v$ being just a vector, as in the ALC, then it would
be equal  to $(\alpha+\beta).v$ and the computation needed to arrive from
the former to the latter would be ignored.
The main idea in the present paper is to study the construction of Lineal from a
categorical point of view, with an adjunction between a Cartesian closed
category, which will treat the elements as not having properties, and an
additive symmetric monoidal closed category, where the underlying properties
will allow to do the needed algebraic manipulation. A concrete example is an
adjunction between the
category $\mathbf{Set}$ of sets and the category $\mathbf{Vec}$ of vector
spaces. This way, a functor from $\mathbf{Set}$ to $\mathbf{Vec}$ will allow to
do the needed manipulation, while a forgetful functor from $\mathbf{Vec}$ to
$\mathbf{Set}$ will return the result of the computation.

The calculus \LambdaSM~\cite{DiazcaroDowekTPNC17,DiazcaroDowekRinaldiBIO19} is
a first-order typed fragment of Lineal, extended with measurements.
The type system has been designed as a quantum lambda calculus, where the main
goal was to study the non-cloning restrictions. In quantum computing a known
vector, such as a basis vector from the base considered for the measurements,
can be duplicated freely (normally the duplication process is just a preparation
of a new qubit in the same known basis state), while an unknown vector cannot. For this reason, a linear-logic
like type system has been put in place.
In linear logic we would write $A$ the types of terms that cannot be duplicated while ${!}A$
types duplicable terms. In \LambdaSM instead $A$ are the types of the terms that
represent basis vectors, while $S(A)$ are linear combinations of those (the span
of $A$). Hence, 
$A$ means that we can duplicate, while
$S(A)$ means that we cannot duplicate. Therefore, the $S$ is not the same as the bang
``$!$'', but somehow the opposite. This can be explained by the fact that
linear logic is focused on the possibility of duplication, while here we focus
on the possibility of superposition, which implies the impossibility of
duplication.

In~\cite{DiazcaroDowekTPNC17,DiazcaroDowekRinaldiBIO19} a first denotational semantics (in environment
style) is given where the type $\B$ is interpreted as $\{\ket 0,\ket 1\}$ while
$S(\B)$ is interpreted as $\mathsf{Span}(\{\ket 0,\ket 1\})=\C^2$, and, in
general, a type $A$ is interpreted as a basis while $S(A)$ is the vector space
generated by such a basis. In~\cite{DiazcaroMalherbeLSFA18,DiazcaroMalherbe2020} we went further and
gave a preliminary concrete categorical interpretation of \LambdaSM where $S$ is
a functor of an adjunction between the category $\mathbf{Set}$ and the category
$\mathbf{Vec}$. Explicitly, when we evaluate $S$ we obtain formal finite linear
combinations of elements of a set with complex numbers as coefficients and the
other functor of the adjunction, $U$, allows us to forget the vectorial
structure. In this paper, we define the abstract categorical semantics of the
fragment of \LambdaSM without measurement, which we may refer as \LambdaS, so we focus on the computational
definition of vector spaces, avoiding any interference produced by probabilistic
constructions.

The main structural feature of our model is that it is expressive enough to
describe the bridge between the property-less elements such as
$\alpha.v+\beta.v$, without any equational theory, and the result of its
algebraic manipulation into $(\alpha+\beta).v$, explicitly controlling its
interaction. In the literature, intuitionistic linear (as in linear-logic)
models are obtained by a monoidal comonad determined by a monoidal adjunction
$(S,m)\dashv (U,n)$, i.e.,~the bang $!$ is interpreted by the comonad $SU$
(see~\cite{BentonCSL94}). In a different way, a crucial ingredient of our model
is to consider the monad $US$ for the interpretation of $S$ determined by a
similar monoidal adjunction. This implies that on the one hand we have tight
control of the Cartesian structure of the model (i.e.~duplication, etc) and on
the other hand the world of superpositions lives in some sense inside the
classical world, i.e.~determined externally by classical rules until we decide
to explore it. This is given by the following composition of maps:
\[
  US(\B)\times US(\B)\xlra n U(S(\B)\otimes S(\B))\xlra{U(m)} US(\B\times \B)
\]
that allows us to operate in a monoidal structure explicitly allowing the
algebraic manipulation and then to return to the Cartesian product.

This is different from linear logic, where the $!$ stops any algebraic
manipulation, i.e.~$({!}\B)\otimes ({!}\B)$ is a product inside a monoidal
category.

\paragraph{Outline.}
The paper is structured as follows.
\begin{itemize}
\item Section~\ref{sec:calculus} gives the intuition and formalization of the
  fragment of \LambdaSM without measurements, called \LambdaS, we give some examples, and state its main
properties.
\item Section~\ref{sec:Category} presents the categorical construction for algebraic
manipulation.
\item Section~\ref{sec:DenSem} gives a denotational semantics of \LambdaS, using
  the categorical constructions from Section~\ref{sec:Category}.
\item Section~\ref{sec:properties} proves the soundness and completeness of such
  semantics.
\item Finally, we conclude in Section~\ref{sec:conclusion}. We also include an
  appendix with detailed proofs.
\end{itemize}

\section{The calculus \LambdaS}\label{sec:calculus}
In this section we define \LambdaS, a fragment of \LambdaSM, without
measurements. In addition, instead of considering the scalars in $\C$, we use
any commutative ring, which we will write $\mathcal C$, so to make the system
more general.

The syntax of terms and types is given in Figure~\ref{fig:syntax}, where we write
$\B^n$ for $\B\times\cdots\times\B$ $n$-times, with the convention that
$\B^1=\B$.
We use capital Latin letters
($A,B,C,\dots$) for general types and the capital Greek letters $\Psi$, $\Phi$,
$\Xi$, and $\Upsilon$ for qubit types.  $\btypes=\{\B^n\mid n\in\mathbb N\}$,
$\qtypes$ is the set of qubit types, and $\types$ is the set of types
($\btypes\subsetneq\qtypes\subsetneq\types$).  In the same way, $\vars$ is the
set of variables, $\tbasis$ is the set of basis terms, $\values$ the set of
values, and $\Lambda$ the set of terms. We have
$\vars\subsetneq\tbasis\subsetneq\values\subsetneq\Lambda$.

\begin{figure}[!h]
  \centering
  \[
    \begin{array}{rlr}
      \Psi & := \B^n\mid S(\Psi)\mid \Psi\times\Psi & \textrm{Qubit types (\qtypes)} \\
      A & := \Psi\mid \Psi\Rightarrow A\mid S(A) & \textrm{Types (\types)} \\                                                    
      \\
      b  & := x\mid \lambda x{:}\Psi.t\mid \ket 0\mid \ket 1\mid b\times b & \textrm{Basis terms (\tbasis)} \\
      v  & := b\mid \pair vv\mid \z\mid \alpha.v\mid v\times v & \textrm{Values (\values)} \\
      t  & := v\mid tt\mid \pair tt\mid\alpha.t\mid \ite{}tt\mid 
      t\times t\mid \head~t\mid \tail~t\mid \Uparrow_r t\mid \Uparrow_\ell
      t & \textrm{Terms ($\Lambda$)}
    \end{array}
  \]
  where $\alpha\in\mathcal C$.
  \caption{Syntax of types and terms of \LambdaS.}
  \label{fig:syntax}
\end{figure}

The intuition of these syntaxes is given by considering $\mathcal C=\C$. The type $\B$ is the type of a
specific base of $\C^2$, the base $\{\ket 0,\ket 1\}$, where we use the standard
notation from quantum computing $\ket 0$ and $\ket 1$: $\ket 0$ denotes the
vector $\left(\begin{smallmatrix}1\\0\end{smallmatrix}\right)$ and $\ket 1$
denotes the vector $\left(\begin{smallmatrix} 0\\1 \end{smallmatrix} \right)$.
This way, $\B\times \B = \{\ket 0\times\ket 0,\ket 0\times\ket 1,\ket
1\times\ket 0,\ket 1\times\ket 1\}$ is the base of $\C^4$. The type $S(\B)$ is
the type of any vector in $\C^2$, so $S$ can be seen as the span operator. For
example, $2.\ket 0+i.\ket 1$ may live inside $S(\B)$. On the other hand, a type
of the form $\B\times S(\B)$ is the type of a pair of a base vector with a
general vector, for example $\ket 0\times(\alpha.\ket 0+\beta.\ket 1)$ will have
this type. There is no type for pair of function types, only pair of qubit types
are considered. The type constructor $S$ can be used on any type, for example,
the type $S(\B\Rightarrow\B)$ is a valid type which denotes the types of
superpositions of functions, such as $2.\lambda x{:}\B.x+3.\lambda x{:}\B.\ket 0$.
We will come back to the meaning of superposed functions later.

Terms are considered modulo associativity and commutativity of the
syntactic symbol $+$.  

The term syntax is split in three: basis terms, which are values in the base
of a vector space of values. Values, which are obtained by the formal linear
combinations of basis terms, together with a null vector $\z[A]$ associated to
each type $S(A)$. And a set $\Lambda$ of general terms, which includes the values.

The syntax of terms contains:
\begin{itemize}
  \item The three basic terms for first-order lambda-calculus, namely,
    variables, abstractions and applications.
  \item Two basic terms $\ket 0$ and $\ket 1$ to represent qubits, and one test
    $\ite{}rs$ on them. We may write $\ite trs$ for $(\ite{}rs)t$, see
    Example~\ref{ex:ite} for a clarification of why to choose this
    presentation.
  \item A product $\times$ to represent associative pairs (i.e.~lists), with
    its destructors $\head$ and $\tail$. We may use the notation
    $\ket{b_1b_2\dots b_n}$ for
    $\ket{b_1}\times\ket{b_2}\times\dots\times\ket{b_n}$.
  \item Constructors to write linear combinations of terms, namely $+$ (sum)
    and $.$ (scalar multiplication), without destructor (the destructor is the
    measuring operator, which we have explicitly left out of this presentation),
    and one
    null vector $\z$ for each type $S(A)$.
  \item Two casting functions $\Uparrow_r$ and $\Uparrow_\ell$ which allows us
    to consider lists of superpositions as superpositions of lists (see
    Example~\ref{ex:cast}).
\end{itemize}

The rewrite system has not yet been exposed, however the next examples give some
intuitions and clarify the $\ite{}rs$ and the casting functions.
\begin{example}
  \label{ex:ite}
  The term $\ite{}rs$ is meant to test whether the condition is $\ket 1$ or
  $\ket 0$.  However, defining it as a function, allows us to use the algebraic
  linearity to implement the quantum-if~\cite{AltenkirchGrattageLICS05}:
  \[
    (\ite{}rs)(\alpha.\ket 1+\beta.\ket 0) = \ite{(\alpha.\ket 1+\beta.\ket
    0)}rs \lra^*\alpha.\ite{\ket 1}rs+\beta.\ite{\ket 0}rs
    \lra^*\alpha.r+\beta.s
  \]
\end{example}
\begin{example}
  \label{ex:cast}
  The term $(\frac 1{\sqrt 2}(\ket 0+\ket 1))\times\ket 0$ is the encoding of
  the qubit $\frac 1{\sqrt 2}(\ket 0+\ket 1)\otimes\ket 0$. However, while the
  qubit $\frac 1{\sqrt 2}(\ket 0+\ket 1)\otimes\ket 0$ is equal to $\frac
  1{\sqrt 2}(\ket 0\otimes\ket 0+\ket 1\otimes\ket 0)$, the term will not
  rewrite to the encoding of it, unless a casting $\Uparrow_r$ is preceding the
  term:
  \[
    \Uparrow_r(\frac 1{\sqrt 2}(\ket 0+\ket 1))\times\ket 0\lra^*\frac 1{\sqrt
    2}(\ket 0\times\ket 0+\ket 1\times\ket 0)
  \]
  The reason is that we want the term $(\frac 1{\sqrt 2}(\ket 0+\ket
  1))\times\ket 0$ to have type $S(\B)\times\B$, highlighting the fact that the
  second qubit is a basis qubit, i.e.~duplicable, while the term $\frac 1{\sqrt
  2}(\ket 0\times\ket 0+\ket 1\times\ket 0)$ will have type $S(\B\times\B)$,
  showing that the full term is a superposition where no information can be
  extracted and hence, non-duplicable.
\end{example}

The rewrite system depends on types. Indeed, $\lambda x{:}{S\Psi}.t$ follows a
call-by-name strategy, while $\lambda x{:}\B.t$, which can duplicate its
argument, must follow a call-by-base
strategy~\cite{AssafDiazcaroPerdrixTassonValironLMCS14}, that is, not only the
argument must be reduced first, but also it will distribute over linear
combinations prior to $\beta$-reduction. Therefore, we give first the type system and then the rewrite
system.

The typing relation is given in Figure~\ref{fig:types}.  Contexts, identified
by the capital Greek letters $\Gamma$, $\Delta$, and $\Theta$, are partial
functions from $\vars$ to $\types$. The contexts assigning only types in
$\btypes$ are identified with the super-index $\B$, e.g.~$\Theta^\B$.  Whenever
more than one context appear in a typing rule, their domains are considered
pair-wise disjoint. Observe that all types are linear (as in linear-logic)
except on basis types $\B^n$, which can be weakened and contracted (expressed
by the common contexts $\Theta^\B$).

Notice that rule $S_I$ makes type assignment not unique, since it makes possible to add as
many $S$ as wished. Also, there can be more than one type
derivation tree assigning the same type, for example:
\[
  \infer[\alpha_I]{\vdash 1.\z[\B]:S(S(\B))}
  {
    \infer[S_I]{\vdash\z[\B]:S(S(\B))}
    {
      \infer[\tax_{\mathbf 0}]{\vdash\z[\B]:S(\B)}{} 
    }
  }
  \qquad
  \textrm{and}
  \qquad
  \infer[S_I]{\vdash 1.\z[\B]:S(S(\B))}
  {
    \infer[\alpha_I]{\vdash 1.\z[\B]:S(\B)}
    {
      \infer[\tax_{\mathbf 0}]{\vdash\z[\B]:S(\B)}{} 
    }
  }
\]

\begin{figure}[h!]
  \centering
  \[
    \infer[^\tax] {\Theta^\B,x:\Psi\vdash x:\Psi} {}
    \qquad
    \infer[^{\tax_{\mathbf 0}}] {\Theta^\B\vdash \z:S(A)} {}
    \qquad
    \infer[^{\tax_{\ket 0}}] {\Theta^\B\vdash\ket 0:\B} {}
    \quad
    \infer[^{\tax_{\ket 1}}] {\Theta^\B\vdash\ket 1:\B} {}
  \]
  \[
    \infer[^{\alpha_I}] {\Gamma\vdash \alpha.t:S(A)} {\Gamma\vdash t:S(A)}
    \quad
    \infer[^{+_I}] {\Gamma,\Delta,\Theta^\B\vdash\pair tu:S(A)} {\Gamma,\Theta^\B\vdash t:S(A) & \Delta,\Theta^\B\vdash u:S(A)}
    \quad
    \infer[^{S_I}] {\Gamma\vdash t:S(A)} {\Gamma\vdash t:A}
  \]
  \[
    \infer[^\tif]{\Gamma\vdash\ite{}tr:\B\Rightarrow A}{\Gamma\vdash t:A &
    \Gamma\vdash r:A}
    \qquad
    \infer[^{\Rightarrow_I}] {\Gamma\vdash\lambda x{:}\Psi.t:\Psi\Rightarrow A} {\Gamma,x:\Psi\vdash t:A}
  \]
  \[
    \infer[^{\Rightarrow_E}] {\Delta,\Gamma,\Theta^\B\vdash tu:A}
    {
      \Delta,\Theta^\B\vdash u:\Psi
      &
      \Gamma,\Theta^\B\vdash t:\Psi\Rightarrow A
    }
    \qquad
    \infer[^{\Rightarrow_{ES}}] {\Delta,\Gamma,\Theta^\B\vdash tu:S(A)}
    {
      \Delta,\Theta^\B\vdash u:S(\Psi)
      &
      \Gamma,\Theta^\B\vdash t:S(\Psi\Rightarrow A)
    }
  \]
  \[
    \infer[^{\times_I}] {\Gamma,\Delta,\Theta^\B\vdash t\times u:\Psi\times\Phi} {\Gamma,\Theta^\B\vdash t:\Psi & \Delta,\Theta^\B\vdash u:\Phi}
    \qquad
    \infer[^{\times_{Er}}] {\Gamma\vdash \head~t:\B} {\Gamma\vdash t:\B^n &
    {\scriptstyle n>1}}
    \qquad
    \infer[^{\times_{El}}] {\Gamma\vdash \tail~t:\B^{n-1}} {\Gamma\vdash t:\B^n &
    {\scriptstyle n>1}}
  \]
  \[
    \infer[^{\Uparrow_\ell}]{\Gamma\vdash \Uparrow_\ell t:S(\Psi\times
      \Phi)}{\Gamma\vdash t:S(\Psi\times S^k(\Phi)) & {\scriptstyle k>0} &{\scriptstyle \Psi\neq S(A)} }
    \qquad
    \qquad
    \infer[^{\Uparrow_r}]{\Gamma\vdash \Uparrow_rt:S(\Psi\times
      \Phi)}{\Gamma\vdash t:S(S^k(\Psi)\times \Phi) & {\scriptstyle k>0} &{\scriptstyle \Psi\neq S(A)} }
  \]
  \caption{Typing relation}
  \label{fig:types}
\end{figure}

The rewrite relation is given in Figures~\ref{fig:TRSbeta} to
\ref{fig:TRScontext}.

The two beta rules (Figure~\ref{fig:TRSbeta}) are applied according to the shape
of the abstraction. If the abstraction expects an argument with a superposed type,
then the reduction follows a call-by-name strategy (rule $\rbetan$), while if
the abstraction expects a basis type, the reduction is call-by-base (rule
$\rbetab$): it $\beta$-reduces only when its argument is a basis term. However,
typing rules also allow to type an abstraction expecting an argument with basis
type, applied to a term with superposed type (cf.~Example~\ref{ex:CM}). In this
case, the $\beta$-reduction cannot occur and, instead, the application must
distribute using the rules from Figure~\ref{fig:TRSld}: the linear distribution
rules.
\begin{figure}
  \begin{equation*}
    \begin{aligned}
      \textrm{If $b$ has type $\B^n$ and $b\in\tbasis$, } (\lambda x{:}{\B^n}.t)b &\lra (b/x)t & \rbetab\\
      \textrm{If $u$ has type $S\Psi$, } (\lambda x{:}{S\Psi}.t)u &\lra (u/x)t & \rbetan
    \end{aligned}
  \end{equation*}
  \caption{Beta rules}
  \label{fig:TRSbeta}
\end{figure}
\begin{figure}
  \begin{equation*}
    \begin{aligned}
      \textrm{If $t$ has type $\B^n\Rightarrow A$, } t\pair uv &\lra \pair{tu}{tv} & \rlinr\\
      \textrm{If $t$ has type $\B^n\Rightarrow A$, } t(\alpha.u) &\lra\alpha.tu & \rlinscalr\\
      \textrm{If $t$ has type $\B^n\Rightarrow A$, } t\z[\B^n] &\lra\z &\rlinzr\\
      \pair tuv &\lra\pair{tv}{uv} & \rlinl\\
      (\alpha.t)u &\lra\alpha.tu &\rlinscall\\
      \z[\B^n\Rightarrow A]t &\lra\z[A] &\rlinzl
    \end{aligned}
  \end{equation*}
  \caption{Linear distribution rules}
  \label{fig:TRSld}
\end{figure}

Figure~\ref{fig:TRSif} gives the two rules for the conditional construction.
Together with the linear distribution rules (cf.~Figure~\ref{fig:TRSld}), these
rules implement the quantum-if (cf.~Example~\ref{ex:ite}).
\begin{figure}
  \begin{equation*}
    \begin{aligned}
      \ite{\ket 1}tr &\lra t &\riftrue
    \end{aligned}\hspace{3cm}
    \begin{aligned}
      \ite{\ket 0}tr &\lra r &\riffalse
    \end{aligned}
  \end{equation*}
  \caption{Rules of the conditional construction}
  \label{fig:TRSif}
\end{figure}

Figure~\ref{fig:TRSlists} gives the rules for lists, \rhead\ and \rtail.
\begin{figure}
  \begin{equation*}
    \begin{aligned}
      \textrm{If $h\neq u\times v$ and $h\in\tbasis$, }\head\ h\times t &\lra h & \rhead\\
      \textrm{If $h\neq u\times v$ and $h\in\tbasis$, }\tail\ h\times t &\lra t & \rtail
    \end{aligned}
  \end{equation*}
  \caption{Rules for lists}
  \label{fig:TRSlists}
\end{figure}

Figure~\ref{fig:TRSvs} deals with the vector space structure implementing a
directed version of the vector space axioms. The direction is chosen in order
to yield a canonical form~\cite{ArrighiDowekLMCS17}.
\begin{figure}
  \begin{equation*}
    \begin{aligned}
      \pair\z t &\lra t &\rneut\\
      1.t &\lra t &\runit\\
      \textrm{If $t$ has type $A$, }0.t &\lra\z &\rzeros\\
      \alpha.\z &\lra\z &\rzero\\
      \alpha.(\beta.t) &\lra (\alpha\beta).t &\rprod\\
      \alpha.\pair tu &\lra\pair{\alpha.t}{\alpha.u} &\rdists\\
      \pair{\alpha.t}{\beta.t} &\lra(\alpha+\beta).t &\rfact\\
      \pair{\alpha.t}t &\lra (\alpha+1).t &\rfacto\\
      \pair tt &\lra 2.t &\rfactt
    \end{aligned}
  \end{equation*}
  \caption{Rules implementing the vector space axioms}
  \label{fig:TRSvs}
\end{figure}

Figure~\ref{fig:TRScasts} are the rules to implement the castings. The idea is
that $\times$ does not distribute with respect to $+$, unless a casting allows
such a distribution. This way, the types $\B\times S(\B)$ and $S(\B\times\B)$
are different. Indeed, $\ket 0\times(\ket 0+\ket 1)$ has the first type but
not the second, while $\ket 0\times\ket 0+\ket 0\times\ket 1$ has the second
type but not the first. This way, the first type give us the information that
the state is separable, while the second type does not. We can choose to take the
first state as a pair of qubits forgetting the separability information, by
casting its type, in the same way as in certain programming languages an
integer can be cast to a float (and so, forgetting the information that it
was indeed an integer and not any float).

A second example is to take again Example~\ref{ex:cast}: The term $\frac 1{\sqrt 2}.(\ket 0+\ket
1)\times\ket 0$ has type $S(\B)\times\B$, expressing the fact that it is the
composition of a superposed qubit with a basis qubit. However, the term $\frac
1{\sqrt 2}.(\ket 0\times\ket 0+\ket 1\times\ket 0)$ has type
$S(\B\times\B)$, expressing the fact that it is a superposition of two qubits.
The first type give us information about the separability of the two-qubits
state, which is gathered from the fact that the term is indeed written as the
product of two qubits. Contrarily, the second term is not the product of two
qubits, and so the type cannot reflect its separability condition. In order to
not lose subject reduction, we need to cast the first term so we ``forget'' its
separability information, prior reduction.
\begin{figure}
  \begin{equation*}
    \begin{aligned}
      \Uparrow_r \pair rs\times u &\lra\pair{\Uparrow_r r\times u}{\Uparrow_r s\times u} &\rdistsumr\\
      \Uparrow_\ell u\times\pair rs &\lra\pair{\Uparrow_\ell u\times r}{\Uparrow_\ell u\times s} &\rdistsuml\\
      \Uparrow_r (\alpha.r)\times u &\lra \alpha.\Uparrow_r r\times u &\rdistscalr\\
      \Uparrow_\ell u\times(\alpha.r) &\lra \alpha.\Uparrow_r u\times r &\rdistscall\\
      \textrm{If $u$ has type $\Psi$, }\Uparrow_r \z[\Phi]\times u &\lra\z[\Phi\times\Psi] &\rdistzr\\
      \textrm{If $u$ has type $\Psi$, }\Uparrow_\ell u\times\z[\Phi] &\lra\z[\Psi\times\Phi] &\rdistzl\\
      \Uparrow\pair tu&\lra\pair{\Uparrow t}{\Uparrow u} &\rdistcasum\\
      \Uparrow(\alpha.t)&\lra\alpha.\Uparrow t &\rdistcascal\\
      \Uparrow_r\z[S(S\Psi)\times\Phi] &\lra\Uparrow_r\z[S\Psi\times\Phi] &\rdistcazeror\\
      \Uparrow_r\z[S(\B^n)\times\Phi] &\lra\z[\B^n\times\Phi] &\rcaneutzr\\
      \Uparrow_\ell\z[\Psi\times S(S\Phi)] &\lra\Uparrow_\ell\z[\Psi\times S\Phi] &\rdistcazerol\\
      \Uparrow_\ell\z[\Psi\times S(\B^n)] &\lra\z[\Psi\times\B^n] &\rcaneutzl\\
      \textrm{If }v\in\tbasis,\ \Uparrow_\ell u\times v&\lra u\times v &\rcaneutr\\
      \textrm{If }u\in\tbasis,\ \Uparrow_r u\times v&\lra u\times v &\rcaneutl
    \end{aligned}
  \end{equation*}
  \caption{Rules for castings $\Uparrow_r$ and $\Uparrow_\ell$}
  \label{fig:TRScasts}
\end{figure}

Finally, Figure~\ref{fig:TRScontext} give the contextual rules implementing the
call-by-value and call-by-name strategies.
\begin{figure}[t]
  \centering
  \[
    \begin{array}{r@{\ }lr}
      \multicolumn{3}{l}{\textrm{ If $t\lra u$, then}}\\
      \multicolumn{3}{c}{ 
	\begin{array}{c@{\qquad}c@{\qquad}c}
	  tv \lra uv & (\lambda x{:}{\B^n}.v)t\lra(\lambda x{:}{\B^n}.v)u & \pair tv\lra\pair uv \\
	  \alpha.t\lra\alpha.u& 
                                                                & t\times v\lra u\times v \\
	  v\times t\lra v\times u& \Uparrow_r t\lra\Uparrow_r u& \Uparrow_\ell t\lra\Uparrow_\ell u \\
	  \head\ t\lra\head\ u& \tail\ t\lra\tail\ u& \ite trs\lra\ite urs 
	\end{array}
      }
    \end{array}
  \]
  \caption{Contextual rules}
  \label{fig:TRScontext}
\end{figure}

\begin{example}\label{ex:CM}
  The term $\lambda x{:}\B.x\times x$ does not represent a cloning machine, but
  a CNOT\footnote{The CNOT quantum gate is such that CNOT$\ket{0x}=\ket{0x}$ and
  CNOT$\ket{1x}=\ket{1\overline x}$. Therefore, $\mathrm{CNOT}(\alpha.\ket
  0+\beta.\ket 1)\ket 0=\alpha.\mathrm{CNOT}\ket{00}+\beta.\mathrm{CNOT}\ket{10}=\alpha.\ket{00}+\beta.\ket{11}$.} with an ancillary qubit~$\ket 0$. Indeed,
  \begin{align*}
    (\lambda x{:}\B.x\times x)\tfrac 1{\sqrt 2}.({\ket 0}+{\ket 1})
    &\xlra{\rlinscalr}
    \tfrac 1{\sqrt 2}.(\lambda x{:}\B.x\times x)({\ket 0}+{\ket 1})\\
    &\xlra{\rlinr}
    \tfrac 1{\sqrt 2}.({(\lambda x{:}\B.x\times x)\ket 0}+{(\lambda x{:}\B.x\times x)\ket 1})\\
    &\xlra{\beta_b}
    \tfrac 1{\sqrt 2}.({\ket 0\times\ket 0}+{(\lambda x{:}\B.x\times x)\ket 1})\\
    &\xlra{\beta_b}
    \tfrac 1{\sqrt 2}.({\ket 0\times\ket 0}+{\ket 1\times\ket 1})
  \end{align*}

  The type derivation is as follows:
  \[
    \infer[^{\Rightarrow_{ES}}]{\vdash(\lambda x{:}\B.x\times x)\frac 1{\sqrt 2}.(\ket 0+\ket 1):S{\B^2}}
    {
      \infer[^{S_I}]{\vdash\lambda x{:}\B.x\times x:S(\B\Rightarrow\B^2)}
      {
	\infer[^{\Rightarrow_I}]{\vdash\lambda x{:}\B.x\times x:\B\Rightarrow\B^2}
	{
	  \infer[^{\times_I}]{x:\B\vdash x\times x:\B^2}
	  {
	    \infer[^\tax]{x:\B\vdash x:\B}{}
	    &
	    \infer[^\tax]{x:\B\vdash x:\B}{}
	  }
	}
      }
      &
      \infer[^{\alpha_I}]{\vdash\frac 1{\sqrt 2}.({\ket 0}+{\ket 1}):S(\B)}
      {
	\infer[^{+_I}]{\vdash{\ket 0}+{\ket 1}:S(\B)}
	{
	  \infer[^{S_I}]{\vdash\ket 0:S(\B)}{\infer[^{\tax_{\ket 0}}]{\vdash\ket 0:\B}{}}
	  &
	  \infer[^{S_I}]{\vdash\ket 1:S(\B)}{\infer[^{\tax_{\ket 1}}]{\vdash\ket 1:\B}{}}
	}
      }
    }
  \]
\end{example}

\begin{example}
  A Hadamard gate\footnote{The Hadamard quantum gate is such that $H\ket
    0=\frac{1}{\sqrt 2}(\ket 0+\ket 1)$ and $H\ket 1=\frac 1{\sqrt 2}(\ket
    0-\ket 1)$.} can be implemented by $H=\lambda x{:}\B.\ite x{\ket -}{\ket +}$,
  where $\ket +=\frac 1{\sqrt 2}.\ket 0+\frac 1{\sqrt 2}.\ket 1$ and
  $\ket -=\frac 1{\sqrt 2}.\ket 0-\frac 1{\sqrt 2}.\ket 1$. Therefore,
  $H:\B\Rightarrow S(\B)$ and we have $H\ket 0\lra^*\ket +$ and $H\ket 1\lra^*\ket
  -$.
\end{example}

Correctness has been established in previous works for slightly different
versions of \LambdaS, except for the case of confluence, which has only been
proved for Lineal. Lineal can be seen as an untyped fragment without several
constructions (in particular, without measurement), extended with higher-order computation. The proof of confluence for
\LambdaSM is delayed to future work, using the development of probabilistic
confluence from~\cite{DiazcaroMartinezLSFA17}. The proof of Subject Reduction
and Strong Normalization are straightforward modifications from the proofs of
the different presentations of \LambdaSM.

\begin{theorem}[Confluence of Lineal, \citaCR]
  Lineal is confluent.
  \qed
\end{theorem}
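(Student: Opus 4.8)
The plan is to prove confluence by a Hindley--Rosen decomposition rather than a direct appeal to Newman's lemma, since the full relation of Lineal is not terminating: the untyped $\beta$-fragment admits fixed-point and looping terms, so strong normalization fails and local confluence alone would not suffice. I would therefore split the rewrite relation into two subsystems: the \emph{algebraic} subsystem $R$, comprising the directed vector-space axioms (scalar product, neutrality, distribution, and the factorization rules such as $\alpha.t+\beta.t \lra (\alpha+\beta).t$ and the analogues of $\rfact$, $\rfacto$, $\rfactt$), taken modulo associativity and commutativity of $+$; and the \emph{computational} subsystem $E$, comprising $\beta$-reduction together with the linear-distribution rules. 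The aim is to show each subsystem is confluent and that the two commute, and then to invoke the Hindley--Rosen lemma: the union of two confluent relations that commute is itself confluent.

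First I would establish confluence of $R$. The algebraic subsystem, considered in isolation, should be strongly normalizing: I would exhibit a well-founded measure on terms (for instance a lexicographic combination of the number of $+$ and $.$ symbols with a measure on nested scalars) that strictly decreases under every algebraic step. By Newman's lemma, confluence of $R$ then reduces to local confluence, which I would verify by enumerating and joining the finitely many critical pairs between the algebraic rules, the overlaps occurring chiefly among the factorization and distribution rules. Since $+$ is handled modulo AC, this must be carried out as local confluence \emph{modulo} AC, closing every peak up to the AC-congruence. For the computational subsystem $E$ strong normalization is unavailable, so I would use the Tait--Martin-L\"of method: define a parallel reduction $\Rrightarrow$ that contracts a set of simultaneously present redexes in one step, check that single-step $E$-reduction lies between $\Rrightarrow$ and its reflexive-transitive closure, and prove the diamond property for $\Rrightarrow$, conveniently via a Takahashi-style complete-development map $t\mapsto t^{*}$ satisfying $s\Rrightarrow t^{*}$ whenever $t\Rrightarrow s$.

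The crux, and the step I expect to be the main obstacle, is the commutation of $R$ and $E$: every local peak $u \;{}_{R}\!\leftarrow t \lra_{E} v$ must be closed as $u \lra_{E}^{*} w$ and $v \lra_{R}^{*} w$ for some common reduct $w$ (in the strong-commutation form that lifts to full commutation, again modulo AC). This is precisely where the algebraic and computational worlds interfere and where naive confluence breaks down. The canonical danger is the cancellation-versus-reduction clash: a term of shape $t+(-1).t$ can be collapsed to $\mathbf 0$ by an algebraic step, but if one occurrence of $t$ is first reduced to some $t'$ by $E$, the residual $t'+(-1).t$ need no longer cancel, and if $t$ is non-normalizing the two branches can be driven arbitrarily far apart. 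The resolution, following Arrighi and Dowek, is the set of syntactic \emph{restrictions} imposed on the factorization and cancellation rules, constraining their left-hand sides so that the duplicated or cancelled subterm cannot itself be reduced (e.g.\ it is required to be a closed normal form), which forces the two copies to remain identical. The delicate heart of the argument is to check that these restrictions are simultaneously strong enough to eliminate \emph{every} problematic critical pair in the commutation diagram, yet weak enough to still validate the intended vector-space identities; once commutation is secured, the Hindley--Rosen lemma delivers confluence of the whole system.
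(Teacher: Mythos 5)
The paper does not prove this theorem itself: it is imported by citation from Arrighi and Dowek \citaCR, and the proof given there proceeds essentially as you outline --- a Hindley--Rosen decomposition into the terminating algebraic fragment (Newman's lemma plus critical pairs modulo AC of $+$) and the $\beta$/linear-distribution fragment (handled by parallel reduction), with commutation of the two secured precisely by the closed-normal-form restrictions on the factorization rules that block the $t+(-1).t$ cancellation clash. Your reconstruction matches the strategy of the cited proof, so there is nothing to fault beyond the fact that it is a plan rather than a worked-out argument.
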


\begin{theorem}[Subject reduction on closed terms, \citaSR]
  For any closed terms $t$ and $r$ and type $A$, if $t\lra r$ and $\vdash t:A$,
  then $\vdash r:A$.
  \qed
\end{theorem}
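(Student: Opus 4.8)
The plan is to proceed by induction on the derivation of $t\lra r$, treating each rewrite rule of Figures~\ref{fig:TRSbeta}--\ref{fig:TRScasts} as a base case and each contextual rule of Figure~\ref{fig:TRScontext} as an inductive case. A convenient observation that makes the closed-term hypothesis pull its weight is that no contextual rule reduces inside the body of a $\lambda$-abstraction: the only congruence touching an abstraction, $(\lambda x{:}{\B^n}.v)t\lra(\lambda x{:}{\B^n}.v)u$, reduces the \emph{argument}. Hence if $t$ is closed, every subterm contracted by a contextual rule is again closed (the free variables of a sum, product, application, scalar, cast, etc.\ are the union of those of the immediate subterms), so the induction hypothesis applies directly and we never need to carry open contexts through the congruence cases. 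The only place where an open term appears is the body of an abstraction consumed by a $\beta$-rule, and that is handled once and for all by a substitution lemma.

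Two auxiliary results are needed before the main induction. The first is an \textbf{inversion (generation) lemma} describing, for each term constructor, the shape of any derivable typing. Because rule $S_I$ can be applied arbitrarily often and commutes in various ways with $\alpha_I$ and $+_I$ (as the two derivations of $1.\z[\B]:S(S(\B))$ already illustrate), every clause must be stated ``up to a prefix of $S$'s'': e.g.\ $\vdash\lambda x{:}\Psi.t:C$ forces $C=S^m(\Psi\Rightarrow A)$ with $x{:}\Psi\vdash t:A$; $\vdash t\times u:C$ forces $C=S^m(\Psi\times\Phi)$ with $\vdash t:\Psi$ and $\vdash u:\Phi$; and $\vdash\pair tu:C$ forces $C=S(A)$ with $\vdash t:S(A)$ and $\vdash u:S(A)$. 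The second is a \textbf{substitution lemma} with two flavours matching the two $\beta$-rules: for $\rbetan$, $x{:}S\Psi$ is linear, occurs once, and $(u/x)t$ inherits the type of $t$ directly; for $\rbetab$, $\B^n$ is a basis type, so $x$ is contractible and may occur several times, and one substitutes a \emph{basis} term $b$ of type $\B^n$ simultaneously for all occurrences, the point of the basis-term side condition being precisely that such a $b$ is freely duplicable.

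With these in hand the base cases are mostly mechanical: for each rewrite rule one inverts the typing of the left-hand side, reads off the components, and rebuilds a derivation of the right-hand side with the same type, reinstating the stripped $S$'s at the end via $S_I$. For the $\beta$-rules this is the inversion-then-substitution pattern; for $\riftrue$/$\riffalse$ it is a direct application of $\Rightarrow_E$; for the vector-space rules such as $\rfact$ one uses that $+_I$ and $\alpha_I$ both produce and require a type of the form $S(A)$, so that $(\alpha+\beta).t$ is retyped by $\alpha_I$ from $t:S(A)$; and for $\rhead$/$\rtail$ one inverts the product typing of $h\times t:\B^n$. I expect the genuine difficulty to lie in the casting rules of Figure~\ref{fig:TRScasts}. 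There the typing rules for $\Uparrow_r$ and $\Uparrow_\ell$ carry the side conditions $k>0$ and $\Psi\neq S(A)$ and convert a type $S(S^k(\Psi)\times\Phi)$ into $S(\Psi\times\Phi)$, so the inversion must track the exact number of $S$'s guarding each factor. For instance, to type the contractum of $\rdistsumr$ one must observe that inverting $\Uparrow_r\pair rs\times u:S(\Psi\times\Phi)$ yields $r,s:S^k(\Psi)$ and $u:\Phi$, rebuild $r\times u$ and $s\times u$ at type $S^k(\Psi)\times\Phi$, re-apply $\Uparrow_r$ (whose side conditions still hold because $k>0$ and $\Psi\neq S(A)$ are preserved) to each, and finally recombine with $+_I$. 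Keeping this $S$-counting consistent across the sum, scalar, and $\z$ variants of the casting rules is the part that will demand the most care.
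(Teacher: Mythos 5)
There is nothing in this paper to compare your proof against: the theorem is imported from prior work (the header carries the citation \citaSR) and is stated with a closing qed but no argument; the only in-text justification is the remark that the proofs of Subject Reduction and Strong Normalization ``are straightforward modifications from the proofs of the different presentations of \LambdaSM.'' So your proposal can only be judged on its own terms and against the cited development, not against anything proved here.

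On those terms, your outline is the standard route and is essentially the one used in that line of work: induction on one-step reduction, a generation lemma stated modulo a prefix of $S$'s (forced by the free commutation of $S_I$ with $\alpha_I$, $+_I$, etc.), and two substitution lemmas keyed to $\rbetab$ and $\rbetan$. Your observation that no contextual rule reduces under a $\lambda$-abstraction, so that closedness is preserved through every congruence case, is exactly the right way to make the closed-term hypothesis carry weight, and you are right that the castings of Figure~\ref{fig:TRScasts} demand careful $S$-counting. The one place your sketch is too quick is the family of rules whose contractum is an \emph{annotated} null vector, with $\rzeros$ as the exemplar: since type assignment is not unique, the side condition ``$t$ has type $A$'' does not determine $A$, and firing the rule with a non-minimal choice can break preservation. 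Concretely, if the minimal type of $t$ is $S(\B)$ (say $t=1.\ket 0$), then $\vdash 0.t:S(\B)$, yet firing $\rzeros$ with the legitimate choice $A=S(\B)$ yields $\z[S(\B)]$, whose derivable types are $S^m(S(S(\B)))$ and never $S(\B)$. A complete proof must therefore fix a convention --- reduction indexed by the typing derivation, or annotations taken at a minimal/principal type --- before your inversion-then-rebuild pattern closes these cases; this bookkeeping, at least as much as the castings, is where the argument delegated to \citaSR{} needs genuine care.
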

\begin{theorem}[Strong normalization, \citaSN]
  If $\Gamma\vdash t:A$ then $t$ is strongly normalizing.
  \qed
\end{theorem}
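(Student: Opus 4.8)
The plan is to prove strong normalization by the reducibility candidates method of Tait and Girard, adapted to accommodate the algebraic (vector-space) layer of \LambdaS. First I would define, by induction on types, a family of sets $\mathrm{Red}_A$ of strongly normalizing terms. For basis types $\B^n$ I take $\mathrm{Red}_{\B^n}$ to be the strongly normalizing terms of that type; for arrow types the usual clause $t\in\mathrm{Red}_{\Psi\Rightarrow A}$ iff $tu\in\mathrm{Red}_A$ for every $u\in\mathrm{Red}_\Psi$; and for $S(A)$ the set $\mathrm{Red}_{S(A)}$ is forced to be the \emph{algebraic closure} of $\mathrm{Red}_A$, namely the smallest set of strongly normalizing terms containing $\mathrm{Red}_A$, containing the null vectors $\z$, and closed under the constructors $\pair{\cdot}{\cdot}$ and $\alpha.(\cdot)$. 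This last clause is the crucial departure from the simply-typed case, and it is unavoidable because sums, scalars and $\z$ are first-class term formers.

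Next I would establish the three candidate properties (CR1--CR3 in Girard's terminology): every reducible term is strongly normalizing; reducibility is stable under reduction; and a neutral term all of whose one-step reducts are reducible is itself reducible. Compared with the pure lambda calculus the extra work is to check that $\mathrm{Red}_{S(A)}$ is stable under all the \emph{non-functional} rules -- the vector-space axioms \rneut\ through \rfactt, the linear distribution rules \rlinr\ through \rlinzl, and the casting rules \rdistsumr\ through \rcaneutl. Here I would invoke subject reduction (available on closed terms) so that the type annotations carried by $\z$ and rewritten by rules such as \rdistcazeror\ and \rcaneutzr\ remain coherent along any reduction.

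The heart of the argument is the adequacy lemma: for every derivable $\Gamma\vdash t:A$ and every reducible substitution $\sigma$ respecting $\Gamma$, the term $\sigma t$ lies in $\mathrm{Red}_A$. I would prove this by induction on the typing derivation, one rule at a time. The axioms, $S_I$, $\alpha_I$, $+_I$ and $\times_I$ follow directly from the closure properties of the candidates; the abstraction rule $\Rightarrow_I$ and the eliminations $\Rightarrow_E$ and $\Rightarrow_{ES}$ require the usual closure-under-$\beta$ argument, now split according to whether the bound variable has a basis type $\B^n$ (call-by-base, rule \rbetab) or a superposed type $S\Psi$ (call-by-name, rule \rbetan); the conditional, $\head$, $\tail$ and the castings are handled similarly. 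Taking $\sigma$ to be the identity, which is reducible since variables are neutral and covered by (CR3), then yields $t\in\mathrm{Red}_A$, hence $t$ strongly normalizing.

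The main obstacle I anticipate is precisely the interaction between the type-directed $\beta$-reduction and the algebraic rewriting. In the call-by-base case the argument must first be reduced and distributed over linear combinations by the linear distribution rules before \rbetab\ can fire, and such rules \emph{duplicate} subterms -- for instance $t\pair uv\lra\pair{tu}{tv}$ copies the operator $t$, and the casting rules \rdistsumr, \rdistsuml\ copy a factor -- so no obvious measure decreases under a naive induction. To control this I would isolate the purely algebraic fragment (vector-space axioms, casting, and linear distribution) and give it an \emph{independent} termination proof via a weight or polynomial interpretation assigning to each term a natural number that strictly decreases under every algebraic rule, that dominates the duplication produced by distribution, and that also decreases under the $\z$-reindexing rules by counting nested occurrences of $S$. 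Combining this measure with the reducibility argument for the functional core through a lexicographic ordering then yields strong normalization of the whole system, the remaining bookkeeping being routine.
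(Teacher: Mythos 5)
First, a point of comparison: the paper does not prove this theorem at all. It is imported from prior work (\citaSN), with the remark that the proof is a straightforward modification of the proofs given for the various presentations of \LambdaSM. The closest machinery inside the paper is the Tait-style Adequacy lemma (Lemma~\ref{lem:Adequacy}), whose sets $\com{A}$ resemble your candidates; but there the technique is used to prove \emph{completeness}, and it presupposes strong normalization (the sets $\com{A}$ are defined by closure under antireduction towards sums of values, a characterization that only makes sense once termination is known). Your outline --- Girard-style candidates with an algebraic closure clause at $S(A)$, plus separate termination lemmas for the vector-space fragment --- does match the shape of the proofs in the cited line of work, so the strategy is reasonable in broad strokes.

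Two gaps, however, are genuine. (i) Defining $\mathrm{Red}_{S(A)}$ as the \emph{smallest} set of strongly normalizing terms containing $\mathrm{Red}_A$ and $\z$ and closed under $+$ and $\alpha.(\cdot)$ makes CR3 false: the closed neutral term $(\lambda x{:}S(\B).x)\,\z[\B]$ has type $S(\B)$ and its only reduct is $\z[\B]\in\mathrm{Red}_{S(\B)}$, yet it is not a sum, a scalar multiple, a null vector, or a member of $\mathrm{Red}_{\B}$ (its type is $S(\B)$, not $\B$), so minimality excludes it. The definition must additionally be closed under neutral expansion (or under antireduction, as the paper's $\com{}$ sets are), and this is not cosmetic: adequacy for $\Rightarrow_{ES}$ and for the casts hinges on exactly that clause. (ii) The concluding ``lexicographic combination'' is not a valid termination argument as stated. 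Reducibility is a predicate, not a well-founded measure, so it cannot occupy a lexicographic component; and no repair by a $\beta$-counting measure works either, because the two fragments interleave in both directions: an algebraic step such as $\pair{t_1}{t_2}u\lra\pair{t_1u}{t_2u}$ (rule \rlinl) \emph{creates} new $\beta$-redexes, while a call-by-base $\beta$-step duplicates its basis argument (basis-typed variables admit contraction, and the argument, though redex-free, is arbitrarily large), thereby increasing any size-based algebraic weight. Hence neither ordering of the two components is monotone under all rules. The standard resolution --- the one your plan is implicitly reaching for --- is to fold the algebraic measure \emph{inside} the candidate closure lemmas: one proves that sums, scalar multiples and casts of reducible terms are reducible by a nested induction on the strong-normalization depth of the components together with the algebraic weight, rather than by a global lexicographic product over the whole rewrite system. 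As written, the step you defer as ``remaining bookkeeping'' is where the entire difficulty of the theorem sits.
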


\paragraph{Why first order.}
The restriction on functions to be first order answers a technical issue with
respect to the no-cloning property on quantum computing. \LambdaSM is meant for quantum computing, and,
in quantum computing, there is no universal cloning machine. Defining an affine
type system, as we did, we avoid cloning machines such as $\lambda
x^{S(\B)}.x\times x$, which cannot be typed. However, with high order it would
be possible to encode a cloning machine by encapsulating the term to be cloned
inside a lambda abstraction, as in the following example:
\[
  \lambda y^{S(\B)}.\left(\lambda x^{\B\Rightarrow S(\B)}.(x\ket 0)\times (x\ket
  0)\right)\left(\lambda z^\B.y\right)
\]
First order ensures this cannot be done.

Since the calculus is first order, it adds atomic terms ($\ket 0$ and $\ket 1$),
and so, no need to encode those.
Therefore, products of functions are not needed either, this is the reason why \LambdaSM does not include them.

\paragraph{More intuitions.}
Despite that \LambdaSM has been defined with quantum computation in mind, it can
be seen just as a calculus to manipulate vector spaces. This is the feature that
we want to highlight in this work, which is more general than just quantum
computing.
In particular, the derivation of $v+(-1).v+3.w+(-2).w = w$ given in the
introduction can be replicated with the rules from \LambdaS as follows.
\begin{align*}
  (-1).v+v+3.w+(-2).w
  &\xlra{\rfacto} 0.v+3.w+(-2).w\\
  &\xlra{\rfact} 0.v+1.w\\
  &\xlra{\rzero} \z[A]+1.w\\
  &\xlra{\rneut} 1.w\\
  &\xlra{\runit} w
\end{align*}
Also, with the help of the casting, we can write step by step distributions
between $+$ and $\otimes$. In \LambdaS we write $\times$ instead of $\otimes$ because it does not
behave as a $\otimes$ unless it is preceded by a casting $\Uparrow$. For
example, $u\otimes (v+w) = (u\otimes v)+(u\otimes w)$, but $u\times (v+w)$ does
not reduce to $(u\times v)+(u\times w)$, unless the casting is present, in which
case we have
\begin{align*}
  \Uparrow_\ell u\times(v+w)
  &\xlra{\rdistsuml} \Uparrow_\ell u\times v + \Uparrow_\ell u\times w\\
  &\xlra{\rcaneutl} (u\times v)+(u\times w) 
\end{align*}

\section{A categorical construction for algebraic manipulation}\label{sec:Category}
\subsection{Preliminaries}
In this section, we recall certain basic concepts of the theory of categories and we
establish a common notation that will help to define our work platform.
For general preliminaries and notations on categories we refer to~\cite{MacLane98}. 

\begin{definition}
    A \emph{symmetric monoidal} category, also called \emph{tensor} category, is
a category $\cV$ with an identity object $I\in\cV$, a bifunctor
$\otimes:\cV\times\cV\rightarrow \cV$ and natural isomorphisms $\lambda:A\otimes
I\rightarrow A$, $\rho:I\otimes A\rightarrow A$, $\alpha:A\otimes (B\otimes
C)\rightarrow (A\otimes B)\otimes C$, $\sigma:A\otimes B\rightarrow B\otimes A$
satisfying appropriate coherence axioms.
    
    A \emph{symmetric monoidal closed} category is a symmetric monoidal category
$\cV$ for which each functor $-\otimes B:\cV\rightarrow\cV$ has a right adjoint
$[B,-]:\cV\rightarrow\cV$, i.e., $\cV(A\otimes B,C)\cong\cV(A,[B,C])$.
\end{definition}
\begin{definition}
A \emph{Cartesian category} is a category admitting finite products (that is, products of
a finite family of objects). Equivalently, a Cartesian category is a category
admitting binary products and a terminal object (the product of the empty family
of objects). A Cartesian category can be seen as a symmetric monoidal category
with structural maps defined in an obvious way.

A \emph{Cartesian closed category} is a Cartesian category $\cC$ which is closed
as a symmetric monoidal category.
\end{definition}

\begin{definition}
    A \emph{symmetric monoidal functor} $(F,m_{A,B},m_I)$ between symmetric
monoidal categories $(\cV,\otimes,I,\alpha,\rho,\lambda,\sigma)$ and
$(\cW,\otimes',I',\alpha',\rho',\lambda',\sigma')$ is a functor
$F:\cV\rightarrow\cW$ equipped with morphisms $m_{A,B}:FA\otimes'FB\rightarrow
F(A\otimes B)$ natural in $A$ and $B$ , and for the units morphism
$m_I:I'\rightarrow F(I)$ satisfying some coherence axioms. A monoidal functor is
said to be \emph{strong} when $m_I$ and $m_{A,B}$ for every $A$ and $B$ are
isomorphisms and \emph{strict} when all the $m_{A,B}$ and $m_I$ are identities.
\end{definition}

\begin{definition}
    \label{MONOIDAL NATURAL TRANFORMATION}
    A \emph{monoidal natural transformation} $\theta:(F,m)\rightarrow (G,n)$
between monoidal functors is a natural transformation $\theta_A:FA\rightarrow
GA$ such that the following axioms hold:
$n_{A,B}\circ(\theta_{A}\otimes'\theta_{B})=\theta_{A\otimes B}\circ m_{A,B}$
and $\theta_I\circ m_I=n_I$.
\end{definition}

\begin{definition}
  Let $(\cV,\otimes,I)$ and $(\cW, \otimes', I')$ be monoidal categories. 
  We say that $((F,m),(G,n),\eta,\varepsilon)$ is a \emph{monoidal adjunction} if 
  \begin{itemize}
  \item $(F,G,\eta, \varepsilon)$ is an adjunction. 
  \item $(F,m),(G,n)$ are monoidal functors 
    \begin{center}
      \begin{tikzcd}[column sep=3mm]
        (\cV,\otimes,\I)\ar[rr,bend left,"{(F,m)}"] & \bot & (\cW,\otimes',\I')\ar[ll,bend left,"{(G,n)}"]
      \end{tikzcd}
    \end{center}
  \item $\eta:Id\Rightarrow G\circ F$ and  $\varepsilon:F\circ G\Rightarrow Id$ are monoidal natural transformations, as defined in Definition~\ref{MONOIDAL NATURAL TRANFORMATION}, 
  \end{itemize}
\end{definition}

\begin{definition}
  A \emph{preadditive category} is a category $\cC$ together with an abelian
  group structure on each set $\cC(A,B)$ of morphisms, in such a way that the
  composition mappings
  \begin{align*}
    c_{ABC}:\cC(A,B)\times\cC(B,C)&\longrightarrow \cC(A,C)\\
    (f,g) &\mapsto g\circ f
  \end{align*}
  are group homomorphisms in each variable. We shall write the group structure additively.

  An \emph{additive category} is a preadditive category with a zero object and a binary biproduct.
\end{definition}

\begin{definition}
  An \emph{additive symmetric monoidal closed category} is a category
  $(\cV,\otimes,\oplus)$ such that $(\cV,\otimes)$ is a symmetric monoidal closed
  category, $(\cV,\oplus)$ is an additive category, and $\otimes$ is
  bi-additive.
\end{definition}

\subsection{Adjunction for algebraic manipulation}
In this section we give the main categorical construction on this paper, which
is the adjunction for algebraic manipulation.
\begin{definition}\label{def:adjuncion}
  An \emph{adjunction for algebraic manipulation} is a monoidal adjunction
  \begin{center}
    \begin{tikzcd}[column sep=3mm]
      (\cC,\times,1)\ar[rr,bend left,"{(S,m)}"] &\bot & (\cV,\otimes,\I)\ar[ll,bend left,"{(U,n)}"]
    \end{tikzcd}
  \end{center}
  where
  \begin{itemize}
  \item $(\cC,\times,1)$ is a Cartesian closed category with $1$ as a terminal object.
  \item $(\mathcal V,\otimes,\oplus,\rho,\lambda,\sigma)$ is an additive
    symmetric monoidal closed category.
    
  \item The following axiom (\axiomz) is satisfied for any $f$ and $g$
    \begin{center}
      \begin{tikzcd}[row sep=10pt]
        & UV \arrow[rd, "U\mathbf 0"]  &     \\
        A \arrow[ru, "f"] \arrow[rd,swap, "g"] & & UW \\
        & UV' \arrow[ru, "U\mathbf 0",swap]  &
      \end{tikzcd}
    \end{center}
    where $\mathbf 0$ is the zero morphism of the additive category.
  \item The following axiom (\axiomd) is satisfied
    \begin{center}
      \begin{tikzcd}
        UV\times UV\times UW \arrow[r, "p\times\Id"] \arrow[d, "d"] & U(V\oplus V)\times UW \arrow[d, "n"]         \\
        UV\times UW\times UV\times UW \arrow[d, "n\times n"]        & U((V\oplus V)\otimes W) \arrow[d, "U\delta"] \\
        U(V\otimes W)\times U(V\otimes W) \arrow[r, "p"]            & U((V\otimes W)\oplus (V\otimes W))          
      \end{tikzcd}
    \end{center}
    where
    \begin{itemize}
    \item The map $d:UV\times UV\times UW\longrightarrow UV\times UW\times
      UV\times UW$ is defined by $(\Id\times\sigma\times\Id)\circ(\Id\times\Delta)$.
    \item The map $\delta$ is an isomorphism determined by the fact that
      $\otimes$ has a right adjoint.
      Explicitly, $\delta:(V\oplus V)\otimes W\longrightarrow (V\otimes
      W)\oplus(V\otimes W)$ is given by $\delta=\langle \pi_1\otimes\Id,\pi_2\otimes\Id \rangle$.
    \item The map $p$ is an isomorphism determined by the preservation of
      product of the functor $U$ given by the fact that $U$ has a left adjoint.
      Explicitly, $p_{V,W}:UV\times UW\longrightarrow U(V\oplus W)$  is given by
      $p=\phi(\langle \phi^{-1}(\pi_1),\phi^{-1}(\pi_2) \rangle_{{}_{\mathcal V}})$
      where $\phi:\mathcal V(S(UV\times UW),V\oplus W)\cong\mathcal C(UV\times
      UW,U(V\oplus W))$, in which $\pi_1:UV\times UW\longrightarrow UV$ and
      $\pi_2:UV\times UW\longrightarrow UW$ are the projection maps.
    \end{itemize}

  \item There exists an object $\B\in|\cC|$ and maps $i_1$, $i_2$ such that for
    every ${1}\xlra f A$ and ${1}\xlra g A$, there exists a unique map $\home fg$
    such that the following diagram commutes
    \begin{center}
      \begin{tikzcd}
        {1}\ar[r,"i_1"]\ar[rd,"f"'] & \B\ar[d,near start,"\home fg"] &{1}\ar[l,"i_2"']\ar[ld,"g"]\\
        &A &
      \end{tikzcd}
    \end{center}
  \end{itemize}
\end{definition}

\begin{remark}\label{rmk:categoria}~
  \begin{itemize}
  \item The object $\B$ allows us to represent the type $\B$, and the map
    $\home fg$ to interpret the \emph{if} construction (Definition~\ref{def:if}).
  \item $\cC$ is a Cartesian closed category where $\eta^A$ is the unit and $\varepsilon^A$ is the counit of $-\times A\dashv[A,-]$, from which we can define
    the curryfication ($\mathsf{curry}$) and un-curryfication
    ($\mathsf{uncurry}$) of any map.
  \item The adjunction $S\dashv U$ gives rise to a monad $(T,\eta,\mu)$ in the category
    $\cC$, where $T=US$, $\eta:\Id\to T$ is the unit of the
    adjunction, and using the counit $\varepsilon$, we obtain 
    $\mu=U\varepsilon_S:TT\to T$, satisfying unity and associativity laws (see~\cite{MacLane98}).
  \item Remember that in an additive category the morphism factoring through
    the zero object, i.e.~the zero morphisms $\mathbf 0$ are exactly the identities for the
    group structure in each $\mathcal V(A,B)$ for every $A$ and $B$.
  \item Notice that since the tensor $\otimes$ is bi-additive, it satisfies
    that $f\otimes\mathbf 0=\mathbf 0\otimes f=\mathbf 0$ for every $f$.
  \end{itemize}
\end{remark}

Intuitively, the axiom \axiomz\ carries the absorbing property of the zero morphism, to
the category $\mathcal C$. Indeed, an analogous situation to this axiom, in the category
$\mathcal V$ is
    \begin{center}
      \begin{tikzcd}[row sep=10pt]
        & V \arrow[rd, "\mathbf 0"]  &     \\
        W \arrow[rr,red,dashed,"\mathbf 0"]\arrow[ru, "f"] \arrow[rd,swap, "g"] & & W' \\
        & V' \arrow[ru, "\mathbf 0",swap]  &
      \end{tikzcd}
    \end{center}
    which is valid since the dashed arrow makes the diagram commute. However,
    using the functor $U$ we would obtain
    \begin{center}
      \begin{tikzcd}[row sep=10pt]
        & UV \arrow[rd, "U\mathbf 0"]  &     \\
        UW \arrow[rr,red,dashed,"U\mathbf 0"]\arrow[ru, "Uf"] \arrow[rd,swap, "Ug"] & & UW' \\
        & UV' \arrow[ru, "U\mathbf 0",swap]  &
      \end{tikzcd}
    \end{center}
    which is less general than \axiomz. Indeed, in \axiomz\ we allow the domain
    to be any $A$, and not necessarily of the form $UW$, capturing the absorbing 
    property of a zero morphism, but in $\mathcal C$.

The axiom \axiomd\ gives us explicitly the intuition developed in the
introduction. In the Cartesian category $\mathcal C$ we do not have all the
structure and properties as in the additive symmetric monoidal closed category
$\mathcal V$. However, we can mimic the distributivity property of $\otimes$
with respect to $\oplus$ by simply duplicating the last element and performing a
permutation, i.e.,~$\langle \langle a,b \rangle,c \rangle\mapsto\langle \langle a,b \rangle,\langle c,c \rangle \rangle\mapsto\langle \langle
a,c \rangle,\langle b,c \rangle \rangle$ mimic $(a\oplus b)\otimes c=(a\oplus
c)\oplus(b\otimes c)$.
While this property may be trivial when concrete categories are given, such as
$\mathbf{Set}$ for the Cartesian category and $\mathbf{Vec}$ for the additive
symmetric monoidal closed category, we have to axiomatize it in this abstract framework.

\begin{example}\label{ex:SetVec}
One concrete model for \LambdaS has been briefly mentioned, which is the one
presented in~\cite{DiazcaroMalherbeLSFA18,DiazcaroMalherbe2020}: an adjunction
for algebraic manipulation where $\cC=\mathbf{Set}$ and $\mathcal V=\mathbf{Vec}$.

  We must prove that those categories satisfy the requirements from
  Definition~\ref{def:adjuncion}.
  \begin{itemize}
  \item \axiomz\ is satisfied for any $f$ and $g$ since the zero morphism is absorbing in $\mathbf{Vec}$ and this property is preserved by $U$.
  \item \axiomd\ is satisfied since the concrete maps are the following:
    \begin{align*}
      \langle  a,b,c\rangle &\mapsto\langle \langle a,b \rangle, c \rangle\mapsto \langle a,b \rangle\otimes c  \mapsto
      \langle a\otimes c,b\otimes c \rangle
      \\
      \langle a,b ,c\rangle &\mapsto\langle a,c,b,c \rangle
      \mapsto
      \langle a\otimes c,b\otimes c \rangle
    \end{align*}
  \item We identify the object $\B\in|\mathbf{Set}|$ with $\{\ket 0,\ket 1\}$,
    which satisfies the required properties. 
  \end{itemize}
\end{example}

\begin{example}
More general, a family of examples is obtained by replacing {\bf Vec} by a category ${\bf Mod}_R$ of modules on a commutative ring $R$. The proof is essentially the same as the previous example.
\end{example}

\begin{example}
Let $(\cC,\times,1)$ be the category of sets $\mathbf{Set}$ and $(\mathcal
V,\otimes,\oplus)$ be the category $\mathbf{Ab}$ of abelian groups and group
homomorphisms. These categories are cartesian and symmetric monoidal closed
respectively. The tensor in $\mathbf{Ab}$
is defined by a universal property, concretely, is the quotient of the free abelian group on the direct sum determined by the subgroup that satisfies some well-know relations and where $I=\mathbb{Z}$. 
Also, $\mathbf{Ab}$
is an additive category (see~\cite{kn:B}). The functor $S$ is the free
construction $S(X)= \{\{z_x\}_{x\in X}: z_x\in \mathbb{Z}; |\{x:z_x\neq
0\}|<\omega\}$ and $U$ is the forgetful functor $U:(\mathbf {Ab} ,\otimes
_{\mathbb {Z} },\mathbb {Z} )\rightarrow (\mathbf {Set} ,\times ,\{\ast \}) $
where the mediating map $n_{A , B} : U ( A )\times U ( B )\rightarrow U ( A
\otimes B )$ sends $(a, b) \mapsto a \otimes b$ the map $n_I:\ast \mapsto 1$.
\end{example}

\begin{example}
 Let $C$ be a cocommutative cosemisimple $\mathbb{K}$-coalgebra, where
$\mathbb{K}$ is a field. We consider $(\cC,\times,1)$ to be
$\cC=\mathbf{Coalg/C}$ as the slice category of $\mathbb{K}$-cocommutative
coalgebras and morphisms of coalgebras defined as follows: objects are morphisms
of coalgebras with codomain in $C$, if $\phi:D\to C$ and $\psi:E\to C$ are
morphisms of coalgebras (as object in the slice category), morphisms
$f:(\phi)\to (\psi)$ correspond to coalgebra morphisms $f:D\to E$ such that
$\psi \circ f=\phi$. Cartesian product is given by pullbacks and $1=id_C$ the
identity morphism.

The structure $(\cV,\otimes,\oplus)$ is defined as follows: $\cV$ is the
additive (abelian) category of $C$-comodules ${\bf \mathcal{M}^C}$ (see
\cite{kn:DR}) such that the tensor is defined by an equalizer: Let $(V,v)$ and
$(W,w)$ be $C$-comodules, where $v$ and $w$ are right coactions. There is a
structure of $C$-comodule denoted by $V\otimes^C W$ in the vector space
generated by $\{x\otimes y \in V\otimes W\mid v(x)\otimes y=x\otimes \tau\left
(w(y)\right )\}$ where the coaction is defined by $\delta(x\otimes y)=x\otimes
w(y)$ (see \cite{kn:GP,HaimMalherbe16}). If $C$ is a cocommutative coalgebra, the
category $({\bf\mathcal{M}^C},\otimes^C,C)$ is symmetric monoidal. Moreover, it
is closed if and only if $C$ is cosemisimple. (see \cite{kn:GP,HaimMalherbe16}).
We define $S:\mathbf{Coalg/C} \to {\bf\mathcal{M}^C}$ to be the functor that
takes each object $\phi:D\rightarrow C$ to the comodule $(D,d)$, where $d:D\to
D\otimes C$ is the coaction defined by $d=(id_D\otimes \phi) \circ \Delta_D$ and
each morphism to its underlying morphism in ${\bf\mathcal{M}^C}$. This functor
is a strong monoidal functor and the existence of a right adjoint follows from
the special adjoint functor theorem (see \cite{MacLane98,kn:GP}) which implies
the existence of a monoidal adjunction (see \cite{kn:K}).
\end{example}

\section{Denotational semantics}\label{sec:DenSem}
\subsection{Definitions}\label{sec:Defs}
In this section we give the denotational semantics of \LambdaS by using the
adjunction for algebraic manipulation defined in the previous section.

\begin{definition}
  Types are interpreted in the category $\cC$, as follows:
  \begin{align*}
    \sem{\B} & =\B\\
    \sem{\Psi\Rightarrow A} &=\home{\sem{\Psi}}{\sem{A}}\\
    \sem{S(A)} &= US\sem{A}\\
    \sem{\Psi\times \Phi} &=\sem \Psi\times\sem \Phi
  \end{align*}
\end{definition}

\begin{remark}
  To avoid cumbersome notation, we will use the following convention: We write
  directly $USA$ for $\sem{S(A)}=US\sem A$ and $A$ for $\sem{A}$, when there
  is no ambiguity.

  In addition, we abuse notation and write $\sem{\Gamma}$ for the product of the
  interpretations of all the types in $\Gamma$. E.g.~If
  $\Gamma=x_1:\Psi_1,\dots,x_n:\Psi_n$, then
  $\sem\Gamma=\sem{\Psi_1}\times\cdots\times\sem{\Psi_n}$. We may write
  directly $\Gamma$ for $\sem\Gamma$, when there is no ambiguity.
\end{remark}

Before giving the interpretation of typing derivation trees in the model, we
need to define certain maps which will serve to implement some of the
constructions in the language.

To implement the \emph{if} construction we define the following map.
\begin{definition}\label{def:if}
  Given $t,r\in\cC(\Gamma,A)$ there exists a map $f_{t,r}\in\cC(\B,\home \Gamma A)$ defined by $f_{t,r}= \home{\hat{t}}{\hat{r}}$ where
  $\hat{t}\in\cC(1,\home \Gamma A)$ and  $\hat{r}\in\cC(1,\home \Gamma A)$
  are given by $\hat{t}=\mathsf{curry}(t\circ\pi_\Gamma)$ and
  $\hat{s}=\mathsf{curry}(r\circ\pi_\Gamma)$.
  \begin{center}
    \begin{tikzcd}
      {1}\ar[r,"i_1"]\ar[rd,"\hat t",swap] & \B\ar[d,near start,"f_{t,r}",swap] &{1}\ar[l,"i_2",swap]\ar[ld,"\hat r"]\\
      &\home\Gamma A &
    \end{tikzcd}
  \end{center}
\end{definition}

The sum in \LambdaS will be implemented internally by the map $\nabla$ issued from the 
universal property of $\oplus$. This way, we define a sum $\hat +$ in $\mathcal
C$ as follows.

\begin{definition}
  The map $\hat +$ is defined by
  \begin{center}
    \begin{tikzcd}
      UV\times UV\ar[rr,"p"]\ar[rd,"\hat +"] && U(V\oplus
      V)\ar[ld,"U\nabla"] \\
      &UV &
    \end{tikzcd}
  \end{center}
  where $p$ has been defined in Definition~\ref{def:adjuncion}.
\end{definition}

The sum $\hat +$ on $USUV\times USUV$ is performed in the following way
$USUV\times USUV\xlra{p} U(SUV\oplus SUV)\xlra{U\nabla} USUV$. Notice that the
map $\nabla$ used in this construction is fundamentally different from the map
$\nabla$ defined over $V\oplus V$. In order to perform all the sums at the same
``level'', we would need to do $USUV\times USUV\xlra{g_1} US(UV\times
UV)\xlra{USp} USU(V\oplus V)\xlra{USU\nabla}USUV$, where $g_1$ factorizes the
first $US$. We can generalize this idea to $(US)^kUV\times (US)^kUV$ with a map $g_k$
factorizing the first $k$ $(US)$s. Such a map is defined as follows.

\begin{definition}
  The map $g_k:(US)^{k}UV\times (US)^{k}UW\to (US)^k(UV\times UW)$
  is defined by
  \begin{align*}
    g_0 &= \Id\\
    g_k &=
    ((US)^{k-1}Um)\circ ((US)^{k-1}n)\circ 
    ((US)^{k-2}Um)\circ ((US)^{k-2}n)\circ 
    \dots\circ
    (Um)\circ n
  \end{align*}
\end{definition}
\begin{example}
  We can define a map $\mathsf{sum}$ on $USUSUV\times USUSUV$ by using the sum
  $\hat +$ on $UV$ as $USUS\hat +\circ g_2$, where
  $g_2=(USUm)\circ (USn)\circ (Um)\circ n$. This gives the following diagram
  \begin{center}
    \begin{tikzcd}
      USUSUV\times USUSUV\ar[r,"n"]\ar[dd,red,dashed,"g_2",bend left=83]\ar[d,"\mathsf{sum}"] & U(SUSUV\otimes SUSUV)\ar[d,"Um"]\\
      USUSUV & US(USUV\times USUV)\ar[d,"USn"]\\
      USUS(UV\times UV)\ar[u,swap,"USUS\hat +"]& USU(SUV\otimes SUV)\ar[l,swap,"USUm"]
    \end{tikzcd}
  \end{center}
\end{example}

The aim of the casting $\Uparrow_r$ is to implement the distributivity property
in $\mathcal C$ by mapping
$USA\times B$ into $US(A\times B)$. We want to perform such a property by
using the underlying distributivity property in $\mathcal V$.

In fact, the casting is defined more generally between $US(USA\times B)$ and
$US(A\times B)$. A map denoting $\Uparrow_r$ can be defined as follows.

\begin{definition}
  Let $\Uparrow_r^1$ be defined as follows
  \begin{center} 
    \begin{tikzcd}
      US(USA\times B) \arrow[rr, "\Uparrow_r^1"] \arrow[d, "US(\Id\times\eta)"] & &US(A\times B)                    \\
      US(USA\times USB) \arrow[r, "USn"] & USU(SA\otimes SB)\arrow[r,"USUm"]& USUS(A\times B) \arrow[u, "\mu"]
    \end{tikzcd}
  \end{center}

  We generalize $\Uparrow_r^1$ to the case $US((US)^kA\times B)$ with the map
  $\Uparrow^k_r:US((US)^kA\times B)\rightarrow
  US(A\times B)$ is defined by
  \[
    \Uparrow^k_r = \Uparrow_r^1\circ\cdots\circ\Uparrow_r^1
  \]

  Analogously, we define $\Uparrow^k_\ell:US(A\times (US)^kB)\rightarrow US(A\times B)$.
\end{definition}

Using all the previous definitions, we can finally give the interpretation of a
type derivation tree in our model.
If $\Gamma\vdash t:A$ with a derivation $\pi$, we
write it generically $\sem\pi$ as
$\Gamma\xlra{t_A} A$. When $A$ is clear from the context, we may write just $t$
for $t_A$.
Also, each interpretation depends on a choice of scalars, i.e., a function $c :
\mathcal C\to\mathcal V(I,I)$; without loss of generality we denote the values $c(\alpha)$ with the same letter $\alpha$.
\begin{definition}
  If $\pi$ is a type derivation tree, we define $\sem\pi$ inductively as follows,
  \begin{align*}
    &\sem{\vcenter{\infer[^{\mathsf{Ax}}]{\Gamma^\B,x:\Psi\vdash x:\Psi}{}}} = \Gamma^\B\times\Psi\xlra{{!}\times\Id}{1}\times\Psi\approx\Psi\\
    &\sem{\vcenter{\infer[^{Ax_{\mathbf 0}}]{\Gamma^\B\vdash\z:S(A)}{}}} = \Gamma^\B\xlra{!}{1}\xlra{\eta}US1\xlra{U\mathbf 0}USA\\
    &\sem{\vcenter{\infer[^{Ax_{\ket 0}}]{\Gamma^\B\vdash\ket 0:\B}{}}}=\Gamma^\B\xlra{!}{1}\xlra{i_1}\B\\
    &\sem{\vcenter{\infer[^{Ax_{\ket 1}}]{\Gamma^\B\vdash\ket 1:\B}{}}}=\Gamma^\B\xlra{!}{1}\xlra{i_2}\B\\
    &\sem{\vcenter{\infer[^{\alpha_I}]{\Gamma\vdash \alpha.t:S^m(A)}{\Gamma\vdash t:S^m(A)}}} =
      \begin{aligned}[t]
        &\Gamma \xlra{t}(US)^mA \xlra{(US)^{m-1}U\lambda} (US)^{m-1}U(SA\otimes\I)\\
        &\xlra{(US)^{m-1}U(\Id\otimes\alpha)} (US)^{m-1}U(SA\otimes\I)\\
        &\xlra{(US)^{m-1}U\lambda^{-1}} (US)^mA
      \end{aligned}\\
    &\sem{\vcenter{\infer[^{+_I}]{\Gamma,\Delta,\Xi^\B\vdash t+r:S^m(A)}{\Gamma,\Xi^\B\vdash t:S^m(A) &  \Delta,\Xi^\B\vdash r:S^m(A)}}} =
                                                                                                        \begin{aligned}[t]
                                                                                                          &\Gamma\!\times\!\Delta\!\times\!\Xi^\B\xlra d\Gamma\!\times\!\Xi^\B\!\times\!\Delta\!\times\!\Xi^\B\\
                                                                                                          &\xlra{t\times r} (US)^mA\times (US)^mA\\
                                                                                                          &\xlra {g_{m-1}} (US)^{m-1}(USA\times USA)\\
                                                                                                          &\xlra{(US)^{m-1}\hat +} (US)^mA
                                                                                                        \end{aligned}\\
    &\sem{\vcenter{\infer[^{S_I}]{\Gamma\vdash t:S(A)}{\Gamma\vdash t:A}}}=\Gamma\xlra{t}A\xlra{\eta}USA\\
    &\sem{\vcenter{\infer[^{\mathsf{If}}]{\Gamma\vdash\ite{}{t}{r}:\B\Rightarrow A} {\Gamma\vdash t:A & \Gamma\vdash r:A}}}= \Gamma\xlra{\mathsf{curry}(\mathsf{uncurry}(f_{t,r})\,\circ\,\mathsf{swap})}[\B,A]\\
    &\sem{\vcenter{\infer[^{\Rightarrow_I}] {\Gamma\vdash\lambda x{:}\Psi.t:\Psi\Rightarrow A} {\Gamma,x:\Psi\vdash t:A}}} =\Gamma\xlra{\eta^\Psi}[\Psi,\Gamma\times\Psi]\xlra{[\Id,t]}[\Psi,A]\\
    &\sem{\vcenter{\infer[^{\Rightarrow_E}]{\Delta,\Gamma,\Xi^\B\vdash tu:A} {\Delta,\Xi^\B\vdash u:\Psi & \Gamma,\Xi^\B\vdash t:\Psi\Rightarrow A}}} =
                                                                                                           \begin{aligned}[t]
                                                                                                             &\Delta\times\Gamma\times\Xi^\B\xlra d \Delta\times\Xi^\B\times\Gamma\times\Xi^\B\\
                                                                                                             &\xlra{u\times t}\Psi\times[\Psi,A]\xlra{\varepsilon^\Psi}A
                                                                                                           \end{aligned}\\
    &\sem{\vcenter{\infer[^{\Rightarrow_{ES}}]{\Delta,\Gamma,\Xi^\B\vdash
      tu:S(A)} {\Delta,\Xi^\B\vdash u:S\Psi & \Gamma,\Xi^\B\vdash t:S(\Psi\Rightarrow A)}}} =
                                              \begin{aligned}[t]
                                                &\Delta\!\times\!\Gamma\!\times\!\Xi^\B\xlra d\Delta\!\times\!\Xi^\B\!\times\!\Gamma\!\times\!\Xi^\B\\
                                                &\xlra{u\times t}US\Psi\times US[\Psi,A]\\
                                                &\xlra{n} U(S\Psi\otimes S([\Psi,A]))\\
                                                &\xlra{Um} US(\Psi\times[\Psi,A])\\
                                                &\xlra{US\varepsilon^ \Psi}USA
                                              \end{aligned}\\
    &\sem{\vcenter{\infer[^{\times_I}]{\Gamma,\Delta,\Xi^\B\vdash t\times u:\Psi\times \Phi} {\Gamma,\Xi^\B\vdash t:\Psi & \Delta,\Xi^\B\vdash u:\Phi}}} =
                                                                                                                           \Gamma\!\times\!\Delta\!\times\!\Xi^\B\xlra d\Gamma\!\times\!\Xi^\B\!\times\!\Delta\!\times\!\Xi^\B\xlra{t\times u} \Psi\!\times\! \Phi\\
    &    \sem{\vcenter{\infer[^{\times_{Er}}]{\Gamma\vdash\head\ t:\B}{\Gamma\vdash t:\B^n}}} =\Gamma\xlra t\B^n\xlra{\pi_1}\B\\
    &    \sem{\vcenter{\infer[^{\times_{El}}]{\Gamma\vdash\tail\ t:\B^{n-1}}{\Gamma\vdash t:\B^n}}} =\Gamma\xlra t\B^n\xlra{\pi_2}\B^{n-1}\\
    &    \sem{\vcenter{\infer[^{\Uparrow_r}]{\Gamma\vdash\Uparrow_r  t:S(\Psi\times \Phi)} {\Gamma\vdash t:S(S^k(\Psi)\times \Phi)}}} =
      \Gamma\xlra t US((US)^k\Psi\times \Phi)\xlra{\Uparrow_r^k} {US(\Psi\times\Phi)}\\
    &    \sem{\vcenter{\infer[^{\Uparrow_\ell}]{\Gamma\vdash\Uparrow_\ell  t:S(\Psi\times \Phi)} {\Gamma\vdash t:S(\Psi\times S^k(\Phi))}}}  =
      \Gamma\xlra t US(\Psi\times (US)^k\Phi)\xlra{\Uparrow_\ell^k} US(\Psi\times\Phi)
  \end{align*}
\end{definition}

\subsection{Properties}\label{sec:properties}
In this section we prove that the given denotational semantics is sound
(Theorem~\ref{thm:soundness}) and complete (Theorem~\ref{thm:Completeness}).

Proposition~\ref{prop:eqDer} allows us to write the semantics of a sequent,
independently of its derivation. Hence, due to this independence, we can write
$\sem{\Gamma\vdash t:A}$, without ambiguity.
\begin{proposition}
  [Independence of derivation]
  \label{prop:eqDer}
  If $\Gamma\vdash t:A$ can be derived with two different derivations $\pi$ and
  $\pi'$, then $\sem{\pi}=\sem{\pi'}$
\end{proposition}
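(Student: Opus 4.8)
The plan is to proceed by structural induction on the term $t$. The only sources of non-uniqueness in a derivation of $\Gamma\vdash t:A$ are: (i) the rule $S_I$, which is not syntax-directed and may be inserted, with any multiplicity, at essentially any point of a derivation; and (ii) the overlap between $\Rightarrow_E$ and $\Rightarrow_{ES}$, the two rules whose subject is an application $tu$. Every other rule is syntax-directed and its premises are forced by the conclusion, so for those the result follows immediately from the induction hypothesis applied to the subderivations. The strategy is therefore to isolate these two phenomena as \emph{commuting conversions} that preserve the interpretation, and then to reduce an arbitrary derivation to a canonical form.

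First I would prove a commutation lemma for $S_I$: whenever a derivation ends with an instance of a rule $R$ immediately followed by $S_I$, there is a derivation of the same sequent with $S_I$ pushed above $R$ (i.e.\ applied to the premises), having the \emph{same} interpretation, and conversely. Since $\sem{S_I}$ postcomposes with the unit $\eta$ of the adjunction $S\dashv U$, each such equality is an instance of the naturality of $\eta:\Id\Rightarrow US$ together with the monad laws from Remark~\ref{rmk:categoria}. Concretely, for a syntax-directed rule whose interpretation postcomposes the subderivations with a map $f$, commuting $S_I$ amounts to the naturality square $\eta\circ f=USf\circ\eta$. Using this lemma repeatedly, I can normalise any derivation of $\Gamma\vdash t:A$ so that all its $S_I$ instances are gathered at the very bottom; since the target type $A$ fixes exactly how many outermost $S$'s must be produced this way, the number and placement of these trailing $S_I$'s is then determined, and what remains above is a derivation using only syntax-directed rules (plus the application choice).

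Next I would treat the application case. A derivation of $\Gamma\vdash tu:A$ may end either with $\Rightarrow_E$ (when $A$ is reached directly) or, after the normalisation above, with $\Rightarrow_{ES}$ fed by $S_I$-decorated premises. The required equality is the commuting conversion identifying $\Rightarrow_E$ followed by $S_I$ with $S_I$ on the two premises followed by $\Rightarrow_{ES}$. Unfolding the two interpretations, one is obtained by postcomposing the $\Rightarrow_E$-interpretation (which ends with $\varepsilon^\Psi$) with $\eta$, while the other ends with $US\varepsilon^\Psi\circ Um\circ n$ applied to the $\eta$-lifted premises; these coincide by naturality of $\eta$ and of $n$ together with the coherence of the monoidal adjunction $(S,m)\dashv(U,n)$. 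With both commuting conversions in hand the induction closes: after normalisation $\pi$ and $\pi'$ share the same block of trailing $S_I$'s and the same syntax-directed last rule, so their premises derive identical sequents, the induction hypothesis gives equality of the subinterpretations, and hence $\sem\pi=\sem{\pi'}$.

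The step I expect to be the main obstacle is the commutation of $S_I$ past $\alpha_I$ and $+_I$. Unlike the purely structural rules, these two are interpreted by scalar multiplication and by the internal sum $\hat +$ acting at a specific $S$-level $m$ (via $g_{m-1}$ and the maps $\lambda,\Id\otimes\alpha,\lambda^{-1}$), so commuting an $S_I$ through them requires a coherence statement asserting that the level-$m$ scalar and sum maps are compatible with $\eta$, i.e.\ that raising the level by one $US$ and then operating equals operating and then raising the level. Verifying this compatibility---essentially that $\hat +$ and scalar multiplication are natural in the appropriate sense and interact correctly with $\mu=U\varepsilon_S$---is where the real computation lies; once it is in place, all the remaining commuting conversions are routine naturality squares.
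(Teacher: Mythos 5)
Your proposal is correct and follows essentially the same route as the paper: it likewise isolates $S_I$ and the $\Rightarrow_E$/$\Rightarrow_{ES}$ overlap as the only sources of non-uniqueness, normalises derivations by pushing $S_I$ towards the conclusion via semantics-preserving commuting conversions (justified by naturality and monoidality of $\eta$, naturality of $\hat +$, and the monoidal adjunction), and then reads off the semantics from the canonical form. The only cosmetic difference is that the paper organises the normalisation as a confluent, terminating rewrite system on derivation trees rather than as a structural induction on terms, and it verifies the same three nontrivial commutations ($\alpha_I$, $+_I$, and the application rules) that you single out.
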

\begin{proof}
  Without taking into account rules $\Rightarrow_E$, $\Rightarrow_{ES}$ and
  $S_I$,
  the typing system is syntax directed.
  In the case of the application (rules $\Rightarrow_E$ and $\Rightarrow_{ES}$),
  they can be interchanged only in a few specific cases.

  Hence, we give a rewrite system on trees such that each time a rule $S_I$
  can be applied before or after another rule, we chose a direction to rewrite the
  tree to one of these forms. Similarly, we chose a direction for rules
  $\Rightarrow_E$ and $\Rightarrow_{ES}$.
  Then we prove that every rule preserves the
  semantics of the tree. This rewrite system is clearly confluent and
  normalizing, hence for each tree $\pi$ we can take the semantics of its normal
  form, and so every sequent will have one way to calculate its semantics, i.e.~as
  the semantics of the normal tree.

  The full proof is given in the appendix.
  \qed
\end{proof}

\begin{lemma}
  [Substitution]\label{lem:substitution}
  If $\Gamma',x:\Psi,\Gamma\vdash t:A$ and $\vdash r:\Psi$, then the following diagram
  commutes:
  \begin{center}
    \begin{tikzcd}
      \Gamma'\times\Gamma\ar[r,"(r/x)t"]\ar[d,"\lambda_\times"] & A\\
      \Gamma'\times{1}\times\Gamma\ar[r,"\Id\times r\times\Id"] &\Gamma'\times\Psi\times\Gamma\ar[u,"t"]
    \end{tikzcd}
  \end{center}
  That is,
  $\sem{\Gamma',\Gamma\vdash(r/x)t:A}=\sem{\Gamma',x:\Psi,\Gamma\vdash t:A}\circ(\sem{\vdash r:\Psi}\times\Id)$.
\end{lemma}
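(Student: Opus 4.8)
The plan is to argue by structural induction on the derivation $\pi$ of $\Gamma',x:\Psi,\Gamma\vdash t:A$, proving in each case that the triangle of the statement commutes. To lighten notation I write $\rho=\sem{\vdash r:\Psi}\colon1\to\Psi$, so the goal in each case reads $\sem{(r/x)t}=\sem{t}\circ(\Id\times\rho\times\Id)\circ\lambda_\times$, where $\sem{t}=\sem{\Gamma',x:\Psi,\Gamma\vdash t:A}$. Since $r$ is closed, $\rho$ is a global point, a fact I will use repeatedly.

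For the base cases I split the variable axiom $\tax$ in two. If $t=x$, then $(r/x)t=r$ and $\sem{x}$ is the projection onto the $\Psi$-component (${!}\times\Id\times{!}$ up to the canonical iso); precomposing with $(\Id\times\rho\times\Id)\circ\lambda_\times$ collapses the $\Gamma'$ and $\Gamma$ factors through ${!}$ and selects $\rho$, which is exactly $\sem{\Gamma',\Gamma\vdash r:\Psi}=\rho\circ{!}$, so the triangle commutes. If $t=y\neq x$, then $\Psi$ is necessarily a basis type (it lies in the weakenable context $\Theta^\B$), substitution is vacuous, and both legs agree because the $\Psi$-slot is erased by ${!}$ on each side. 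The constant axioms $\tax_{\mathbf 0}$, $\tax_{\ket 0}$, $\tax_{\ket 1}$ begin their semantics with the terminal map ${!}\colon\Gamma^\B\to1$, which absorbs the precomposition, so these cases are immediate.

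The unary inductive rules ($\alpha_I$, $S_I$, $\times_{Er}$, $\times_{El}$, $\Uparrow_r$, $\Uparrow_\ell$) are routine: the extra structure is post-composed with the interpretation of the unique premise, so the induction hypothesis together with associativity of composition closes them at once. The real work is in the binary rules ($+_I$, $\times_I$, $\tif$, $\Rightarrow_E$, $\Rightarrow_{ES}$), whose semantics all open with the duplicate-and-permute map $d$ on the shared basis context. Here I distinguish where $x:\Psi$ occurs. If $x$ lies in one of the linear parts $\Gamma',\Gamma$, then by disjointness of contexts it appears in exactly one premise; I apply the induction hypothesis to that premise and use naturality of $d$ to route $\rho$ into the correct factor, leaving the other branch untouched. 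If instead $x$ lies in the shared basis context, then $\Psi\in\btypes$ and $(r/x)$ duplicates $r$ into both premises; the key coherence is that for the point $\rho$ one has $\langle\rho,\rho\rangle=\Delta_\Psi\circ\rho$ in the Cartesian category $\cC$, so duplicating the basis context via $d$ and then substituting coincides with substituting and then duplicating. Combining this identity with the two induction hypotheses and naturality of $d$ closes the square. The $\tif$ case reduces further to the construction $f_{t,r}$ of Definition~\ref{def:if} and the universal property of $\B$.

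Finally, for the binder rule $\Rightarrow_I$ the substitution passes under the $\lambda$ and the premise context acquires the fresh bound variable, taken distinct from $x$ by the usual convention. I apply the induction hypothesis to the premise and then push the substitution through $\mathsf{curry}$ using naturality of the unit $\eta^\Psi$ of $-\times\Psi\dashv\home{\Psi}{-}$ together with functoriality of $\home{\Psi}{-}$. I expect the main obstacle to be precisely the shared-basis subcase of the binary rules: keeping exact track of how the permutation/duplication map $d$ interleaves the copies of $r$ with the two premise contexts, which forces the coherence $\langle\rho,\rho\rangle=\Delta_\Psi\circ\rho$ and a careful naturality bookkeeping to ensure each duplicated copy lands in the correct position of the two sub-interpretations.
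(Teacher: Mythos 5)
Your proposal is correct and follows essentially the same route as the paper: structural induction on the typing derivation, with the variable/constant axioms absorbed by the terminal map, unary rules closed by the induction hypothesis and composition, binary rules handled via naturality of the duplicate-and-permute map $d$, and $\Rightarrow_I$ via naturality of $\eta^\Psi$ and functoriality of $\home{\Psi}{-}$. If anything, you are more explicit than the paper on the shared-basis-context subcase (where $x\in\Xi^\B$ and $(r/x)$ hits both premises, requiring $\langle\rho,\rho\rangle=\Delta_\Psi\circ\rho$), which the paper dispatches as ``analogous'' to the linear case.
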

\begin{proof}
  By induction on the derivation of $\Gamma',x:\Psi,\Gamma\vdash t:A$.
  The full proof is given in the appendix.
  \qed
\end{proof}

\begin{theorem}
  [Soundness]\label{thm:soundness}
  If $\vdash t:A$, and $t\lra r$,
  then
  $\sem{\vdash t:A} = \sem{\vdash r:A}$.
\end{theorem}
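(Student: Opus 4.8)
The plan is to prove soundness by induction on the rewrite rule applied, establishing for each rule that the two sides have equal interpretations. First I would reduce to the case where the reduction happens at the top level: the contextual rules of Figure~\ref{fig:TRScontext} are handled by a routine induction, using functoriality of the interpretation (each term constructor is interpreted by post-composition or a functor applied to the subterm's interpretation), so if $\sem{t}=\sem{u}$ then applying the same context preserves equality. The genuine content lies in the base cases, one per rewrite rule. For each, I would write out both interpretations as composites of the structural maps and check they coincide.

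For the beta rules $\rbetab$ and $\rbetan$, the Substitution Lemma (Lemma~\ref{lem:substitution}) does the work: the interpretation of $(\lambda x{:}\Psi.t)$ applied to the argument unfolds, via the triangle identities of the adjunction $-\times\Psi\dashv[\Psi,-]$ (the counit $\varepsilon^\Psi$ cancelling the unit $\eta^\Psi$ used in $\Rightarrow_I$), to exactly $\sem{(r/x)t}$. For the vector-space rules of Figure~\ref{fig:TRSvs} I would unwind the definitions of $\hat +$, the $\alpha_I$-scaling map, and the zero morphism: rules $\rprod$, $\rdists$, $\runit$ follow from functoriality of $U$ together with the monoidal-unit coherence $\lambda,\lambda^{-1}$ and the ring action $c$; rule $\rfact$ (and its variants $\rfacto$, $\rfactt$) reduces to the fact that $U\nabla\circ p$ computes addition in the abelian-group structure on hom-sets and that $\otimes$ is bi-additive; rules $\rzeros$, $\rzero$, $\rneut$ rely crucially on \axiomz\ and on the absorbing/identity properties of $\mathbf 0$ noted in Remark~\ref{rmk:categoria}. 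The linear-distribution rules of Figure~\ref{fig:TRSld} follow from naturality of $d$ and the bi-additivity of the relevant maps, essentially transporting the distribution on the $\mathcal V$ side through $U$.

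The casting rules of Figure~\ref{fig:TRScasts} are where I expect the main obstacle. The maps $\Uparrow_r^k,\Uparrow_\ell^k$ are built from $\eta$, $n$, $Um$, and $\mu$, and verifying that, say, $\rdistsumr$ holds requires the large commuting hexagon of \axiomd\ — this is precisely the axiom inserted to make distributivity of $\times$ over $+$ reproducible in $\mathcal C$ from the distributivity $\delta$ of $\otimes$ over $\oplus$ in $\mathcal V$. I would prove these by pasting \axiomd\ against naturality squares for $n$ and $m$ and the monad laws $\mu\circ US\eta=\Id$, being careful with the zero-casting rules $\rdistzr,\rdistzl,\rcaneutzr,\rcaneutzl$, which again invoke \axiomz\ and the bi-additive vanishing $f\otimes\mathbf 0=\mathbf 0$. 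The neutrality rules $\rcaneutl,\rcaneutr$ (removing a casting in front of a basis term) require showing that on a value landing in $A$ rather than $USA$ the whole $\eta$–$\mu$ round-trip collapses to the identity via the triangle law $\mu\circ\eta_{US}=\Id$. The conditional rules $\riftrue,\riffalse$ reduce to the defining commutation of $\home fg$ against $i_1,i_2$ from Definition~\ref{def:adjuncion} and Definition~\ref{def:if}, and the list rules $\rhead,\rtail$ to the product projection equations $\pi_1\circ\langle f,g\rangle=f$. Since each case is a diagram chase, I would relegate the full computations to the appendix and present here only the representative casting case to exhibit the role of \axiomd.
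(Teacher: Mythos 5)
Your proposal follows essentially the same route as the paper's proof: induction on the rewrite relation with contextual rules dispatched by compositionality, beta rules by the Substitution Lemma, the conditional and list rules by the universal properties of $\home fg$ and the projections, and the algebraic and casting rules by diagram chases that invoke exactly the ingredients the paper uses (\axiomz\ for the zero rules, \axiomd\ for the distribution rules, naturality of $\hat+$, $n$, $m$, and the monad law $\mu\circ US\eta=\Id$ for $\rcaneutr$/$\rcaneutl$). The only omissions are cosmetic — you do not mention the cases $\rcomm$/$\rassoc$ arising from terms being taken modulo AC of $+$, nor the convention of computing semantics on a fixed (first derivable) typing, both of which the paper handles with the same machinery you describe.
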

\begin{proof}
  By induction on the rewrite relation, using the first derivable type for each
  term.
  The full proof is given in the appendix.
  \qed
\end{proof}

In order to prove completeness (Theorem~\ref{thm:Completeness}), we use an adaptation to
\LambdaS of Tait's proof for strong normalization~\cite{TaitJSL67} (cf.~\cite[Chapter~6]{Girard89} for reference).

\begin{definition}
  Let $\mathcal A,\mathcal B$ be sets of closed terms. We define the following operators on them:

  \begin{itemize}
  \item\textit{Closure by antireduction:} $\overline{\mathcal A}=\{t\mid
    t\lra^*r_, \textrm{ with }r\in \mathcal A \textrm{ and }FV(t)=\emptyset\}$.
  \item\textit{Product:} $\mathcal A\times \mathcal B=\{t\times u\mid t\in \mathcal
    A\textrm{ and }u\in \mathcal B\}$.
  \item\textit{Arrow:} $\mathcal A\Rightarrow \mathcal B=\{t\mid\forall u\in\mathcal
    A, tu\in \mathcal B\}$.
  \item\textit{Span:} $S\mathcal A=\{\sum_i\alpha_ir_i\mid r_i\in\mathcal A\}$
    where $\alpha r$ is a notation for $\alpha.r$ when $\alpha\neq 1$, or $1.r$
    or just $r$ when $\alpha=1$. Also, we use the convention that
    $\sum_{i=1}^1\alpha_ir_i=\alpha_ir_i$ and $0r=\z[A]$ for any $r$.
  \end{itemize}

  The set of computational closed terms of type $A$ (denoted $\com A$), is defined by 
  \[
    \begin{aligned}[t]
      \com{\B} &=\overline{\{\ket 0,\ket 1\}}\\
      \com{A\times B} &=\overline{\com A\times\com B}
    \end{aligned}
    \qquad\qquad
    \begin{aligned}[t]
      \com{\Psi\Rightarrow A} &=\overline{\com\Psi\Rightarrow\com A}\\
      \com{S(A)} &=\overline{S\com{A}}
    \end{aligned}
  \]

  A substitution $\sigma$ is valid with respect to a context $\Gamma$ (notation
  $\sigma\vDash\Gamma$) if for each $x:A\in\Gamma$, $\sigma x\in\com A$.
\end{definition}

\begin{lemma}[Adequacy]\label{lem:Adequacy}
  If $\Gamma\vdash t:A$ and $\sigma\vDash\Gamma$, then $\sigma t\in\com A$.
\end{lemma}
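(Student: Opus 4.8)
The plan is to argue by induction on the derivation of $\Gamma\vdash t:A$, following Tait's reducibility method, with the families $\com A$ playing the role of reducibility candidates. Before the case analysis I would record three preliminary facts used throughout: first, that each $\com A$ is closed under antireduction of closed terms, which is immediate from the outer closure $\overline{(\cdot)}$ together with transitivity of $\lra^*$ (so that $\overline{\overline{\mathcal A}}=\overline{\mathcal A}$); second, that $\com A\subseteq\com{S(A)}$, using the singleton-sum convention $\sum_{i=1}^1 1\cdot r=r$; and third, that $\z[A]\in\com{S(A)}$, using the convention $0r=\z[A]$. The first fact is the workhorse: it lets me close each case simply by exhibiting a reduct of the substituted term inside the relevant candidate.

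For the base cases, the axiom $\tax$ gives $\sigma x\in\com\Psi$ directly from $\sigma\vDash\Gamma$, while $\tax_{\ket 0}$, $\tax_{\ket 1}$ and $\tax_{\mathbf 0}$ are immediate from the definitions of $\com\B$ and $\com{S(A)}$. The vector-space rules $\alpha_I$, $+_I$ and $S_I$ follow by using the induction hypothesis to reduce the relevant subterms to canonical sum form in $S\com A$ and then invoking the rules of Figure~\ref{fig:TRSvs} ($\rprod$ for scalars) together with the span closure of $\com{S(A)}$. Rule $\tif$ uses that every $b\in\com\B$ satisfies $b\lra^*\ket 0$ or $b\lra^*\ket 1$, so that $\riftrue$/$\riffalse$ reduce the application to one of the two branches, which lie in $\com A$ by the induction hypothesis. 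Rule $\times_I$ is immediate from the product operator, and $\times_{Er}$/$\times_{El}$ follow because elements of $\com{\B^n}$ reduce to products of basis terms, on which $\rhead$/$\rtail$ fire.

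For the castings $\Uparrow_r$ and $\Uparrow_\ell$, I would use the induction hypothesis to reduce the argument to a sum $\sum_i\alpha_i r_i$ with $r_i\in\com{S^k\Psi\times\Phi}$, and then push the casting through the linear-combination structure using the rules of Figure~\ref{fig:TRScasts}: $\rdistsumr$ to split over $+$, $\rdistscalr$ to pull out scalars, $\rdistzr$ to absorb $\z$, and finally $\rcaneutl$ (respectively $\rcaneutr$) to discharge the casting on each basis-shaped summand, landing in $\com{\Psi\times\Phi}$ and hence, after the span closure, in $\com{S(\Psi\times\Phi)}$.

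The main obstacle I expect is the abstraction and application rules, where the reduction strategy is type-directed. For $\Rightarrow_I$ with a superposed domain $S\Psi$ the call-by-name rule $\rbetan$ fires on any argument, and the case closes by applying the induction hypothesis to the extended substitution $\sigma[x\mapsto u]$; for a basis domain $\B^n$ the call-by-base rule $\rbetab$ fires only on basis terms, so I must first establish the auxiliary fact that every element of $\com{\B^n}$ reduces to a basis term, obtained by unfolding $\com{\B^n}=\overline{\com\B\times\cdots\times\com\B}$ and reducing each factor to $\ket 0$ or $\ket 1$. The genuinely delicate case is $\Rightarrow_{ES}$, the superposed application $tu:S(A)$: here neither $\sigma t$ nor $\sigma u$ need be a value, so the reduction must first distribute the application over the linear combinations to which $\sigma t$ and $\sigma u$ reduce, using the left rules $\rlinl$, $\rlinscall$, $\rlinzl$ (which apply unconditionally) and, when the domain is a basis type, the right rules $\rlinr$, $\rlinscalr$, $\rlinzr$, before the underlying arrow property and the span closure of $\com{S(A)}$ can be invoked on each summand. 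I would isolate this distribution argument as a separate sublemma stating that $t\in\com{S(\Psi\Rightarrow A)}$ and $u\in\com{S\Psi}$ imply $tu\in\com{S(A)}$, proved by reducing both to canonical sum form and distributing summand-by-summand; making this precise is the technical heart of the lemma.
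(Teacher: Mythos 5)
Your proposal is correct and follows essentially the same route as the paper's proof: structural induction on the typing derivation with the sets $\com A$ as reducibility candidates, closure under antireduction doing the work in each case, the Arrow/Span operators discharging the abstraction and application cases, and the linear-distribution and casting rules handling $\Rightarrow_{ES}$ and $\Uparrow_r/\Uparrow_\ell$ summand-by-summand. If anything, you are more explicit than the paper about the call-by-base subtlety (first reducing elements of $\com{\B^n}$ to basis terms so that $\rbetab$, $\rhead$, $\rtail$, and the neutral casting rules can fire), which the paper's inline treatment glosses over.
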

\begin{proof}
  By induction on the derivation of $\Gamma\vdash t:A$.
  The detailed proof can be found in the appendix.
  \qed
\end{proof}

\begin{theorem}[Completeness]\label{thm:Completeness}
  If $\sem{\vdash t:S(\B^n)}=\sem{\vdash r:S(\B^n)}$,
  then for any concrete model interpretation injective on values
  there exists $s$ such that $t\lra^* s$ and $r\lra^*s$.
\end{theorem}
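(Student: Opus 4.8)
The plan is to use strong normalization together with the reducibility sets $\com{\cdot}$ of the Adequacy Lemma to drive both $t$ and $r$ to a \emph{canonical} value, and then to identify these two canonical values using the injectivity of the concrete interpretation. Since $\vdash t:S(\B^n)$ and $\vdash r:S(\B^n)$ are closed, the empty substitution is valid with respect to the empty context, so Lemma~\ref{lem:Adequacy} gives $t,r\in\com{S(\B^n)}$. Unfolding the definition, $\com{S(\B^n)}=\overline{S\com{\B^n}}$, hence each of $t$ and $r$ reduces to a term $\sum_i\alpha_i w_i$ with $w_i\in\com{\B^n}$; and since $\com{\B^n}$ is the antireduction closure of the basis terms of type $\B^n$ (iterated products of $\ket 0$ and $\ket 1$), each $w_i$ further reduces to a basis term $\ket{\vec b_i}$. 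Using strong normalization and the vector-space rules of Figure~\ref{fig:TRSvs} (chiefly $\rfact$, $\rfacto$, $\rfactt$, $\rzeros$, $\rzero$ and $\rneut$) I would then combine like terms, reaching along explicit reductions
\[
  t\lra^* t'=\textstyle\sum_i\alpha_i\ket{\vec b_i},
  \qquad
  r\lra^* r'=\textstyle\sum_j\beta_j\ket{\vec c_j},
\]
where in each the basis terms are pairwise distinct and the scalars nonzero (or else the canonical value is $\z[\B^n]$, when all summands cancel). Note that $t'$ and $r'$ are themselves values, being formal linear combinations of basis terms.

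Next I would invoke Soundness (Theorem~\ref{thm:soundness}), which holds in any adjunction for algebraic manipulation and in particular in the chosen concrete model: applying it along the reductions above yields $\sem{t'}=\sem t=\sem r=\sem{r'}$, where the middle equality is the hypothesis, transported to the concrete model since it is an identity between morphisms built functorially. Because $t'$ and $r'$ are values, injectivity on values now forces $t'=r'$, an equality of terms, i.e.\ modulo associativity and commutativity of $+$. Setting $s:=t'=r'$ then gives $t\lra^* s$ and $r\lra^* s$, which is the desired conclusion. Subject reduction on closed terms is used silently throughout to keep the type $S(\B^n)$ along the reductions, so that Soundness applies.

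The hard part is the passage from the semantic equality $\sem{t'}=\sem{r'}$ to the syntactic identity $t'=r'$, and this is precisely where the hypothesis of injectivity on values is indispensable. At the level of arbitrary values the interpretation is \emph{not} injective---for instance $2.\ket 0$ and $\ket 0+\ket 0$ denote the same vector---so one cannot apply injectivity before reaching the canonical form in which every basis term occurs at most once and no trivial scalar survives. On such canonical values injectivity does hold, because in an injective-on-values concrete model the interpretations of the distinct basis terms of type $\B^n$ form an independent family, so that equal semantics forces equal basis terms to carry equal coefficients. It is worth stressing that, in the absence of a confluence result for \LambdaS, it is this injectivity of the model on canonical values---rather than the usual diamond property---that manufactures the common reduct $s$.
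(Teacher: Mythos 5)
Your proof is correct and follows the same skeleton as the paper's own argument: Adequacy (Lemma~\ref{lem:Adequacy}) applied to the closed terms, reduction of both $t$ and $r$ to values of type $S(\B^n)$, Soundness (Theorem~\ref{thm:soundness}) to transport the semantic hypothesis along those reductions, and injectivity on values to identify the two results, which then serve as the common reduct $s$. The difference is where injectivity is invoked. The paper applies it directly to the values $\psi=\sum_i\alpha_i\ket{b_{i1}}\times\dots\times\ket{b_{in}}$ and $\phi=\sum_j\beta_j\ket{b_{j1}}\times\dots\times\ket{b_{jn}}$ delivered by Adequacy, with no further processing; you instead insert a canonicalization step, using strong normalization and the rules of Figure~\ref{fig:TRSvs} to merge repeated basis terms and eliminate trivial scalars before appealing to injectivity. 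This extra step buys something real: as you observe, no nondegenerate model is literally injective on \emph{all} values (in $\mathbf{Set}$/$\mathbf{Vec}$, the distinct values $2.\ket 0$ and $\ket 0+\ket 0$ have equal interpretation), so the hypothesis of the theorem — and the paper's remark that Example~\ref{ex:SetVec} satisfies it — is only tenable if ``injective on values'' is read as injectivity on canonical values; under that reading the paper's proof has a small gap, since Adequacy does not guarantee that $\psi$ and $\phi$ are canonical, and your normalization step is exactly what fills it. The price you pay is reliance on the claim that every normal form of type $S(\B^n)$ has the canonical shape (pairwise distinct basis terms with nonzero, non-unit scalars, or else $\z[\B^n]$); this is plausible from $\rfact$, $\rfacto$, $\rfactt$, $\rzeros$, $\rzero$, $\rneut$ and associativity--commutativity of $+$, but you assert it rather than prove it, and it would deserve a short lemma of its own.
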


\begin{proof}
  By Lemma~\ref{lem:Adequacy},
  $t\in\com{S(\B^n)}=\overline{S\com{\B^n}}=\overline{S(\overline{\B^n})}$.
  Hence, $t\lra^*\psi$, with $\vdash\psi:S(\B^n)$ and
  $\psi=\sum_i\alpha_i\ket{b_{i1}}\times\dots\times\ket{b_{in}}$. Then, by
  Theorem~\ref{thm:soundness}, $\sem{\vdash
    t:S(\B^n)}=\sem{\vdash\psi:S(\B^n)}$. Analogously, $r\lra^*\phi$ and so by
  Theorem~\ref{thm:soundness}, $\sem{\vdash
    r:S(\B^n)}=\sem{\vdash\phi:S(\B^n)}$, with
  $\phi=\sum_j\beta_j\ket{b_{j1}}\times\dots\times\ket{b_{jn}}$. Therefore,
  since $\sem{\vdash t:S(\B^n)}=\sem{\vdash r:S(\B^n)}$, we have
  $\sem{\vdash\psi:S(\B^n)}=\sem{\vdash\phi:S(\B^n)}$, and so, since the model
  is injective on values, we have $\psi=\phi=s$.
  \qed
\end{proof}

\begin{remark}
It is easy to verify that the model from Example~\ref{ex:SetVec} is injective on values, and
hence, the model is Complete (Theorem~\ref{thm:Completeness}).

On the other hand, there exist non injective concrete models. An easy example
is the degenerated concrete model where both $\mathcal C$ and $\mathcal V$ are
the terminal category (the category with a unique object and morphism), which is
trivially both, Cartesian closed, and additive symmetric monoidal closed, and
both axioms are satisfied trivially.
\end{remark}

\section{Conclusion}\label{sec:conclusion}
In this paper, we have given an abstract categorical semantics of \LambdaS, a fragment of
\LambdaSM without measurements, and we have proved that it is sound
(Theorem~\ref{thm:soundness}) and complete (Theorem~\ref{thm:Completeness}).
Such a semantics highlights the dynamics of the calculus: The algebraic
rewriting (linear distribution, vector space axioms, and typing casts rules)
emphasize the standard behavior of vector spaces, in a computational way: the
vector space axioms give rise to computational steps. We have enforced this
computational steps by interpreting the calculus into a Cartesian category $\mathcal C$,
without distributivity properties, and defining and using an adjunction for
algebraic manipulation between this category $\mathcal C$ and an additive symmetric
monoidal closed category $\mathcal V$ with all the properties needed for the vectorial
space axioms.
This way, in order to transform an element from the category $\mathcal C$, we use the
adjunction to carry these elements to $\mathcal V$, where the proper transformation
properties are in place.

As an immediate future work, we are willing to pursue a complete semantics for
quantum computing, for which we need to add back the measurement operator, and
define a notion of a norm, maybe
following~\cite{DiazcaroGuillermoMiquelValironLICS19}.

\paragraph{Acknowledgments} We thank the anonymous reviewer for the suggestion
with some examples on concrete models.

\bibliographystyle{spmpsci}
\bibliography{biblio}
 
\appendix
\section{Detailed proofs}\label{ap:appendix}
\xrecap{Proposition}{Independence of derivation}{prop:eqDer}{
  If $\Gamma\vdash t:A$ can be derived with two different derivations $\pi$ and
  $\pi'$, then $\sem{\pi}=\sem{\pi'}$
}
\begin{proof}
  Without taking into account rules $\Rightarrow_E$, $\Rightarrow_{SE}$ and
  $S_I$, 
  the typing system is syntax directed.
  In the case of the application (rules $\Rightarrow_E$ and $\Rightarrow_{SE}$),
  they can be interchanged only in few specific cases.

  Hence, we give a rewrite system on trees such that each time a rule $S_I$ 
  can be applied before or after another rule, we chose a direction to rewrite the
  three to one of these forms. Similarly we chose a direction for rules
  $\Rightarrow_E$ and $\Rightarrow_{ES}$.
  Then we prove that every rule preserves the
  semantics of the tree. This rewrite system is clearly confluent and
  normalizing, hence for each tree $\pi$ we can take the semantics of its normal
  form, and so every sequent will have one way to calculate its semantics: as
  the semantics of the normal tree.

  In order to define the rewrite system, we first analyze the typing rules
  containing only one premise, and check whether these rules allow for a
  previous and posterior rule $S_I$. 
  If both are allowed, we choose a direction for the rewrite rule. Then we continue with rules with more than one
  premise and check under which conditions a commutation of rules is possible,
  choosing also a direction.

  \noindent {Rules with one premise:}
  \begin{itemize}
  \item Rule $\alpha_I$:
    \begin{align}
      \label{rule:alpha-SI}
      \vcenter{\infer[^{\alpha_I}]{\Gamma\vdash\alpha.t:S(S(A))}{
        \infer[^{S_I}]{\Gamma\vdash t:S(S(A))}{\Gamma\vdash t:S(A)}
        }}
      &\lra
        \vcenter{\infer[^{S_I}]{\Gamma\vdash\alpha.t:S(S(A))}{
        \infer[^{\alpha_I}]{\Gamma\vdash\alpha.t:S(A)}{\Gamma\vdash t:S(A)}
        }}
    \end{align}
  \item Rules $\Rightarrow_I$, $\times_{E_r}$, $\times_{E_l}$,
    $\Uparrow_r$, and $\Uparrow_\ell$: These rules end with a specific
    types not admitting two $S$ in the head position (i.e.~$\B^j\times S(\B^{n-j})$,
    $\Psi\Rightarrow A$, $\B$, $\B^{n-1}$, and $S(\Psi\times\Phi)$) hence removing an $S$ or adding an $S$ would not allow the rule to be applied, and hence, these rules followed or
    preceded by $S_I$ 
    cannot commute.
  \end{itemize}
  {Rules with more than one premise:}
  \begin{itemize}
  \item Rule $+_I$:
    \begin{align}\label{rule:sum-SI}\nonumber
      \vcenter{
      \infer[^{+_I}]{\Gamma,\Delta,\Xi^\B\vdash(t+u):S(S(A))}{
      \infer[^{S_I}]{\Gamma,\Xi^\B\vdash t:S(S(A))}{\Gamma,\Xi^\B\vdash t:S(A)}
      &
        \infer[^{S_I}]{\Delta,\Xi^\B\vdash u:S(S(A))}{\Delta,\Xi^\B\vdash r:S(A)}
        }
        }\\
      \lra\ &
              \vcenter{
              \infer[^{S_I}]{\Gamma,\Delta,\Xi^\B\vdash(t+u):S(S(A))}{
              \infer[^{+_I}]{\Gamma,\Delta,\Xi^\B\vdash (t+u):S(A)}{
              \Gamma,\Xi^\B\vdash t:S(A) & \Delta,\Xi^\B\vdash u:S(A)
                                           }
                                           }
                                           }
    \end{align}
  \item Rules $\Rightarrow_E$ and $\Rightarrow_{ES}$:
    \begin{align}\label{rule:arrow-SI}\nonumber
      \vcenter{\infer[^{\Rightarrow_{ES}}]{{\Delta,\Gamma,\Xi^\B\vdash tu:S(A)}}{
      \infer[^{S_I}]{\Delta,\Xi^\B\vdash u:S\Psi}{\Delta,\Xi^\B\vdash u:\Psi}
      &
        \infer[^{S_I}]{\Gamma,\Xi^\B\vdash t:S(\Psi\Rightarrow A)}{\Gamma,\Xi^\B\vdash t:\Psi\Rightarrow A}
        }
        }\\
      \lra\ &
              \vcenter{
              \infer[^{S_I}]{\Delta,\Gamma,\Xi^\B\vdash tu:S(A)}{
              \infer[^{\Rightarrow_E}]{\Delta,\Gamma,\Xi^\B\vdash tu:A}{
              \Delta,\Xi^\B\vdash u:\Psi
            &
              \Gamma,\Xi^\B\vdash t:\Psi\Rightarrow A
              }
              }
              }
    \end{align}
  \item Rules $\mathit{If}$ and $\times_I$: These rules end with a specific 
    types not admitting two $S$ in the head position (i.e.~$\B\Rightarrow A$ and
    $\Psi\times\Phi$),
    hence removing an $S$ or adding an $S$ would not allow the rule to be
    applied, and hence, these rules followed or preceded by $S_I$ 
    cannot commute.
  \end{itemize}
  The confluence of this rewrite system is easily inferred from the fact that
  there are not critical pairs. The strong normalization follows from the fact that the
  trees are finite and all the rewrite rules push the $S_I$ 
  to the
  root of the trees. 

  It only remains to check that each rule preserves the semantics.
  \begin{itemize}
  \item Rule \eqref{rule:alpha-SI}: The following diagram gives the semantics of
    both trees (we only treat, without lost of generality, the case where $A\neq S(A')$). 
    \begin{center}
      \begin{tikzcd}[column sep=1.3cm]
        \Gamma\ar[d,"t"]\\
        USA\ar[d,"U\lambda"]\ar[r,"\eta"]\ar[drr,dashed,red,bend left=5,"f",pos=0.6]& USUSA\ar[r,"USU\lambda"]\ar[drr,dashed,red,"USf",bend left=5,pos=0.6] &USU(SA\otimes\I)\ar[r,"USU(\Id\otimes\alpha)"] & USU(SA\otimes\I)\ar[d,"USU\lambda^{-1}"]\\
        U(SA\otimes\I)\ar[r,"U(\Id\otimes\alpha)"]& U(SA\otimes\I)\ar[r,"U\lambda^{-1}"]& USA\ar[r,"\eta"]&USUSA
      \end{tikzcd}
    \end{center}
    Let $h = \lambda^{-1}\circ(\Id\otimes\alpha)\circ\lambda$ and $f=Uh$. The
    diagram commutes by naturality of $\eta$  with respect to $f$.

  \item Rule \eqref{rule:sum-SI}: The following diagram gives the semantics of
    both trees (we only treat, without lost of generality, the case where $A\neq S(A')$).
    \begin{center}
      \begin{tikzcd}[execute at end picture={
          \path (\tikzcdmatrixname-1-3) -- (\tikzcdmatrixname-2-3) coordinate[pos=0.5] (aux)
          (aux) -- (\tikzcdmatrixname-2-4) node[midway,blue]{\small \eqref{a.2:def}};
          \path (\tikzcdmatrixname-1-2) -- (\tikzcdmatrixname-2-2) coordinate[pos=0.5] (aux1)
          (aux1) -- (aux) node[midway,blue]{\small \eqref{a.2:nat}};
          \path (\tikzcdmatrixname-2-2) -- (\tikzcdmatrixname-3-2) coordinate[pos=0.5] (aux1)
          (\tikzcdmatrixname-2-3) -- (\tikzcdmatrixname-3-3) coordinate[pos=0.5] (aux2)
          (aux1) -- (aux2) node[midway,blue]{\small\eqref{a.2:naturality}};
        }]
        \Gamma\times\Xi^\B\times\Delta\times\Xi^\B\ar[r,"t\times r"] & (USA)^2\ar[r,"\eta^2"]\ar[d,"g_0=\Id"] & (USUSA)^2\ar[d,"g_1"]\ar[rd,dashed,red,"n"] \\
        \Gamma\times\Delta\times\Xi^\B\times\Xi^\B \ar[u,"\Id\times\sigma\times\Id"]
        &(USA)^2\ar[d,"\hat +"]\ar[r,dashed,red,"\eta"]&US(USA)^2\ar[d,"US\hat +"] &
        \color{red}U(SUSA\otimes SUSA)\ar[l,dashed,red,"Um"]\\
        \Gamma\times\Delta\times\Xi^\B\ar[u,"\Id\times\Delta"] &USA\ar[r,"\eta"]&USUSA
      \end{tikzcd}
    \end{center}
    \begin{enumerate}
    \item\label{a.2:def} Definition of $g_1$.
    \item\label{a.2:nat} $\eta$ is a monoidal natural transformation.
    \item\label{a.2:naturality} Naturality of $\eta$ with respect to $\hat +$.
    \end{enumerate}
  \item Rule \eqref{rule:arrow-SI}: The following diagram gives the semantics
    of both trees.
    \begin{center}
      \begin{tikzcd}[column sep=1.8cm,
        execute at end picture={
          \path
          (\tikzcdmatrixname-2-1) -- (\tikzcdmatrixname-3-1) coordinate[pos=0.5](aux1)
          (\tikzcdmatrixname-2-3) -- (\tikzcdmatrixname-3-3) coordinate[pos=0.5](aux2)
          (aux1) -- (aux2) node[near start,sloped,blue]{\small\eqref{a.3:nat}}
          (aux1) -- (aux2)  node[near end,sloped,blue]{\small\eqref{a.3:mon}};
        }]
        \Delta\times\Xi^\B\times\Gamma\times\Xi^\B\ar[d,"u\times t"] &
        \Delta\times\Gamma\times\Xi^\B\ar[l,swap,"(\Id\times\sigma\times\Id)\circ(\Id\times\Delta)"]\\
        \Psi\times\home\Psi A\ar[d,"\varepsilon^\Psi"]\ar[rrd,dashed,red,"\eta"]\ar[r,"\eta^2"]&US\Psi\times US(\home\Psi A)\ar[r,"n"] &U(S\Psi\otimes S(\home\Psi A))\ar[d,"Um"]\\
        A\ar[r,"\eta"]& US(A)&US(\Psi\times\home\Psi A)\ar[l,"US(\varepsilon^\Psi)"]\\
      \end{tikzcd}
    \end{center}
    \begin{enumerate}
    \item\label{a.3:nat} Naturality of $\eta$ with respect to $\varepsilon^\Psi$.
    \item\label{a.3:mon} $\eta$ is a monoidal natural transformation.
    \qed
    \end{enumerate}
  \end{itemize}
\end{proof}
\begin{lemma}[Weakening]\label{lem:W}
  If $\Gamma\vdash t:A$, then $\Gamma,\Delta^\B\vdash t:A$. Moreover,
  $\sem{\Gamma,\Delta^\B\vdash t:A}=\sem{\Gamma\vdash t:A}\circ(\Id\times{!})$.
\end{lemma}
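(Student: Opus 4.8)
The plan is to prove both statements by a single induction on the derivation of $\Gamma\vdash t:A$, with the semantic identity riding on top of the typability one. For typability the essential observation is that $\Delta^\B$ assigns only basis types $\B^n$, and these are exactly the types that \LambdaS{} allows to be weakened, since they inhabit the common contexts $\Theta^\B$. Concretely, in each axiom I replace $\Theta^\B$ by $\Theta^\B,\Delta^\B$; in a single-premise rule ($\alpha_I$, $S_I$, $\Rightarrow_I$, $\times_{Er}$, $\times_{El}$, $\Uparrow_r$, $\Uparrow_\ell$) I weaken the premise by the induction hypothesis and re-apply the rule unchanged; and in a multi-premise rule ($+_I$, $\times_I$, $\tif$, $\Rightarrow_E$, $\Rightarrow_{ES}$) I weaken every premise by $\Delta^\B$ and merge it into the shared basis context, using that basis types may be contracted so that the several copies of $\Delta^\B$ are identified. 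This yields $\Gamma,\Delta^\B\vdash t:A$ in all cases.

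For the semantic identity I would proceed by cases on the last rule, each time factoring the weakened interpretation as $\sem{\Gamma\vdash t:A}\circ(\Id\times{!})$. In every axiom the leftmost leg is the terminal map ${!}\colon\Gamma^\B\to{1}$ (tensored with $\Id_\Psi$ in the variable rule), and since the extra $\Delta^\B$ is also sent to the terminal object ${1}$, the equality collapses to the uniqueness of maps into ${1}$, i.e.\ ${!}\circ(\Id\times{!})={!}$. For single-premise rules the induction hypothesis rewrites the premise as (old premise)$\,\circ(\Id\times{!})$, and since the remaining arrows (the fixed maps built from $\eta$, $\mu$, $U\lambda$, the projections $\pi_1,\pi_2$, the castings $\Uparrow^k_r$, etc.) are post-composed without touching the context, functoriality closes each case. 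The only additional ingredient arises in $\Rightarrow_I$, where I also invoke the naturality of the unit $\eta^\Psi$ of $-\times\Psi\dashv\home\Psi-$ with respect to $\Id\times{!}$, after an exchange moving $\Delta^\B$ past $x{:}\Psi$ (harmless, as $\Delta^\B$ is basis).

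The delicate cases, and the main obstacle, are the multi-premise rules, whose interpretations open with the duplication map $d=(\Id\times\sigma\times\Id)\circ(\Id\times\Delta)$ that copies the shared basis context and distributes it to the two premises. After weakening, $\Delta^\B$ has been absorbed into that shared context, so $d$ now duplicates $\Xi^\B\times\Delta^\B$ and produces \emph{two} copies of $\Delta^\B$. I would show that running these through the two weakened premise maps equals first projecting the single $\Delta^\B$ away and then executing the unweakened composite: by the induction hypothesis each weakened premise already factors through the projection of its own copy of $\Delta^\B$ to ${1}$, so both copies are absorbed, and by terminality the two projections to ${1}$ agree with the single projection performed before $d$. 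Verifying that the permutation $\sigma$ and the diagonal $\Delta$ inside $d$ commute with these terminal projections is the crux; it is routine but demands careful diagram bookkeeping. The same template covers $\tif$ (via $f_{t,r}$ and the curryfications $\hat t,\hat r$) and the cases $\Rightarrow_E$, $\Rightarrow_{ES}$, $+_I$, $\times_I$, where one additionally appeals to the naturality of $\varepsilon^\Psi$, $n$, $m$, $\hat +$ and $g_{m-1}$ with respect to the context map $\Id\times{!}$. Beyond this duplication-versus-projection compatibility I expect no essential difficulty.
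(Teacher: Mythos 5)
Your proof is correct and follows essentially the same route as the paper's: the paper's own argument is just a two-sentence compression of it, observing that the derivation tree can be weakened by threading $\Delta^\B$ through the (shared) basis contexts from the axioms down, and that the semantic factorization $\sem{\Gamma,\Delta^\B\vdash t:A}=\sem{\Gamma\vdash t:A}\circ(\Id\times{!})$ holds because $FV(t)\cap\Delta^\B=\emptyset$. The details you supply --- in particular the interaction of the duplication map $d$ with the terminal projections in the multi-premise cases, which reduces to $({!}\times{!})\circ\Delta=\Delta\circ{!}$ by terminality --- are exactly the bookkeeping the paper leaves implicit.
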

\begin{proof}
  It is easy to show that a tree deriving $\Gamma\vdash t:A$ can be transformed
  into a tree deriving $\Gamma,\Delta^\B\vdash t:A$ just by adding $\Delta^\B$
  to the contexts in its axioms.
  Moreover, since $FV(t)\cap\Delta^\B=\emptyset$, we have
  $\sem{\Gamma,\Delta^\B\vdash t:A}=\sem{\Gamma\vdash t:A}\circ(\Id\times{!})$.
  \qed
\end{proof}
\xrecap{Lemma}{Substitution}{lem:substitution}
{
  If $\Gamma',x:\Psi,\Gamma\vdash t:A$ and $\vdash r:\Psi$, then the following diagram
  commutes:
}
\begin{center}
  \begin{tikzcd}
    \Gamma'\times\Gamma\ar[r,"(r/x)t"]\ar[d,"\lambda_\times\times\Id",swap] & A\\
    \Gamma'\times{1}\times\Gamma\ar[r,"\Id\times r\times\Id",swap] &\Gamma'\times\Psi\times\Gamma\ar[u,"t"]
  \end{tikzcd}
\end{center}
\emph{  That is,
  $\sem{\Gamma'\times\Gamma\vdash(r/x)t:A}=\sem{\Gamma,x:\Psi,\Gamma'\vdash t:A}\circ(\Id\times\sem{\vdash r:\Psi}\times\Id)$.
}
\begin{proof}
  By induction on the derivation of $\Gamma',x:\Psi,\Gamma\vdash t:A$.
  Also, we take the rules $\alpha_I$ and $+_I$ with $m=1$, the generalization is straightforward.
  \begin{itemize}
  \item $\vcenter{\infer{\Delta^\B,x:\Psi\vdash x:\Psi}{}}$
    \begin{center}
      \begin{tikzcd}[
        execute at end picture={
          \path (\tikzcdmatrixname-1-1) -- (\tikzcdmatrixname-3-1) coordinate[pos=0.5](aux)
          (aux) -- (\tikzcdmatrixname-2-2) node[midway,blue]{\small\eqref{substax:A}};
          \path (\tikzcdmatrixname-1-1) -- (\tikzcdmatrixname-1-3) coordinate[pos=0.5](aux)
          (aux) -- (\tikzcdmatrixname-2-2) node[midway,blue]{\small\eqref{substax:B}};
          \path (\tikzcdmatrixname-1-3) -- (\tikzcdmatrixname-3-3) coordinate[pos=0.5](aux1)
          (\tikzcdmatrixname-2-2) -- (\tikzcdmatrixname-3-1) coordinate[pos=0.5](aux2)
          (aux1) -- (aux2) node[midway,blue]{\small\eqref{substax:C}};
          \path (\tikzcdmatrixname-1-3) -- (\tikzcdmatrixname-3-3) node[midway,xshift=4cm]{\parbox{0.4\textwidth}{
              \begin{enumerate}
              \item\label{substax:A} Naturality of the projection.
              \item\label{substax:B} Lemma~\ref{lem:W}.
              \item\label{substax:C} Functoriality of the product.
              \end{enumerate}
            }};
        }]
        \Delta^\B \arrow[dd, "\lambda_\times",swap] \arrow[rr, "(r/x)x"] \arrow[rd, "!", red,dashed] & & \Psi \\
        &\color{red} 1 \arrow[ru, "r", red,dashed] & \\
        \Delta^\B\times 1 \arrow[rr, "\Id\times r"'] \arrow[ru, "!", red,dashed] & & \Delta^\B\times\Psi \arrow[uu, "x{\color{red} = !\times\Id}",swap]
      \end{tikzcd}
    \end{center}

  \item
    $\vcenter{\infer{\Gamma,x:\Psi,\Gamma\vdash\alpha.t:S(A)}{\Gamma',x:\Psi,\Gamma\vdash
        t:S(A)}}$

    \begin{center}
      \begin{tikzcd}[column sep=1cm,
        execute at end picture={
          \path (\tikzcdmatrixname-1-2) -- (\tikzcdmatrixname-4-2) coordinate[pos=0.5](aux)
          (\tikzcdmatrixname-3-1) -- (aux) node[midway,blue,xshift=-5mm]{\small\eqref{substSI:A}}
          (aux) -- (\tikzcdmatrixname-3-3) node[midway,blue,xshift=5mm]{\small\eqref{substSI:B}}
          (aux) -- (aux) node[midway,blue,yshift=-1.8cm]{\small\eqref{substSI:C}}
          (\tikzcdmatrixname-1-2) -- (\tikzcdmatrixname-5-2) node[midway,xshift=6.1cm]{\parbox{0.4\textwidth}{
              \begin{enumerate}
              \item\label{substSI:A} Definition of the map $\alpha.(r/x)t$.
              \item\label{substSI:B} Definition of the map $\alpha.t$.
              \item\label{substSI:C} Induction hypothesis.
              \end{enumerate}
            }};
        }]
        & USA & \\
        &\color{red} U(SA\otimes\I) \arrow[u,pos=0.1, "U\lambda^{-1}", red,dashed] & \\
        \Gamma'\times\Gamma \arrow[rdd,sloped, "\lambda_\times\times\Id"'] \arrow[ruu, sloped,"(r/x)(\alpha.t){\color{red}=\alpha.(r/x)t}"] \arrow[rd,sloped, "(r/x)t"', red,dashed] &\color{red} U(SA\otimes\I) \arrow[u, "U(\Id\otimes\alpha)", red,dashed] & \Gamma'\times\Psi\times\Gamma \arrow[luu,sloped,swap, "\alpha.t"'] \arrow[ld,sloped,swap, "t", red,dashed] \\
        &\color{red} USA \arrow[u, "U\lambda", red,dashed]  & \\
        & \Gamma'\times 1\times\Gamma \arrow[ruu,sloped, "\Id\times r\times\Id"'] & 
      \end{tikzcd}
    \end{center}

  \item $\vcenter{\infer{\Gamma',x:\Psi,\Gamma,\Delta,\Xi^\B\vdash
        t+u:S(A)}{\Gamma',x:\Psi,\Gamma,\Xi^\B\vdash t:S(A) &
        \Delta,\Xi^\B\vdash u:S(A)}}$\qquad
    \parbox{.43\textwidth}{We only treat the case when $x\in FV(t)$, the cases $x\in FV(u)$ and $x\in FV(u)\cap FV(t)$ are analogous.}
    \begin{center}
      \begin{tikzcd}[
        execute at end picture={
          \path (\tikzcdmatrixname-3-2) -- (\tikzcdmatrixname-4-2) coordinate[pos=0.5](aux1)
          (\tikzcdmatrixname-3-3) -- (\tikzcdmatrixname-4-3) coordinate[pos=0.5](aux2)
          (aux1) -- (aux2) node[midway,sloped,blue]{\small\eqref{subssum}}
          (aux1) -- (aux2) node[midway,sloped,blue,xshift=-3.5cm]{\small\eqref{subsnat}}
          (aux1) -- (aux2) node[midway,sloped,blue,xshift=3.5cm,yshift=5mm]{\small\eqref{subsdef}}
          (aux1) -- (aux2) node[midway,sloped,blue,yshift=2cm]{\small\eqref{subsdef}};
        }]
        \Gamma'\times\Gamma\times\Delta\times\Xi^\B \arrow[dddd,swap, "\lambda_\times\times\Id"] \arrow[rdd, "d", red,dashed] \arrow[rrr, "(r/x)(t+u)"] &[-65pt] & &[-65pt] USA \\
        & &\color{red} USA\times USA \arrow[ru, "\hat +", red,dashed] & \\
        & \color{red}\Gamma'\times\Gamma\times\Xi^\B\times\Delta\times\Xi^\B \arrow[r, "(r/x)t\times u", red,dashed] \arrow[d, "\lambda_\times\times\Id", red,dashed] &\color{red} USA\times USA \arrow[u, "g_0=\Id", red,dashed] &\\
        &\color{red} \Gamma'\times
        1\times\Gamma\times\Xi^\B\times\Delta\times\Xi^B \arrow[r, "\Id\times
        r\times \Id", red,dashed]   &\color{red}
        \Gamma'\times\Psi\times\Gamma\times\Xi^\B\times\Delta\times\Xi^\B
        \arrow[u, "t\times u", red,dashed] & \\
        \Gamma'\times 1\times\Gamma\times\Delta\times\Xi^\B \arrow[rrr, "\Id\times r\times \Id"] & & & \Gamma'\times\Psi\times\Gamma\times\Delta\times\Xi^B \arrow[lu, "d", red,dashed] \arrow[uuuu,swap,"t+u"]
      \end{tikzcd}
    \end{center}
    \begin{enumerate}
    \item\label{subssum} Induction hypothesis.
    \item\label{subsnat} Naturality of $d$.
    \item\label{subsdef} Definition of $+$.
    \end{enumerate}

  \item $\vcenter{\infer{\Gamma',x:\Psi,\Gamma\vdash\ite{}ts:\B\Rightarrow A}{\Gamma',x:\Psi,\Gamma\vdash t:A &
        \Gamma',x:\Psi,\Gamma\vdash s:A}}$

    \begin{center}
      \begin{tikzcd}[
        execute at end picture={
          \path (\tikzcdmatrixname-1-2)  -- (\tikzcdmatrixname-2-2) node[midway,xshift=4.5cm]{\parbox{0.55\textwidth}{
              where $(r/x)G = \mathsf{curry}(\mathsf{uncurry}(f_{(r/x)t,(r/x)s})\circ\mathsf{swap})$ and $G =\mathsf{curry}(\mathsf{uncurry}(f_{t,s})\circ\mathsf{swap})$.\\
              By the induction hypothesis, $(r\times\Id)\circ t=(r/x)t$ and $(r\times\Id)\circ s=(r/x)s$, hence, $(r\times\Id)\circ f_{t,s}=f_{(r/x)t,(r/x)s}$ and so $(r/x)G=(r\times\Id)\circ G$, which makes the diagram commute.
            }};
        }]
        \Gamma'\times\Gamma\ar[d,"\lambda_\times\times\Id",swap]\ar[r,"(r/x)G"]   & \lbrack\B,A\rbrack\\
        \Gamma'\times{1}\times\Gamma\ar[r,"\Id\times r\times\Id",swap] & \Gamma'\times\Psi\times\Gamma\ar[u,"G",swap]
      \end{tikzcd}
    \end{center}

  \item $\vcenter{\infer{\Gamma',x:\Psi,\Gamma\vdash \lambda y{:}\Phi.t:\Phi\Rightarrow A}{\Gamma',x:\Psi,\Gamma,y:\Phi\vdash t:A}}$

    \begin{center}
      \begin{tikzcd}[column sep=1cm,
        execute at end picture={
          \path
          (\tikzcdmatrixname-1-1) -- (\tikzcdmatrixname-4-1) coordinate[pos=0.5](aux1)
          (\tikzcdmatrixname-1-4) -- (\tikzcdmatrixname-4-4) coordinate[pos=0.5](aux2)
          (aux1) -- (aux2) node[midway,blue,xshift=-1cm,yshift=1cm]{\small\eqref{substAI:A}}
          (aux1) -- (aux2) node[midway,blue,xshift=2.8cm]{\small\eqref{substAI:B}}
          (aux1) -- (aux2) node[midway,blue,yshift=.3cm]{\small\eqref{substAI:C}}
          (aux1) -- (aux2) node[midway,blue,yshift=-1.2cm]{\small\eqref{substAI:D}}
          (aux1) -- (aux2) node[midway,blue,xshift=-2.8cm,yshift=5mm]{\small\eqref{substAI:D}}
          (\tikzcdmatrixname-1-4) -- (\tikzcdmatrixname-4-4) node[midway,xshift=2.7cm]{\parbox{0.28\textwidth}{
              \begin{enumerate}
              \item\label{substAI:D} Naturality of $\eta^\Phi$.
              \item\label{substAI:A} Definition of the map $\lambda y{:}\Phi.(r/x)t$.
              \item\label{substAI:C} Induction hypothesis and the functoriality of $[\Phi,\--]$.
              \item\label{substAI:B} Definition of the map $\lambda y{:}\Phi.t$.
              \end{enumerate}
            }}
          ;
        }]
        \Gamma'\times\Gamma \arrow[rrr, "(r/x)(\lambda y{:}\Phi.t){\color{red}=\lambda y{:}\Phi.(r/x)t}"] \arrow[ddd, "\lambda_\times\times\Id",swap] \arrow[rd, "\eta^\Phi", red,dashed] &[-40pt] & &[-40pt] {[\Phi,A]} \\
        &\color{red} {[\Phi,\Gamma'\times\Gamma]} \arrow[rru, "{[\Id,(r/x)t]}", red,dashed,sloped] \arrow[d, "{[\Id,\lambda_\times\times\Id]}", red,dashed,swap] & & \\
        &\color{red} {[\Phi,\Gamma'\times 1\times\Gamma]} \arrow[r, "{[\Id,\Id\times r\times\Id]}", red,dashed] &\color{red} {[\Phi,\Gamma'\times\Psi\times\Gamma]} \arrow[ruu, "{[\Id,t]}", red,dashed] & \\
        \Gamma'\times 1\times\Gamma \arrow[rrr,swap, "\Id\times r\times\Id"] \arrow[ru, "\eta^\Phi", red,dashed] & & & \Gamma'\times\Psi\times\Gamma \arrow[uuu,swap, "\lambda y{:}\Phi.t"] \arrow[lu, "\eta^\Phi", red,dashed]
      \end{tikzcd}
    \end{center}

  \item $\vcenter{\infer{\Delta,\Gamma',x:\Psi,\Gamma,\Xi^\B\vdash tu:A}
      {
        \Delta,\Xi^\B\vdash u:\Phi
        &
        \Gamma',x:\Psi,\Gamma,\Xi^\B\vdash t:\Phi\Rightarrow A
      }}$
    \begin{center}
      \begin{tikzcd}[
        execute at end picture={
          \path
          (\tikzcdmatrixname-1-1) -- (\tikzcdmatrixname-2-1) coordinate[pos=0.5](aux1)
          (\tikzcdmatrixname-1-2) -- (\tikzcdmatrixname-2-2) coordinate[pos=0.5](aux2)
          (aux1) -- (aux2) node[midway,blue]{\small\eqref{substAE:B}}
          (\tikzcdmatrixname-2-1) -- (\tikzcdmatrixname-3-1) coordinate[pos=0.5](aux1)
          (\tikzcdmatrixname-2-2) -- (\tikzcdmatrixname-3-2) coordinate[pos=0.5](aux2)
          (aux1) -- (aux2) node[midway,blue]{\small\eqref{substAE:C}}
          (aux1) -- (aux2) node[midway,blue,xshift=-4cm]{\small\eqref{substAE:A}}
          (aux1) -- (aux2) node[midway,blue,xshift=3.5cm]{\small\eqref{substAE:D}}
          (\tikzcdmatrixname-3-1) -- (\tikzcdmatrixname-4-1) coordinate[pos=0.5](aux1)
          (\tikzcdmatrixname-3-2) -- (\tikzcdmatrixname-4-2) coordinate[pos=0.5](aux2)
          (aux1) -- (aux2) node[midway,blue]{\small\eqref{substAE:A}}
          ;
        }]
        \Delta\times\Gamma'\times\Gamma\times\Xi^\B \arrow[r, "(r/x)(tu){\color{red}=(r/x)tu}"] \arrow[ddd, "\lambda_\times\times\Id"', out=180,in=180] \arrow[d, "d", red,dashed] & A \\
        \color{red}\Delta\times\Xi^\B\times\Gamma'\times\Gamma\times\Xi^\B \arrow[d, "\Id\times\lambda_\times\times\Id", red,dashed] \arrow[r, "u\times (r/x)t", red,dashed] &\color{red} {\Phi\times[\Phi,A]} \arrow[u, "\varepsilon^\Phi", red,dashed] \\
        \color{red}\Delta\times\Xi^\B\times\Gamma'\times 1\times\Gamma\times\Xi^\B \arrow[r,yshift=1.5mm, "\Id\times\Id\times r\times\Id", red,dashed] &\color{red} \Delta\times\Xi^\B\times\Gamma'\times\Psi\times\Gamma\times\Xi^\B \arrow[u, "u\times t", red,dashed] \\
        \Delta\times\Gamma'\times 1\times\Gamma\times\Xi^\B \arrow[r, "\Id\times r\times\Id"'] \arrow[u,swap, "d", red,dashed] & \Delta\times\Gamma'\times\Psi\times\Gamma\times\Xi^\B \arrow[uuu, "tu"', out=0,in=0] \arrow[u, "d", red,dashed]
      \end{tikzcd}
    \end{center}
    \begin{enumerate}
    \item\label{substAE:A} Naturality of $d$.
    \item\label{substAE:B} Definition of the map $(r/x)tu$.
    \item\label{substAE:C} Induction hypothesis and functoriality of the product.
    \item\label{substAE:D} Definition of the map $tu$.
    \end{enumerate}

  \item $\vcenter{\infer{\Delta',x:\Psi,\Delta,\Gamma,\Xi^\B\vdash tu:A}
      {
        \Delta',x:\Psi,\Delta,\Xi^\B\vdash u:\Phi
        &
        \Gamma,\Xi^\B\vdash t:\Phi\Rightarrow A
      }}$
    \qquad
    Analogous to previous case. 

  \item $\vcenter{\infer{\Delta,\Gamma',x:\Psi,\Gamma,\Xi^\B\vdash tu:S(A)}
      {
        \Delta,\Xi^\B\vdash u:S\Phi
        &
        \Gamma',x:\Psi,\Gamma,\Xi^\B\vdash t:S(\Phi\Rightarrow A)
      }}$
    \begin{center}
      \begin{tikzcd}
        [column sep=5mm,
        execute at end picture={
          \path
          (\tikzcdmatrixname-4-2) -- (\tikzcdmatrixname-5-2) coordinate[pos=0.5](aux1)
          (\tikzcdmatrixname-4-3) -- (\tikzcdmatrixname-5-3) coordinate[pos=0.5](aux2)
          (aux1) -- (aux2) node[midway,blue]{\small\eqref{substAES:C}}
          (aux1) -- (aux2) node[midway,blue,yshift=2.5cm]{\small\eqref{substAES:B}}
          (aux1) -- (aux2) node[midway,blue,yshift=1.2cm,xshift=3.8cm]{\small\eqref{substAES:D}}
          (aux1) -- (aux2) node[midway,blue,xshift=-3.8cm]{\small\eqref{substAES:A}}
          (aux1) -- (aux2) node[midway,blue,yshift=-1.3cm]{\small\eqref{substAES:A}}
          ;
        }]
        \Delta\times\Gamma'\times\Gamma\times\Xi^\B \arrow[rrr, "(r/x)(tu){\color{red}=(r/x)tu}"] \arrow[ddddd, "\lambda_\times\times\Id",swap] \arrow[rddd, "d", red,dashed] &[-55pt] & &[-55pt]] USA \\
        & &\color{red} {US(\Phi\times[\Phi,A])} \arrow[ru, "US\varepsilon^\Phi", red,dashed] & \\
        & &\color{red} {U(S\Phi\otimes S[\Phi,A])} \arrow[u, "Um", red,dashed] & \\
        &\color{red} \Delta\times\Xi^\B\times\Gamma'\times\Gamma\times\Xi^\B \arrow[d, "\Id\times\lambda_\times\times\Id", red,dashed] \arrow[r, "u\times (r/x)t", red,dashed] &\color{red} {US\Phi\times US[\Phi,A]} \arrow[u, "n", red,dashed] & \\
        &\color{red} \Delta\times\Xi^\B\times\Gamma'\times 1\times\Gamma\times\Xi^\B \arrow[r,"\raisebox{1.5mm}{$\Id\times\Id\times r\times\Id$}", red,dashed] &\color{red} \Delta\times\Xi^\B\times\Gamma'\times\Psi\times\Gamma\times\Xi^\B \arrow[u, "u\times t", red,dashed] & \\
        \Delta\times\Gamma'\times 1\times\Gamma\times\Xi^\B \arrow[rrr,swap, "\Id\times r\times\Id"] \arrow[ru, "d", red,dashed] & & & \Delta\times\Gamma'\times\Psi\times\Gamma\times\Xi^\B \arrow[uuuuu,swap, "tu"] \arrow[lu, "d", red,dashed]
      \end{tikzcd}
    \end{center}
    \begin{enumerate}
    \item\label{substAES:A} Naturality of $d$.
    \item\label{substAES:B} Definition of the map $(r/x)tu$.
    \item\label{substAES:C} Induction hypothesis and functoriality of the product.
    \item\label{substAES:D} Definition of the map $tu$.
    \end{enumerate}
  \item $\vcenter{\infer{\Delta'x:\Psi,\Delta,\Gamma,\Xi^\B\vdash tu:S(A)}
      {
        \Delta',x:\Psi,\Delta,\Xi^\B\vdash u:S\Phi
        &
        \Gamma,\Xi^\B\vdash t:S(\Phi\Rightarrow A)
      }}$
    \qquad
    Analogous to previous case.

  \item $\vcenter{\infer{\Gamma',x:\Psi,\Gamma,\Delta,\Xi^\B\vdash t\times u:\Phi\times
        \Upsilon}{\Gamma',x:\Psi,\Gamma,\Xi^\B\vdash t:\Phi & \Delta,\Xi^\B\vdash u:\Upsilon}}$
    \begin{center}
      \begin{tikzcd}[column sep=1cm,
        execute at end picture={
          \node at (5,.3) {\parbox{.35\textwidth}{This diagram commutes by the
              induction hypothesis and the functoriality of the product}};
        }]
        \Gamma'\times\Gamma\times\Delta\times\Xi^\B \arrow[r, "\begin{array}{c}(r/x)(t\times u)\\{\color{red}=\,(r/x)t\times u}\end{array}"] \arrow[d, "\lambda_\times\times\Id{\color{red}=\lambda_\times\times\Id\times\Id}",swap] & \Psi\times\Upsilon \\
        \Gamma'\times 1\times\Delta\times\Gamma\times\Xi^\B \arrow[r,swap,
        "\begin{array}{c}\Id\times r\times\Id\\ {\color{red}=\,\Id\times r\times\Id\times\Id}\end{array}"] & \Gamma'\times\Psi\times\Delta\times\Gamma\times\Xi^\B \arrow[u, "t\times u"]
      \end{tikzcd}
    \end{center}

  \item $\vcenter{\infer{\Gamma,\Delta',x:\Psi,\Delta,\Xi^\B\vdash t\times u:\Phi\times
        \Upsilon}{\Gamma,\Xi^\B\vdash t:\Phi & \Delta',x:\Psi,\Delta,\Xi^\B\vdash u:\Upsilon}}$
    \qquad
    Analogous to previous case.

  \item $\vcenter{\infer{\Gamma',x:\Psi,\Gamma\vdash\head\ t:\B}{\Gamma',x:\Psi,\Gamma\vdash t:\B^n}}$
    \begin{center}
      \begin{tikzcd}[
        execute at end picture={
          \path
          (\tikzcdmatrixname-1-1) -- (\tikzcdmatrixname-3-1) coordinate[pos=0.5](aux1)
          (\tikzcdmatrixname-1-3) -- (\tikzcdmatrixname-3-3) coordinate[pos=0.5](aux2)
          (aux1) -- (aux2) node[midway,blue,xshift=-1.5cm]{\small\eqref{substhead:A}}
          (aux1) -- (aux2) node[midway,blue,xshift=1.5cm]{\small\eqref{substhead:C}}
          (aux1) -- (aux2) node[midway,blue,yshift=.5cm]{\small\eqref{substhead:B}}
          (aux1) -- (aux2) node[midway,xshift=6cm]{\parbox{0.4\textwidth}{
              \begin{enumerate}
              \item \label{substhead:A} Induction hypothesis.
              \item \label{substhead:B} Definition of the map $\head\ (r/x)t$.
              \item \label{substhead:C} Definition of the map $\head\ t$.
              \end{enumerate}
            }}
          ;
        }]
        \Gamma'\times\Gamma \arrow[dd,swap, "\lambda_\times\times\Id"] \arrow[rr, "(r/x)(\head\ t){\color{red}=\head\ (r/x)t}"] \arrow[rd, "(r/x)t", red,dashed] & & \B \\
        & \color{red}\B^n \arrow[ru, "\pi_1", red,dashed] & \\
        \Gamma'\times 1\times\Gamma \arrow[rr, "\Id\times r\times\Id",swap] & & \Gamma'\times\Psi\times\Gamma \arrow[uu, "\head\ t",swap] \arrow[lu, "t", red,dashed,swap]
      \end{tikzcd}
    \end{center}

  \item $\vcenter{\infer{\Gamma',x:\Psi,\Gamma\vdash\tail\ t:\B^{n-1}}{\Gamma',x:\Psi,\Gamma\vdash t:\B^n}}$
    \begin{center}
      \begin{tikzcd}[
        execute at end picture={
          \path
          (\tikzcdmatrixname-1-1) -- (\tikzcdmatrixname-3-1) coordinate[pos=0.5](aux1)
          (\tikzcdmatrixname-1-3) -- (\tikzcdmatrixname-3-3) coordinate[pos=0.5](aux2)
          (aux1) -- (aux2) node[midway,blue,xshift=-1.5cm]{\small\eqref{substtail:A}}
          (aux1) -- (aux2) node[midway,blue,xshift=1.5cm]{\small\eqref{substtail:C}}
          (aux1) -- (aux2) node[midway,blue,yshift=.5cm]{\small\eqref{substtail:B}}
          (aux1) -- (aux2) node[midway,xshift=6cm]{\parbox{0.4\textwidth}{
              \begin{enumerate}
              \item \label{substtail:A} Induction hypothesis.
              \item \label{substtail:B} Definition of the map $\tail\ (r/x)t$.
              \item \label{substtail:C} Definition of the map $\tail\ t$.
              \end{enumerate}
            }}
          ;
        }]
        \Gamma'\times\Gamma \arrow[dd,swap, "\lambda_\times\times\Id"] \arrow[rr, "(r/x)(\tail\ t){\color{red}=\tail\ (r/x)t}"] \arrow[rd, "(r/x)t", red,dashed] & & \B^{n-1} \\
        &\color{red} \B^n \arrow[ru, "\pi_2", red,dashed] & \\
        \Gamma'\times 1\times\Gamma \arrow[rr, "\Id\times r\times\Id",swap] & & \Gamma'\times\Psi\times\Gamma \arrow[uu, "\tail\ t",swap] \arrow[lu, "t", red,dashed,swap]
      \end{tikzcd}
    \end{center}
  \item $\vcenter{\infer{\Gamma',x:\Psi,\Gamma\vdash\Uparrow_r t:S(\Phi\times \Upsilon)}{\Gamma',x:\Psi,\Gamma\vdash t:S(S^k(\Phi)\times \Upsilon)}}$
    \begin{center}
      \begin{tikzcd}[
        execute at end picture={
          \path
          (\tikzcdmatrixname-1-1) -- (\tikzcdmatrixname-3-1) coordinate[pos=0.5](aux1)
          (\tikzcdmatrixname-1-3) -- (\tikzcdmatrixname-3-3) coordinate[pos=0.5](aux2)
          (aux1) -- (aux2) node[midway,blue,xshift=-2cm]{\small\eqref{substuparrowr:A}}
          (aux1) -- (aux2) node[midway,blue,xshift=2cm]{\small\eqref{substuparrowr:C}}
          (aux1) -- (aux2) node[midway,blue,yshift=.5cm]{\small\eqref{substuparrowr:B}}
          (aux1) -- (aux2) node[midway,xshift=5.5cm]{\parbox{0.38\textwidth}{
              \begin{enumerate}
              \item \label{substuparrowr:A} Induction hypothesis.
              \item \label{substuparrowr:B} Definition of the map $\Uparrow_r(r/x)t$.
              \item \label{substuparrowr:C} Definition of the map $\Uparrow_r t$.
              \end{enumerate}
            }}
          ;
        }]
        \Gamma'\times\Gamma \arrow[dd,swap, "\lambda_\times\times\Id"] \arrow[rr, "(r/x)(\Uparrow_r t){\color{red}=\Uparrow_r (r/x)t}"] \arrow[rd, "(r/x)t", red,dashed] &[-10pt] &[-10pt] US(\Phi\times\Upsilon) \\
        & \color{red}US((US)^k\Phi\times\Upsilon) \arrow[ru, "\Uparrow_r", red,dashed] & \\
        \Gamma'\times 1\times\Gamma \arrow[rr, "\Id\times r\times\Id",swap] & & \Gamma'\times\Psi\times\Gamma \arrow[uu, "\Uparrow_r t",swap] \arrow[lu, "t", red,dashed,swap]
      \end{tikzcd}
    \end{center}

  \item $\vcenter{\infer{\Gamma',x:\Psi,\Gamma\vdash\Uparrow_\ell t:S(\Phi\times \Upsilon)}{\Gamma',x:\Psi,\Gamma\vdash t:S(\Phi\times S(\Upsilon))}}$
    \qquad
    Analogous to previous case.

  \item $\vcenter{\infer{\Gamma',x:\Psi,\Gamma\vdash
        t:S(A)}{\Gamma',x:\Psi,\Gamma\vdash t:A}}$

    \begin{center}
      \begin{tikzcd}[
        execute at end picture={
          \path
          (\tikzcdmatrixname-1-1) -- (\tikzcdmatrixname-3-1) coordinate[pos=0.5](aux1)
          (\tikzcdmatrixname-1-3) -- (\tikzcdmatrixname-3-3) coordinate[pos=0.5](aux2)
          (aux1) -- (aux2) node[midway,blue,xshift=-1.5cm]{\small\eqref{substS:A}}
          (aux1) -- (aux2) node[midway,blue,xshift=1.5cm]{\small\eqref{substS:C}}
          (aux1) -- (aux2) node[midway,blue,yshift=.5cm]{\small\eqref{substS:B}}
          (aux1) -- (aux2) node[midway,xshift=6cm]{\parbox{0.4\textwidth}{
              \begin{enumerate}
              \item \label{substS:A} Induction hypothesis.
              \item \label{substS:B} Definition of the map $(r/x)t^{USA}$.
              \item \label{substS:C} Definition of the map $t^{USA}$.
              \end{enumerate}
            }}
          ;
        }]
        \Gamma'\times\Gamma \arrow[dd,swap, "\lambda_\times\times\Id"] \arrow[rr, "(r/x)t^{USA}"] \arrow[rd, "(r/x)t^A", red,dashed] & & USA \\
        &\color{red} A \arrow[ru, "\eta", red,dashed] & \\
        \Gamma'\times 1\times\Gamma \arrow[rr, "\Id\times r\times\Id",swap] & & \Gamma'\times\Psi\times\Gamma \arrow[uu, "t^{USA}",swap] \arrow[lu, "t^A", red,dashed,swap]
      \end{tikzcd}
    \end{center}
    \qed
  \end{itemize}
\end{proof}
\xrecap{Theorem}{Soundness}{thm:soundness}{
  If $\vdash t:A$, and $t\lra r$,
  then
  $\sem{\vdash t:A} = \sem{\vdash r:A}$.
}
\begin{proof}
  By induction on the rewrite relation, using the first derivable type for each
  term.
  We take the rules $\alpha_I$ and $+_I$ with $m=1$, the generalization is straightforward.
  \begin{description}
  \item[$(\mathsf{comm})$] $(t+r)=(r+t)$. We have
    \[
      \vcenter
      {
        \infer{\vdash (t+r):S(A)}
        {
          \vdash t:S(A) & \vdash r:S(A) 
        }
      }
      \qquad\textrm{and}\qquad
      \vcenter
      {
        \infer{\vdash (r+t):S(A)}
        {
          \vdash r:S(A) & \vdash t:S(A)
        }
      }
    \]
    Then,
    \begin{center}
      \begin{tikzcd}
        1\times 1 \arrow[rrr, "t\times r"] \arrow[ddd, "r\times t"]  & & & USA\times USA \arrow[ddd, "\hat +"] \arrow[ld, "p", red,dashed] \arrow[lllddd, "\gamma"', red,dashed, bend right=15] \\
        & & \color{red}U(SA\oplus SA) \arrow[rdd, "{U[\Id,\Id]}", red,dashed] \arrow[ld, "U\sigma"', red,dashed] & \\
        &\color{red} U(SA\oplus SA) \arrow[rrd, "{U[\Id,\Id]}", red,dashed] & & \\
        USA\times USA \arrow[rrr, "\hat +"] \arrow[ru, "p", red,dashed] & & & USA                                                                                                            
      \end{tikzcd}
    \end{center}
    Where $\gamma$ is the symmetry on $\times$ and $\sigma$ the symmetry on $\oplus$.

  \item[$(\mathsf{asso})$] $((t+r)+s)=(t+(r+s))$. We have
    \[
      \vcenter{
        \infer{\vdash((t+r)+s):S(A)}
        {
          \infer{\vdash(t+r):S(A)}{\vdash t:S(A) &\vdash r:S(A)}
          &
          \vdash s:S(A)
        }
      }
      \qquad\textrm{and}\qquad
      \vcenter{
        \infer{\vdash(t+(r+s)):S(A)}
        {
          \vdash t:S(A)
          &
          \infer{\vdash (r+s):S(A)}{\vdash r:S(A) & \vdash s:S(A)}
        }
      }
    \]
    Then
    \begin{center}
      \begin{tikzcd}[column sep=5mm,
        execute at end picture={
          \path (\tikzcdmatrixname-2-3) -- (\tikzcdmatrixname-4-3) coordinate[pos=0.5](aux1)
          (\tikzcdmatrixname-2-5) -- (\tikzcdmatrixname-4-5) coordinate[pos=0.5](aux2)
          (aux1) -- (aux2) node[midway,sloped,blue]{\small\eqref{asso:func}};
          \path (aux1) -- (aux2) node[midway,sloped,blue,yshift=1.6cm]{\small\eqref{asso:alpha}};
          \path (aux1) -- (aux2) node[midway,sloped,blue,yshift=0.4cm,xshift=-3cm]{\small\eqref{asso:def}};
          \path (aux1) -- (aux2) node[midway,sloped,blue,yshift=0.4cm,xshift=3cm]{\small\eqref{asso:def}};
          \path (aux1) -- (aux2) node[midway,sloped,blue,yshift=-0.6cm,xshift=3cm]{\small\eqref{asso:nat}};
          \path (aux1) -- (aux2) node[midway,sloped,blue,yshift=-0.6cm,xshift=-3cm]{\small\eqref{asso:nat}};
          \path (aux1) -- (aux2) node[midway,sloped,blue,yshift=-2.1cm]{\small\eqref{asso:trivial}};
          \path (aux1) -- (aux2) node[midway,sloped,blue,yshift=-2cm,xshift=-3.1cm]{\small\eqref{asso:def}};
          \path (aux1) -- (aux2) node[midway,sloped,blue,yshift=-2cm,xshift=3.1cm]{\small\eqref{asso:def}};
        }
        ]
        &  & &[-10pt] 1^3 \arrow[rd, "t\times r\times s"] \arrow[ld,swap, "t\times r\times s"] &[-10pt] &  & \\
        &  & USA\times(USA)^2 \arrow[lld, "\Id\times\hat +"'] \arrow[d, "\Id\times p", red,dashed] \arrow[rr, "\alpha_\times", red,dashed] &                                                                     & (USA)^2\times USA \arrow[rrd, "\hat +\times\Id"] \arrow[d, "p\times\Id"', red,dashed]            &  &                                                                        \\
        (USA)^2 \arrow[rrrddd, "\hat +"', bend right] \arrow[rrdd,swap, "p", red,dashed] &  &\color{red} USA\times U(SA\oplus SA) \arrow[ll,swap, "{\Id\times U\nabla}"', red,dashed] \arrow[d, "p", red,dashed] & &\color{red} U(SA\oplus SA)\times USA \arrow[d, "p"', red,dashed] \arrow[rr, swap,"{U\nabla\times\Id}", red,dashed] &  & (USA)^2 \arrow[lllddd, "\hat +", bend left] \arrow[lldd, "p"',swap, red,dashed] \\
        &  &\color{red} U(SA\oplus(SA\oplus SA)) \arrow[rr, "U\alpha_\oplus", red,dashed] \arrow[d, "{U(\Id\oplus\nabla)}", red,dashed] & & \color{red}U((SA\oplus SA)\oplus SA) \arrow[d, "{U(\nabla\oplus\Id)}"', red,dashed] &  & \\
        &  &\color{red} U(SA\oplus SA) \arrow[rd, "{U\nabla}", red,dashed] & &\color{red} U(SA\oplus SA) \arrow[ld, "{U\nabla}"', red,dashed] &  & \\
        &  & & USA & &  & 
      \end{tikzcd}
    \end{center}
    \begin{enumerate}
    \item\label{asso:alpha} Naturality of $\alpha_\times$.
    \item\label{asso:def} Definition of $\hat +$.
    \item\label{asso:func} $U$ monoidal functor.
    \item\label{asso:nat} Naturality of $p$ with respect to $\nabla$.
    \item\label{asso:trivial} Associativity property of $\oplus$.
    \end{enumerate}
  \item[($\beta_b$)] If $b$ has type $\B^n$ and $b\in\tbasis$, then $(\lambda x{:}{\B^n}.t)b\lra (b/x)t$.
    We have,
    \[
      \vcenter{\infer{\vdash(\lambda x{:}{\B^n}.t)b:A}
        {
          \infer{\vdash\lambda x{:}{\B^n}.t:\B^n\Rightarrow A}
          {x:{\B^n}\vdash t:A}
          &
          \vdash b:\B^n
        }}
      \qquad\textrm{and}\qquad
      \vcenter{\infer{\vdash (b/x)t:A}{}}
    \]
    Then,
    \begin{center}
      \begin{tikzcd}
        1 \arrow[rr, "(b/x)t"] \arrow[rd, "b"] &                      & A \\
        & \B^n \arrow[ru, "t"] &  
      \end{tikzcd}
    \end{center}
    This diagram commutes because of Lemma~\ref{lem:substitution}.

  \item[($\beta_n$)] If $u$ has type $S\Psi$, then $(\lambda x{:}{S\Psi}.t)u\lra  (u/x)t$.
    We have,
    \[
      \vcenter{
        \infer{\vdash(\lambda x{:}{S\Psi}.t)u:A}
        {
          \infer{\vdash\lambda x{:}{S\Psi}.t:S\Psi\Rightarrow A}
          {x:{S\Psi}\vdash t:A}
          &
          \vdash u:S\Psi
        }
      }
      \qquad\textrm{and}\qquad
      \vcenter{
        \infer{\vdash (u/x)t:A}{}
      }
    \]
    Then,
    \begin{center}
      \begin{tikzcd}
        1 \arrow[rr, "(u/x)t"] \arrow[rd, "u"] &                      & A \\
        & US\Psi \arrow[ru, "t"] &  
      \end{tikzcd}
    \end{center}
    This diagram commutes because of Lemma~\ref{lem:substitution}.

  \item[($\mathsf{If}_1$)] $\ite{\ket 1}tr\lra t$. We have,
    \[
      \vcenter{
        \infer{\vdash\ite{\ket 1}tr:A}
        {
          \infer{\vdash\ite{}tr:\B\Rightarrow A}{\vdash t:A & \vdash r:A}
          &
          \infer{\vdash\ket 1:\B}{}
        }
      }
      \qquad\textrm{and}\qquad
      \vcenter{\infer{\vdash t:A}{}}
    \]
    Then,
    \begin{center}
      \begin{tikzcd}[column sep=4cm]
        {1}^2\ar[r,"{i_2\times\mathsf{curry}(\mathsf{uncurry}(f_{t,r})\circ\mathsf{swap})}"]
        &\B\times\home\B A\ar[d,"\varepsilon"]\\
        {1}\ar[u,dashed,"\lambda_\times"]\ar[r,"t"] & A
      \end{tikzcd}
    \end{center}
    Notice that
    $\mathsf{curry}(\mathsf{uncurry}(f_{t,r})\circ\mathsf{swap})$ transforms the arrow
    $\B\xlra{f_{t,r}}\home{1} A$ (which is the arrow $\ket 0\mapsto r$, $\ket
    1\mapsto t$) into an arrow ${1}\xlra{}\home\B A$, and
    hence, $\varepsilon\circ(i_2\times\mathsf{curry}(\mathsf{uncurry}(f_{t,r})\circ\mathsf{swap}))=t$.

  \item[($\mathsf{If}_0$)] Analogous to ($\mathsf{If}_1$).

  \item[($\mathsf{lin}_r^+$)] If $t$ has type $\B^n\Rightarrow A$, then
    $t(u+v)\lra tu+tv$.
    We have,
    \[
      \vcenter{
        \infer{\vdash t(u+v):S(A)}
        {
          \infer{\vdash t:S(\B^n\Rightarrow A)}{\vdash t:\B^n\Rightarrow A}
          &
          \infer{\vdash u+v:S{\B^n}}
          {
            \vdash u:S{\B^n} & \vdash v:S{\B^n}
          }
        }}
    \]
    and
    \[
      \vcenter{\infer{\vdash tu+tv:S(A)}
        {
          \infer{\vdash tu:S(A)}
          {
            \infer{\vdash t:S(\B^n\Rightarrow A)}{\vdash t:\B^n\Rightarrow A} & \vdash u:S{\B^n}
          }
          &
          \infer{\vdash tv:S(A)}
          {
            \infer{\vdash t:S(\B^n\Rightarrow A)}{\vdash t:\B^n\Rightarrow A}  & \vdash v:S{\B^n}
          }
        }}
    \]
    \begin{center}
      \begin{tikzcd}[
        execute at end picture={
        \path
        (\tikzcdmatrixname-1-1) -- (\tikzcdmatrixname-9-1) coordinate[pos=0.5] (aux1)
        (\tikzcdmatrixname-1-3) -- (\tikzcdmatrixname-9-3) coordinate[pos=0.5] (aux2)
        (aux1) -- (aux2) node[midway,sloped,blue]{\small\eqref{linsumr:1}}
        (aux1) -- (aux2) node[midway,sloped,blue,xshift=-2cm]{\small\eqref{linsumr:2}}
        (aux1) -- (aux2) node[midway,sloped,blue,xshift=-2cm,yshift=-3.5cm]{\small\eqref{linsumr:2}}
        (aux1) -- (aux2) node[midway,sloped,blue,xshift=-2cm,yshift=-4.5cm]{\small\eqref{linsumr:3}}
        (aux1) -- (aux2) node[midway,sloped,blue,yshift=-5cm]{\small\eqref{linsumr:3}}
        (aux1) -- (aux2) node[midway,sloped,blue,xshift=4cm]{\small\eqref{linsumr:4}}
        (aux1) -- (aux2) node[midway,sloped,blue,xshift=4.5cm,yshift=1cm]{\small\eqref{linsumr:4}}
        (aux1) -- (aux2) node[midway,sloped,blue,yshift=4.5cm,xshift=-2cm]{\small\eqref{linsumr:8}} 
        (aux1) -- (aux2) node[midway,sloped,blue,yshift=4.5cm]{\small\eqref{linsumr:5}} 
        (aux1) -- (aux2) node[midway,sloped,blue,yshift=2.5cm]{\small\eqref{linsumr:6}} 
        (aux1) -- (aux2) node[midway,sloped,blue,yshift=2.5cm,xshift=2cm]{\small\eqref{linsumr:7}} 
        (aux1) -- (aux2) node[midway,sloped,blue,yshift=2.5cm,xshift=3.3cm]{\small\eqref{linsumr:8}} 
        ;
        }]
        {US{\B^n}\times US[\B^n,A]} \arrow[rr, "n"] &[-25pt] &[-45pt] {{U(S{\B^n}\otimes S([\B^n,A]))}} \arrow[rd,sloped, "Um"] &[-30pt]  \\
        &&&{{US(\B^n\times[\B^n,A])}} \arrow[d, "US\varepsilon^{\B^n}"]\\
        (US\B^n)^2\times US\home{\B^n}A \arrow[rrdddddd,"d",red,dashed,bend left] \arrow[rddd, "\Id\times\Delta", red,dashed] \arrow[uu, "\hat +\times\Id"] \arrow[r, "p\times\Id", red,dashed] & {\color{red}\begin{array}{c}U(S\B^n\oplus S\B^n)\\\times US[\B^n,A]\end{array}} \arrow[luu, "U\nabla\times\Id"',sloped,swap, red,dashed] \arrow[d, "n", red,dashed] & & USA \\
        & {\color{red}\hspace{1cm}\begin{array}{c}U((S\B^n\oplus S\B^n)\\\otimes S[\B^n,A])\end{array}} \arrow[ruuu,sloped,out=45,in=225, "U(\nabla\otimes\Id)", red,dashed,swap] \arrow[rd, "U\delta", red,dashed] & & (USA)^2 \arrow[u,swap, "\hat +"] \\
        & & {\color{red}\begin{array}{l}U((S\B^n\\\ \ \otimes
                          S[\B^n,A])\phantom{)}\\\oplus(S\B^n\\\ \ \otimes S[\B^n,A]))\end{array}} \arrow[uuuu, "U\nabla", red,dashed] & (USA)^2 \arrow[u,swap, "g_0=\Id"] \\
        & \color{red}(US{\B^n})^2\times (US\home{\B^n}A)^2 \arrow[rddd, "\Id\times\sigma\times\Id",sloped, red,dashed] & &\left(\begin{array}{l}US(\B^n\\\times\home{\B^n}A)\end{array}\right)^2 \arrow[u,"(US\varepsilon^{\B^n})^2",swap] \arrow[uuuu,swap, "\hat +", red,dashed, bend left] \\
        {(US{\B^n})^2\times[\B^n,A]} \arrow[uuuu, "g_0\times \eta"] \arrow[rd, "\Id\times\Delta",sloped, red,dashed] & & & \\
        1 \arrow[u, "u\times v\times t"] \arrow[d,"u\times t\times v\times t"] \arrow[r, "u\times v\times t\times t", red,dashed] & {\color{red} {(US{\B^n})^2\times[\B^n,A]^2}} \arrow[ld, "\Id\times\sigma\times\Id",sloped,swap, red,dashed] \arrow[uu, "\Id\times\eta^2", red,dashed] & & \left(\begin{array}{l}U(S{\B^n}\otimes\\ S(\home{\B^n}A))\end{array}\right)^2 \arrow[uu,swap,"(Um)^2"] \arrow[luuuuuuu, "\hat +",swap,out=135,in=315, red,dashed] \arrow[luuu, "p",out=180,in=270,swap, red,dashed] \\
        (US{\B^n}\times\home{\B^n}A)^2 \arrow[rr, "(\Id\times\eta)^2"] & &
        \left(\begin{array}{l} US{\B^n}\times\\ US\home{\B^n}A\end{array}\right)^2 \arrow[ru, "n\times n",sloped] & 
\end{tikzcd}
    \end{center}
    \begin{enumerate}
    \item\label{linsumr:8} Definition of $\hat +$.
    \item\label{linsumr:5} Naturality of $\nabla$.
    \item\label{linsumr:6} \axiomd.
    \item\label{linsumr:7} 
      Let $f=\pi_1\otimes\Id$, $g=\pi_2\otimes\Id$, and $A=B=S\B^n\otimes [\B^n,A]$.
      Hence,
      \begin{align*}
        \nabla\circ\delta &=[\Id,\Id]\circ\langle f,g\rangle\\
                          &=[\Id,\Id]\circ\Id\circ\langle f,g \rangle\\
                          &=[\Id,\Id]\circ(i_A\circ\pi_A+i_B\circ\pi_B)\circ\langle  f,g \rangle\\
                          &=(([\Id,\Id]\circ i_A\circ\pi_A)+([\Id,\Id]\circ i_B\circ\pi_B))\circ\langle f,g \rangle\\
                          &=(\pi_A+\pi_B)\circ\langle f,g \rangle\\
                          &=\pi_A\circ\langle f,g \rangle+\pi_B\circ\langle f,g \rangle\\
                          &=f+g\\
                          &=\pi_1\otimes\Id+\pi_2\otimes\Id\\
                          &=\nabla\otimes\Id
      \end{align*}
    \item\label{linsumr:4} Naturality of $\hat +$.
    \item\label{linsumr:2} Naturality of $\Delta$.
    \item\label{linsumr:1} Definition of $d$.
    \item\label{linsumr:3} Naturality of $\sigma$.
    \end{enumerate}

  \item[$(\mathsf{lin}_r^ \alpha)$] If $t$ has type $\B^n\Rightarrow A$, then
    $t(\alpha.u)\lra \alpha.(tu)$. We have,
    \[
      \vcenter{
        \infer{\vdash t(\alpha.u):S(A)}
        {
          \infer{\vdash t:S(\B^n\Rightarrow A)}{\vdash t:\B^n\Rightarrow A}
          &
          \infer{\vdash\alpha.u:S{\B^n}}{\vdash u:S{\B^n}}
        }
      }
      \qquad\textrm{and}\qquad
      \vcenter{
        \infer{\vdash\alpha.(tu):S(A)}
        {
          \infer{\vdash tu:S(A)}
          {
            \infer{\vdash t:S(\B^n\Rightarrow A)}{\vdash t:\B^n\Rightarrow A}
            &
            \vdash u:S{\B^n}
          }
        }
      }
    \]
    Then,
    \begin{center}
      \begin{tikzcd} [
        execute at end picture={
          \path
          (\tikzcdmatrixname-1-1) -- (\tikzcdmatrixname-9-1) coordinate[pos=0.5](aux1)
          (\tikzcdmatrixname-1-4) -- (\tikzcdmatrixname-9-4) coordinate[pos=0.5](aux2)
          (aux1) -- (aux2) node[midway,blue,yshift=5mm]{\small\eqref{linralpha:B}}
          (aux1) -- (aux2) node[midway,blue,xshift=8mm,yshift=3.5cm]{\small\eqref{linralpha:Azero}}
          (aux1) -- (aux2) node[midway,blue,xshift=-1cm,yshift=2.5cm]{\small\eqref{linralpha:A}}
          (aux1) -- (aux2) node[midway,blue,xshift=-17mm,yshift=1.5cm]{\small\eqref{linralpha:B}}
          (aux1) -- (aux2) node[midway,blue,xshift=25mm,yshift=1.8cm]{\small\eqref{linralpha:A}}
          (aux1) -- (aux2) node[midway,blue,xshift=27mm,yshift=1cm]{\small\eqref{linralpha:A}}
          (aux1) -- (aux2) node[midway,blue,xshift=-35mm,yshift=5mm]{\small\eqref{linralpha:F}}
          (aux1) -- (aux2) node[midway,blue,xshift=-5mm,yshift=-1.5cm]{\small\eqref{linralpha:C}}
          (aux1) -- (aux2) node[midway,blue,xshift=27mm,yshift=-2.1cm]{\small\eqref{linralpha:E}}
          (aux1) -- (aux2) node[midway,blue,xshift=-3.7cm,yshift=-2.1cm]{\small\eqref{linralpha:G}}
          (aux1) -- (aux2) node[midway,blue,xshift=20mm,yshift=-3.7cm]{\small\eqref{linralpha:G}}
          (aux1) -- (aux2) node[midway,blue,xshift=-20mm,yshift=-3.1cm]{\small\eqref{linralpha:C}}
          ;
        }]
        1\arrow[rr, "u\times t"]  &[-35pt] &[-40pt]{US{\B^n}\times{[\B^n,A]}} \arrow[dr, "U\lambda\times\eta"] \arrow[lld,swap, "\Id\times\eta"] &[-20pt] {\ } \\
        {US{\B^n}\times US{[\B^n,A]}} \arrow[rrr, "U\lambda\times\Id", red,dashed] \arrow[d, "n"'] & & & {U(S{\B^n}\otimes\I)\times US{[\B^n,A]}} \arrow[d,"U(\Id\otimes\alpha)\times\Id"] \arrow[ld, "n", red,dashed] \\
        {U(S{\B^n}\otimes S{[\B^n,A]})} \arrow[rr, "U(\lambda\otimes\Id)", red,dashed] \arrow[d, "Um"'] \arrow[rdd, "U\lambda", red,dashed] & &\color{red} {U(S\B^n\otimes\I\otimes S[\B^n,A])} \arrow[ldd, "U(\Id\otimes\sigma)",red,dashed,out=215,in=90,sloped] \arrow[d, "U(\Id\otimes\alpha\otimes\Id)", red,dashed] & {U(S{\B^n}\otimes\I)\times US{[\B^n,A]}} \arrow[d, "U\lambda^{-1}\times\Id"] \arrow[ld, "n", red,dashed] \\
        {US(\B^n\times{[\B^n,A]})} \arrow[dd, "US\varepsilon^{\B^n}"'] \arrow[rddd, "U\lambda",out=290,in=168, red,dashed] & & \color{red}{U(S\B^n\otimes\I\otimes S[\B^n,A])} \arrow[dd, "U(\Id\otimes\sigma)", red,dashed] \arrow[rdd, "U(\lambda^{-1}\otimes\Id)", red,dashed] & {US{\B^n}\times US{[\B^n,A]}} \arrow[dd, "n"] \\
        & \color{red}{U(S\B^n\otimes S[\B^n,A]\otimes\I)} \arrow[dd, "U(m\otimes\Id)"', red,dashed] \arrow[rd, "U(\Id\otimes\Id\otimes\alpha)",sloped,out=290,in=100,red,dashed] & & \\
        USA \arrow[dd, "U\lambda"'] & &\color{red} {U(S\B^n\otimes S[\B^n,A]\otimes\I)} \arrow[dd, "U(m\otimes\Id)", red,dashed] \arrow[r, "U\lambda^{-1}", red,dashed] & {U(S{\B^n}\otimes S[\B^n,A])} \arrow[dd, "Um"] \\
        & \color{red}{U(S(\B^n\times[\B^n,A])\otimes\I)} \arrow[ld,sloped,,swap, "U(S\varepsilon^{\B^n}\otimes\Id)", red,dashed] \arrow[rd, "U(\Id\otimes\alpha)", red,dashed] & & \\
        U(SA\otimes\I) \arrow[d, "U(\Id\otimes\alpha)"'] & &\color{red} {U(S(\B^n\times[\B^n,A])\otimes\I)} \arrow[r, "U\lambda^{-1}", red,dashed] \arrow[lld, "U(S\varepsilon^{\B^n}\otimes\Id)", red,dashed] & {US(\B^n\times{[\B^n,A]})} \arrow[d, "US\varepsilon^{\B^n}"] \\
        U(SA\otimes\I) \arrow[rrr, "U\lambda^{-1}"'] & & & USA                                                                                                   
      \end{tikzcd}
    \end{center}
    \begin{enumerate}
    \item\label{linralpha:Azero} Functoriality of $\times$.
    \item\label{linralpha:A} Naturality of $n$.
    \item\label{linralpha:B} Coherence of $\sigma$.
    \item\label{linralpha:F} Naturality of $\lambda$ with respect to $m$.
    \item\label{linralpha:C} Functoriality of $\otimes$.
    \item\label{linralpha:G} Naturality of $\lambda$ with respect to $S\varepsilon^{\B^n}$.
    \item\label{linralpha:E} Naturality of $\lambda^{-1}$ with respect to $m$.
    \end{enumerate}

  \item[$(\mathsf{lin}^0_r)$] If $t$ has type $\B^n\Rightarrow A$, then
    $t\z[\B^n]\lra \z$. We have,
    \[
      \vcenter{
        \infer{\vdash t\z[\B^n]:S(A)}
        {
          \infer{\vdash t:S(\B^n\Rightarrow A)}{\vdash t:\B^n\Rightarrow A}
          &
          \infer{\vdash \z[\B^n]:S{\B^n}}{}
        }
      }
      \qquad\textrm{and}\qquad
      \vcenter{\infer{\vdash\z:S(A)}{}}
    \]
    Then,
    \begin{center}
      \begin{tikzcd}[row sep=1cm,
        execute at end picture={
          \path
          (\tikzcdmatrixname-1-1) -- (\tikzcdmatrixname-5-1) coordinate[pos=0.5](aux1)
          (\tikzcdmatrixname-1-4) -- (\tikzcdmatrixname-5-4) coordinate[pos=0.5](aux2)
          (aux1) -- (aux2) node[midway,blue,yshift=1cm,xshift=-3.6cm]{\small\eqref{exaxiom:B}}
          (aux1) -- (aux2) node[midway,blue,yshift=8mm]{\small\eqref{exaxiom:D}}
          (aux1) -- (aux2) node[midway,blue,xshift=-2.2cm,yshift=5mm]{\small\eqref{exaxiom:F}}
          (aux1) -- (aux2) node[midway,blue,yshift=2.5mm,xshift=2.5cm]{\small\eqref{exaxiom:E}} 
          (aux1) -- (aux2) node[midway,blue,xshift=3.8cm]{\small\eqref{linrzero:D}}
          (aux1) -- (aux2) node[midway,blue,yshift=-1cm]{\small\eqref{exaxiom:G}}
          (aux1) -- (aux2) node[midway,blue,yshift=-2.5cm,xshift=-3.5cm]{\small\eqref{linrzero:E}}
          (aux1) -- (aux2) node[midway,blue,yshift=2.3cm]{\small\eqref{exaxiom:D}}
          ;
        }]
        {US{\B^n}\times{[\B^n,A]}} \arrow[r, "\Id\times \eta",sloped] &[-7pt] {US{\B^n}\times US{[\B^n,A]}} \arrow[r, "n",sloped] &[-8pt] {U(S{\B^n}\otimes S{[\B^n,A]})} \arrow[r, "Um",sloped] &[-7pt] {US(\B^n\times{[\B^n,A]})} \arrow[dddd, "US\varepsilon^{\B^n}",sloped] \\
         & \color{red}US1\times US1 \arrow[u, "U\mathbf 0\times USt", red,dashed,sloped] \arrow[r, "n", red,dashed,sloped] & \color{red}U(S1\otimes S1) \arrow[u, "U(\mathbf 0\otimes St)", red,dashed,sloped] & \\
         & \color{red}US1\times U\I \arrow[r, "n", red,dashed,sloped] \arrow[u, "\Id\times Um_{\I}"', red,dashed,sloped]  & \color{red}U(S1\otimes\I) \arrow[u, "U(\Id\otimes m_{\I})", red,dashed,sloped]  & \\
         US1\times 1 \arrow[uuu, "U\mathbf 0\times t",sloped] \arrow[rrd, "\lambda_\times^{-1}"', red,dashed, bend right=5,sloped] \arrow[ru, "\Id\times n_1",swap, red,dashed,sloped] \arrow[ruu, "\Id\times\eta",swap, red,dashed,sloped] & & & \\
         1 \approx 1^2 \arrow[rr, "\eta",sloped,near start] \arrow[u,swap, "\eta\times\Id",swap,sloped] & & US1\arrow[ull,"\lambda_\times",red,dashed,sloped,bend right=5]\arrow[uu,"U\lambda",red,dashed]\arrow[r,"U\mathbf 0"]\arrow[uuuu,"U\mathbf 0",red,dashed,bend right=50] & USA 
\end{tikzcd}
    \end{center}
    \begin{enumerate}
    \item\label{exaxiom:B} Naturality of $\eta$ and functoriality of the product.
    \item\label{exaxiom:D} Naturality of $n$.
    \item\label{exaxiom:F} Property of monoidal adjunctions.
    \item\label{exaxiom:E} Notice that $\mathbf 0\otimes (St\circ m_I)=\mathbf 0$, hence, this diagram commutes by property of $\mathbf 0$.
    \item\label{linrzero:D} Property of the morphism $\mathbf 0$.
    \item\label{exaxiom:G} Property of monoidal categories.
    \item\label{linrzero:E} Naturality of $\lambda_\times$.
    \end{enumerate}

  \item[$(\mathsf{lin}^+_l)$] $(t+u)v\lra (tv+uv)$. This case is analogous
    to $(\mathsf{lin}^+_r)$. Notice that the axiom is valid only for $\hat
    +\times\Id$, not for $\Id\times\hat +$, however, one can be transformed
    into the other by using a swap.

  \item[$(\mathsf{lin}^\alpha_l)$] $(\alpha.t)u\lra \alpha.(tu)$. This case
    is analogous to $(\mathsf{lin}^\alpha_r)$.

  \item[$(\mathsf{lin}^0_l)$]  $\z[\B^n\Rightarrow A]t\lra \z$. This case is
    analogous to $(\mathsf{lin}^0_r)$.

  \item[$(\mathsf{neutral})$] $(\z+t)\lra  t$. We have
    \[
      \infer{\vdash\z+t:S(A)}
      {
        \infer{\vdash\z:S(A)}{}
        &
        \vdash t:S(A)
      }
      \qquad\textrm{and}\qquad
      \vdash t:S(A)
    \]
    Then, 
    \begin{center}
      \begin{tikzcd}[
        execute at end picture={
          \path
          (\tikzcdmatrixname-1-1) -- (\tikzcdmatrixname-3-1) coordinate[pos=0.5](aux1)
          (\tikzcdmatrixname-1-4) -- (\tikzcdmatrixname-3-4) coordinate[pos=0.5](aux2)
          (aux1) -- (aux2) node[midway,blue,xshift=2.5mm,yshift=-5mm]{\small\eqref{neutral:A}}
          (aux1) -- (aux2) node[midway,blue,xshift=-2cm,yshift=-5mm]{\small\eqref{neutral:B}}
          (aux1) -- (aux2) node[midway,blue,yshift=-1.2cm]{\small\eqref{neutral:C}}
          (aux1) -- (aux2) node[midway,blue,yshift=-5mm,xshift=2.4cm]{\small\eqref{neutral:D}}
          (aux1) -- (aux2) node[midway,blue,xshift=-1.4cm,yshift=6mm]{\small\eqref{neutral:E}}
          (aux1) -- (aux2) node[midway,blue,xshift=1.9cm,yshift=6mm]{\small\eqref{neutral:G}}
          (aux1) -- (aux2) node[midway,blue,xshift=4.1cm,yshift=6mm]{\small\eqref{neutral:H}}
          ;
        }]
        1\times 1 \arrow[r, "\eta\times t"] \arrow[rd, "t\times t", red,dashed] & US1\times USA \arrow[r, "U\mathbf 0\times\Id"] & USA\times USA \arrow[r, "g_0=\Id"] & USA\times USA \arrow[dd, "\hat +", bend left=70] \arrow[d, "p", red,dashed] \\
        &\color{red} USA\times USA \arrow[ru, "U\mathbf 0\times U\Id", red,dashed] \arrow[r, "p", red,dashed] &\color{red} U(SA\oplus SA) \arrow[r, "U(\mathbf 0\oplus\Id)", red,dashed] &\color{red} U(SA\oplus SA) \arrow[d, "U\nabla", red,dashed] \\
        1 \arrow[uu, "\lambda_\times{\color{red} = \Delta}"] \arrow[rrr, "t", bend right=10,swap] \arrow[rr, "t", red,dashed] & &\color{red} USA \arrow[r, "U(\mathbf 0+\Id)", red,dashed] \arrow[u, "U\Delta", red,dashed] \arrow[lu, "\Delta", red,dashed] & USA                                                                 
      \end{tikzcd}
    \end{center}
    \begin{enumerate}
    \item\label{neutral:E} By \axiomz\ and functoriality of product.
    \item\label{neutral:G} Naturality of $p$.
    \item\label{neutral:H} Definition of $\hat +$.
    \item\label{neutral:B} Naturality of $\Delta$.
    \item\label{neutral:A} $U$ preserves product.
    \item\label{neutral:D} Property of sum in an additive category.
    \item\label{neutral:C} The map $\mathbf 0$ is neutral with respect to the sum.
    \end{enumerate}

  \item[$(\mathsf{unit})$] $1.t\lra t$. We have
    \[
      \infer{\vdash 1.t:S(A)}
      {
        \vdash t:S(A)
      }
      \qquad\textrm{and}\qquad
      \vdash t:S(A)
    \]
    Then,
    \begin{center}
      \begin{tikzcd}[column sep=15mm]
        USA\ar[r,"U\lambda"]\ar[rrd,red,dashed,"\Id"] &U(SA\otimes\I)\ar[r,"U(\Id\otimes 1)"] & U(SA\otimes\I)\ar[d,"U\lambda^{-1}"]\\
        1\ar[u,"t"]\ar[rr,"t"] & & USA
      \end{tikzcd}
    \end{center}

  \item[$(\mathsf{zero}_\alpha)$] If $t$ has type $A$, $0.t\lra \z$. We have
    \[
      \vcenter{\infer{\vdash 0.t:S(A)}{\vdash t:S(A)}}
      \qquad\textrm{and}\qquad
      \vcenter{\infer{\vdash\z:S(A)}{}}
    \]
    Then,
    \begin{center}
      \begin{tikzcd}[column sep=15mm,row sep=1cm,
        execute at end picture={
          \path
	  (\tikzcdmatrixname-1-1) -- (\tikzcdmatrixname-2-1) coordinate[pos=0.5](aux1)
          (\tikzcdmatrixname-1-3) -- (\tikzcdmatrixname-2-3) coordinate[pos=0.5](aux2)
	  (aux1) -- (aux2) node[midway,blue,xshift=-2cm,yshift=-3mm]{\small\eqref{zeroalpha:A}}
	  (aux1) -- (aux2) node[midway,blue,xshift=2cm]{\small\eqref{zeroalpha:D}}
          (aux1) -- (aux2) node[midway,blue,yshift=4mm]{\small\eqref{zeroalpha:D}}
	  ;
        }]
	USA\ar[r,"U\lambda"]\ar[rrd,red,dashed,"U\mathbf 0",out=315,in=165]\ar[rr,bend right=20,red,dashed,"U\mathbf 0"] & U(SA\otimes\I)\ar[r,"U(\Id\otimes 0){\color{red}=U\mathbf 0}"] & U(SA\otimes\I)\ar[d,"U(\lambda^{-1})"]\\
	1\ar[r,"\eta"]\ar[u,"t"] & US1\ar[r,"U\mathbf 0"] & USA
      \end{tikzcd}
    \end{center}
    \begin{enumerate}
    \item\label{zeroalpha:A} \axiomz.
    \item\label{zeroalpha:D} Property of the map $\mathbf 0$.
    \end{enumerate}

  \item[$(\mathsf{zero})$] $\alpha.\z\lra \z$. We have
    \[
      \vcenter{
        \infer{\vdash\alpha.\z:S(A)}
        {
          \infer{\vdash\z:S(A)}{}
        }
      }
      \qquad\textrm{and}\qquad
      \vcenter{
        \infer{\vdash\z:S(A)}{}
      }
    \]
    Then
    \begin{center}
      \begin{tikzcd}[
	execute at end picture={
	  \path
	  (\tikzcdmatrixname-1-1) -- (\tikzcdmatrixname-2-1) coordinate[pos=0.5] (aux1)
	  (\tikzcdmatrixname-1-3) -- (\tikzcdmatrixname-2-3) coordinate[pos=0.5] (aux2)
	  (aux1) -- (aux2) node[midway,blue] {\small\eqref{zero:A}}
	  ;
	  \path
	  (\tikzcdmatrixname-1-4) -- (\tikzcdmatrixname-2-4) coordinate[pos=0.3] (aux1)
	  (\tikzcdmatrixname-1-5) -- (\tikzcdmatrixname-2-5) coordinate[pos=0.3] (aux2)
	  (aux1) -- (aux2) node[midway,blue] {\small\eqref{zero:B}}
	  ;
      }]
	1 \arrow[d, "\eta"] \arrow[r, "\eta"] & US1 \arrow[r, "U\mathbf 0"] \arrow[rrrd, "U\mathbf 0", red,dashed] & US(A) \arrow[r, "U(\lambda)"] & U(S(A)\otimes\I) \arrow[r, "U(\Id\otimes\alpha)"] & U(SUS(A)\otimes\I) \arrow[d, "U(\lambda^{-1})"] \\
	US1 \arrow[rrrr, "U\mathbf 0"] & & {\ }& {\ } & US(A)
\end{tikzcd}
    \end{center}
    \begin{enumerate}
      \item\label{zero:A} Same arrows.
      \item\label{zero:B} Property of the morphism $\mathbf 0$.
    \end{enumerate}

  \item[$(\mathsf{prod})$] $\alpha.(\beta.t)\lra (\alpha\beta).t$. We have
    \[
      \vcenter{
        \infer{\vdash\alpha.(\beta.t):S(A)}
        {
          \infer{\vdash\beta.t:S(A)}
          {\vdash t:S(A)} 
        }
      }
      \qquad\textrm{and}\qquad
      \vcenter{
        \infer{\vdash(\alpha\beta).t:S(A)}
        {\vdash t:S(A)}
      }
    \]
    Then
    \begin{center}
      \begin{tikzcd}[column sep=15mm]
        U(SA\otimes\I)\ar[rr,"U(\Id\otimes\beta)"] && U(SA\otimes\I)\ar[r,"U\lambda^{-1}"]\ar[rd,red,dashed,"\Id"] & USA\ar[d,"U\lambda"]\\
        USA\ar[u,"U\lambda"] &&& U(SA\otimes\I)\ar[d,"U(\Id\otimes\alpha)"]\\
        {1}\ar[d,"t"]\ar[u,"t"] &&& U(SA\otimes\I)\ar[d,"U\lambda^{-1}"]\\
        USA\ar[r,"U\lambda"] &
        U(SA\otimes\I)\ar[r,"U(\Id\otimes(\alpha.\beta))"]& U(SA\otimes\Id)\ar[r,"U\lambda^{-1}"]& USA
      \end{tikzcd}
    \end{center}
    This diagram commutes by functoriality of $\otimes$.

  \item[$(\alpha\mathsf{dist})$] $\alpha.(t+u)\lra \alpha.t+\alpha.u$. We
    have
    \[
      \vcenter{
        \infer{\vdash\alpha.(t+u):S(A)}
        {
          \infer{\vdash t+u:S(A)}
          {\vdash t:S(A) & \vdash u:S(A)}
        }
      }
      \qquad\textrm{and}\qquad
      \vcenter{
        \infer{\vdash\alpha.t+\alpha.u:S(A)}
        {
          \infer{\vdash\alpha.t:S(A)}{\vdash t:S(A)}
          &
          \infer{\vdash\alpha.u:S(A)}{\vdash u:S(A)}
        }
      }
    \]
    Then
    \begin{center}
      \begin{tikzcd}[column sep=13mm]
        (USA)^2\ar[r,"\hat +"] & USA\ar[r,"U\lambda"]& U(SA\otimes\I)\ar[d,"U(\Id\otimes\alpha)"]\\
        (USA)^2\ar[r,"(U\lambda)^2"]\ar[u,"g_0"] &
        (U(SA\otimes\I))^2\ar[ru,red,dashed,"\hat +"]\ar[dd,"(U(\Id\otimes\alpha))^2"]& U(SA\otimes\I)\ar[d,"U\lambda^{-1}"]\\
        {1}^2\ar[u,"t\times u"] && USA\\
        {1}\ar[u,"\lambda_\times"] & (U(SA\otimes\I))^2\ar[ruu,red,dashed,"\hat +",swap]\ar[r,"(U\lambda^{-1})^2"]& USA^2\ar[u,"\hat +"]\\
      \end{tikzcd}
    \end{center}
    The three subdiagrams are valid by the naturality of $\hat +$.

  \item[$(\mathsf{fact})$] $(\alpha.t+\beta.t)\lra (\alpha+\beta).t$. We have
    \[
      \vcenter{
        \infer{\vdash(\alpha.t+\beta.t):S(A)}
        {
          \infer{\vdash\alpha.t:S(A)}{\vdash t:S(A)}
          &
          \infer{\vdash\beta.t:S(A)}{\vdash t:S(A)}
        }
      }
      \qquad\textrm{and}\qquad
      \vcenter{
        \infer{\vdash(\alpha+\beta).t:S(A)}{\vdash t:S(A)}
      }
    \]
    Then
    \begin{center}
      \begin{tikzcd}[
        execute at end picture={
          \path
          (\tikzcdmatrixname-1-2) -- (\tikzcdmatrixname-5-2) coordinate[pos=0.5](aux1)
          (\tikzcdmatrixname-1-5) -- (\tikzcdmatrixname-5-5) coordinate[pos=0.5](aux2)
          (aux1) -- (aux2) node[midway,blue,xshift=-2mm]{\small\eqref{soundfact:A}}
          (aux1) -- (aux2) node[midway,blue,yshift=1.7cm,xshift=-2mm]{\small\eqref{soundfact:B}}
          (aux1) -- (aux2) node[midway,blue,yshift=3mm,xshift=3.3cm]{\small\eqref{soundfact:C}}
          (aux1) -- (aux2) node[midway,blue,xshift=2.3cm]{\small\eqref{soundfact:A}}
          (aux1) -- (aux2) node[midway,blue,xshift=4cm]{\small\eqref{soundfact:D}}
          (aux1) -- (aux2) node[midway,blue,xshift=-5.45cm,yshift=5mm]{\small\eqref{soundfact:E}}
          (aux1) -- (aux2) node[midway,blue,xshift=-4.75cm,yshift=1cm]{\small\eqref{soundfact:E}}
          (aux1) -- (aux2) node[midway,blue,xshift=-3cm]{\small\eqref{soundfact:A}}
          (aux1) -- (aux2) node[midway,blue,xshift=-4cm]{\small\eqref{soundfact:F}}
          (aux1) -- (aux2) node[midway,blue,yshift=-1.5cm,xshift=-2mm]{\small\eqref{soundfact:G}}
          ;
        }]
        &[-30pt] (USA)^2 \arrow[r, "(U\lambda)^2"] &[-10pt] U(SA\otimes\I)\times U(SA\otimes\I) \arrow[r,"\raisebox{1.5mm}{$U(\Id\otimes\alpha)\times U(\Id\otimes\beta)$}"] &[-2pt] U(SA\otimes\I)\times U(SA\otimes\I) \arrow[r, "(U\lambda^{-1})^2"] \arrow[dddd, "\hat +"', red,dashed, bend left=73] \arrow[d, "p", red,dashed, bend left] & (USA)^2 \arrow[dd, "g_0=\Id",swap] \\
        1^2 \arrow[ru, "t^2"]  &  &\color{red} U((SA\otimes\I)\oplus(SA\otimes\I)) \arrow[r,"\raisebox{1.5mm}{$U((\Id\otimes\alpha)\oplus(\Id\otimes\beta))$}", red,dashed] \arrow[u, "p^{-1}", red,dashed] &\color{red} U((SA\otimes\I)\oplus(SA\otimes\I)) \arrow[ddd, "U\nabla", red,dashed, bend left=62] \arrow[u, "p^{-1}", red,dashed] & \\
        & & & & (USA)^2 \arrow[dd, "\hat +",swap]  \\
        1 \arrow[d, "t",swap] \arrow[uu, "\lambda_\times{\color{red}=\Delta}",sloped] & &\color{red}U(SA\otimes(\I\oplus\I)) \arrow[r, "U(\Id\otimes(\alpha\oplus\beta))", red,dashed] \arrow[uu, "U\delta", red,dashed]  &\color{red} U(SA\otimes(\I\oplus\I)) \arrow[uu, "U\delta", red,dashed] \arrow[d, "U(\Id\otimes\nabla)"', red,dashed] & \\
        USA \arrow[rr,swap, "U\lambda"] \arrow[ruuuu, "\Delta", red,dashed] &\ & U(SA\otimes\I) \arrow[r,swap, "U(\Id\otimes(\alpha+\beta))"] \arrow[uuuu, "\Delta", red,dashed, bend left=80] \arrow[uuu, "U\Delta", red,dashed,bend left=65] \arrow[u, "U(\Id\otimes\Delta)"', red,dashed] & U(SA\otimes\I) \arrow[r, "U\lambda^{-1}"'] & USA
      \end{tikzcd}
    \end{center}
    \begin{enumerate}
    \item \label{soundfact:E} Naturality of $\Delta$.
    \item \label{soundfact:F} Functor $U$ preserves product.
    \item \label{soundfact:A} Distributivity property given by the fact that the tensor is a left adjoint.
    \item \label{soundfact:B} Naturality of $p^{-1}$.  
    \item \label{soundfact:G} Additivity of the category.
    \item \label{soundfact:C} Definition of $\hat +$.
    \item \label{soundfact:D} Naturality of $\hat +$.
    \end{enumerate}

  \item[$(\mathsf{fact}^1)$] $(\alpha.t+t)\lra (\alpha+1).t$. This case is a
    particular case of $\mathsf{fact}$.
  \item[$(\mathsf{fact}^2)$] $(t+t)\lra 2.t$. This case is a particular case
    of $\mathsf{fact}$.

  \item[$(\mathsf{head})$] If $h\neq u\times v$, and $h\in\tbasis$, $\head\ h\times t\lra h$. We have
    \[
      \vcenter{
        \infer{\vdash\head\ h\times t:\B}{
          \infer{\vdash h\times t:\B^n}{
            \vdash  h:\B & \vdash t:\B^{n-1}
          }
        }
      }
      \qquad\textrm{and}\qquad
      \vdash h:\B
    \]
    Then
    \begin{center}
      \begin{tikzcd}
        {1}^2\ar[r,"h\times t"] &\B^n\ar[d,"\pi_1"]\\
        {1}\ar[u,"\lambda_\times"]\ar[r,"h"] & \B
      \end{tikzcd}
    \end{center}
    This diagram commutes since $\pi_1$ is just the projection.

  \item[$(\mathsf{tail})$] If $h\neq u\times v$, and $h\in\tbasis$, $\tail\ h\times t\lra t$. We have
    \[
      \vcenter{
        \infer{\vdash\tail\ h\times t:\B^{n-1}}{
          \infer{\vdash h\times t:\B^n}{
            \vdash  h:\B & \vdash t:\B^{n-1}
          }
        }
      }
      \qquad\textrm{and}\qquad
      \vdash t:\B^{n-1}
    \]
    Then
    \begin{center}
      \begin{tikzcd}
        {1}^2\ar[r,"h\times t"] &\B^n\ar[d,"\pi_2"]\\
        {1}\ar[u,"\lambda_\times"]\ar[r,"t"] & \B^{n-1}
      \end{tikzcd}
    \end{center}
    This diagram commutes since $\pi_2$ is just the projection.

  \item[($\mathsf{dist}_r^+$)] $\Uparrow_r ((r+s)\times
    u)\lra \Uparrow_r (r\times u)+\Uparrow_r (s\times u)$.

    We have,
    \[
      \vcenter{
        \infer{\vdash\Uparrow_r ((r+s)\times u):S(\Psi\times \Phi)}
        {
          \infer{\vdash(r+s)\times u:S(S\Psi\times \Phi)}
          {
            \infer{\vdash (r+s)\times u:S\Psi\times \Phi}
            {
              \infer{\vdash r+s:S\Psi}{\vdash r:S\Psi & \vdash s:S\Psi}
              &
              \vdash u:\Phi
            }
          }
        }
      }
      \quad\textrm{and}\quad
      \vcenter{
        \infer{\vdash\Uparrow_r (r\times u)+\Uparrow_r (s\times u):S(\Psi\times \Phi)}
        {
          \infer{\vdash\Uparrow_r (r\times u):S(\Psi\times \Phi)}
          {
            \infer{\vdash r\times u:S(S\Psi\times \Phi)}
            {
              \infer{\vdash r\times u:S\Psi\times \Phi}
              {
                \vdash r:S\Psi
                &
                \vdash u:\Phi
              }
            }
          }
          &
          \infer{\vdash\Uparrow_r (s\times u):S(\Psi\times \Phi)}
          {
            \infer{\vdash s\times u:S(S\Psi\times \Phi)}
            {
              \infer{\vdash s\times u:S\Psi\times \Phi}
              {
                \vdash s:S\Psi
                &
                \vdash u:\Phi
              }
            }
          }
        }
      }
    \]
    Then
    
    \begin{center}
      \begin{tikzcd}[column sep=12mm,
        execute at end picture={
          \path (\tikzcdmatrixname-2-2) -- (\tikzcdmatrixname-4-2) coordinate[pos=0.5] (aux1)
          (\tikzcdmatrixname-2-3) -- (\tikzcdmatrixname-4-3) coordinate[pos=0.5] (aux2)
          (aux1) -- (aux2) node[midway,sloped,blue]{\small\eqref{distrplus:A} }
          (aux1) -- (aux2) node[xshift=-4cm,yshift=1.7cm,midway,sloped,blue]{\small\eqref{distrplus:B}}
          (\tikzcdmatrixname-1-2) -- (\tikzcdmatrixname-2-2) coordinate[pos=0.5] (aux1)
          (\tikzcdmatrixname-1-4) -- (\tikzcdmatrixname-2-4) coordinate[pos=0.5] (aux2)
          (aux1) -- (aux2) node[midway,sloped,blue]{\small\eqref{distrplus:C} };
          \path (aux1) -- (aux2)
          node[midway,blue,xshift=2cm,yshift=-1cm]{\small\eqref{distrplus:D}};
          \path (\tikzcdmatrixname-2-1) -- (\tikzcdmatrixname-4-1) coordinate[pos=0.5] (aux1)
          (\tikzcdmatrixname-2-2) -- (\tikzcdmatrixname-4-2) coordinate[pos=0.5] (aux2)
          (aux1) -- (aux2) node[midway,sloped,blue]{\small \eqref{distrplus:E}};
          \path (\tikzcdmatrixname-4-1) -- (\tikzcdmatrixname-5-1) coordinate[pos=0.5] (aux1)
          (\tikzcdmatrixname-4-2) -- (\tikzcdmatrixname-5-2) coordinate[pos=0.5] (aux2)
          (aux1) -- (aux2) node[midway,sloped,blue]{\small \eqref{distrplus:C}};
          \path (\tikzcdmatrixname-3-3) -- (\tikzcdmatrixname-4-3) coordinate[pos=0.5] (aux1)
          (\tikzcdmatrixname-3-4) -- (\tikzcdmatrixname-4-4) coordinate[pos=0.5] (aux2)
          (aux1) -- (aux2) node[midway,sloped,blue]{\small \eqref{distrplus:H}};
          \path (\tikzcdmatrixname-4-2) -- (\tikzcdmatrixname-5-2) coordinate[pos=0.5] (aux1)
          (\tikzcdmatrixname-4-3) -- (\tikzcdmatrixname-5-3) coordinate[pos=0.5] (aux2)
          (aux1) -- (aux2) node[midway,sloped,blue]{\small \eqref{distrplus:C}};
          \path (\tikzcdmatrixname-4-3) -- (\tikzcdmatrixname-5-3) coordinate[pos=0.5] (aux1)
          (\tikzcdmatrixname-4-4) -- (\tikzcdmatrixname-5-4) coordinate[pos=0.5] (aux2)
          (aux1) -- (aux2) node[midway,sloped,blue]{\small \eqref{distrplus:D}};
        }]
        US\Psi\times\Phi \arrow[r, "\eta"] \arrow[rrd,red,dashed,"\Id\times\eta"] &[-30pt] US(US\Psi\times\Phi) \arrow[r, "US(\Id\times\eta)"]  &[-20pt] US(US\Psi\times US\Phi) \arrow[r, "US(n)"] &[-25pt] USU(S\Psi\otimes S\Phi) \arrow[d, "USUm"]\\
        (US\Psi)^2\times\Phi \arrow[u, "\hat +\times\Id"] \arrow[dd, "d"'] \arrow[r,red,dashed, "\Id\times\eta"] & \color{red}(US\Psi)^2\times US\Phi \arrow[dd,red,dashed, "d"] \arrow[r,red,dashed, "\hat +\times\Id"'] & \color{red}US\Psi\times US\Phi \arrow[d,red,dashed, "n"] & USUS(\Psi\times\Phi) \arrow[d, "\mu"] \\
        & &\color{red} U(S\Psi\otimes S\Phi)
        \arrow[ruu,red,dashed,"\eta"] \arrow[r,red,dashed, "Um"] &
        US(\Psi\times\Phi)\arrow[u,red,dashed,"\eta",bend left] \\
        (US\Psi\times\Phi)^2 \arrow[d, "\eta^2"] \arrow[r,red,dashed,"(\Id\times\eta)^2"] & \color{red}(US\Psi\times US\Phi)^2 \arrow[r,red,dashed, "n^2"] \arrow[d,red,dashed, "\eta^2"'] &\color{red} (U(S\Psi\otimes S\Phi))^2 \arrow[d,red,dashed, "\eta^2"] \arrow[r,red,dashed, "(Um)^2"] \arrow[u,red,dashed, "\hat +"] & (US(\Psi\times\Phi))^2 \arrow[u, "\hat +"] \arrow[d,red,dashed, "\eta^2", swap,bend right] \\
        (US(US\Psi\times\Phi))^2 \arrow[r,swap, "\raisebox{1.5mm}{$(US(\Id\times\eta))^2$}"]                                                & (US(US\Psi\times US\Phi))^2 \arrow[r, "\raisebox{1.5mm}{$(USn)^2$}",swap]                                   & (USU(S\Psi\otimes S\Phi))^2 \arrow[r, "\raisebox{1.5mm}{$(USUm)^2$}",swap]                                     & (USUS(\Psi\times\Phi))^2 \arrow[u, "\mu^2",swap]                                         
      \end{tikzcd}
    \end{center}
    \begin{enumerate}
    \item \label{distrplus:B} Functoriality of the product.
    \item \label{distrplus:C} Naturality of $\eta$.
    \item \label{distrplus:D} Naturality of $\eta$ (remark that by the axioms of monads, $\mu\circ\eta=\Id$).
    \item \label{distrplus:E} Naturality of $d$.
    \item \label{distrplus:A} \axiomd\ and definition of $\hat +$.
    \item \label{distrplus:H} Naturality of $\hat +$.
    \end{enumerate}
  \item[($\mathsf{dist}_l^+$)] $\Uparrow_\ell u\times
    (r+s)\lra \Uparrow_\ell (u\times r)+\Uparrow_\ell (u\times s)$.
    Analogous to case $(\mathsf{dist}_r^+)$

  \item[$\rdistscalr$] $\Uparrow_r (\alpha.r)\times u\lra \alpha.\Uparrow_r r\times u$.
    We have,
    \[
      \vcenter{
        \infer{\vdash\Uparrow_r (\alpha.r)\times u:S(\Psi\times \Phi)}
        {
          \infer{\vdash(\alpha.r)\times u:S(S\Psi\times \Phi)}
          {
            \infer{\vdash (\alpha.r)\times u:S\Psi\times \Phi}
            {
              \infer{\vdash \alpha.r:S\Psi}{\vdash r:S\Psi}
              &
              \vdash u:\Phi
            }
          }
        }
      }
      \qquad\textrm{and}\qquad
      \vcenter{
        \infer{\vdash\alpha.\Uparrow_r r\times u:S(\Psi\times \Phi)}
        {
          \infer{\vdash\Uparrow_r (r\times u):S(\Psi\times \Phi)}
          {
            \infer{\vdash r\times u:S(S\Psi\times \Phi)}
            {
              \infer{\vdash r\times u:S\Psi\times \Phi}
              {
                \vdash r:S\Psi
                &
                \vdash u:\Phi
              }
            }
          }
        }
      }
    \]
    Then
    \begin{center}
      \begin{tikzcd}[
        execute at end picture={
          \path
          (\tikzcdmatrixname-1-1) -- (\tikzcdmatrixname-8-1) coordinate[pos=0.5](aux1)
          (\tikzcdmatrixname-1-4) -- (\tikzcdmatrixname-8-4) coordinate[pos=0.5](aux2)
          (aux1) -- (aux2) node[midway,blue,yshift=2cm]{\small\eqref{distalphar:G}}
          (aux1) -- (aux2) node[midway,blue,yshift=-1cm]{\small\eqref{distalphar:D}}
          (aux1) -- (aux2) node[midway,blue,xshift=3cm]{\small\eqref{distalphar:D}}
          (aux1) -- (aux2) node[midway,blue,xshift=-3cm,yshift=-2cm]{\small\eqref{distalphar:E}}
          (aux1) -- (aux2) node[midway,blue,yshift=-2cm]{\small\eqref{distalphar:F}}
          (aux1) -- (aux2) node[midway,blue,yshift=-2cm,xshift=2.5cm]{\small\eqref{distalphar:C}}
          (aux1) -- (aux2) node[midway,blue,yshift=-2cm,xshift=4.2cm]{\small\eqref{distalphar:A}}
          (aux1) -- (aux2) node[midway,blue,xshift=-3cm,yshift=-3cm]{\small\eqref{distalphar:D}}
          (aux1) -- (aux2) node[midway,blue,yshift=-3cm]{\small\eqref{distalphar:D}}
          (aux1) -- (aux2) node[midway,blue,xshift=3cm,yshift=-3cm]{\small\eqref{distalphar:D}}
          ;
        }]
      US\Psi\times \Phi \arrow[r, "\eta"] &[-15pt] US(US\Psi\times \Phi) \arrow[r, "U(\Id\times\eta)"] & [-8pt] US(US\Psi\times US\Phi) \arrow[r, "USn"] & [-15pt] USU(S\Psi\otimes S\Phi) \arrow[d, "USUm"] \\
      U(S\Psi\otimes\I)\times\Phi \arrow[u,swap,"U\lambda^{-1}\times\Id"] &\color{red} US(U(S\Psi\otimes\I)\times\Phi) \arrow[d, "US(\Id\times\eta)", red,dashed] & & USUS(\Psi\times \Phi) \arrow[d, "\mu"] \\
      &\color{red} US(U(S\Psi\otimes\I)\times US\Phi) \arrow[d, "USn", red,dashed] &\color{red} USUS(\Psi\times\Phi) \arrow[r, "\mu", red,dashed] & US(\Psi\times \Phi) \arrow[l, "\eta", red,dashed, bend left] \\
      &\color{red} USU(S\Psi\otimes\I\otimes S\Phi) \arrow[d, "USU(\Id\otimes\sigma)", red,dashed] &\color{red} USU(S(\Psi\times\Phi)\otimes\I) \arrow[u, "USU\lambda^{-1}", red,dashed] & U(S(\Psi\times\Phi)\otimes\I) \arrow[u, "U\lambda^{-1}"] \\
      &\color{red} USU(S\Psi\otimes S\Phi\otimes\I) \arrow[r, "\raisebox{1.5mm}{$USU(m\otimes\Id)$}", red,dashed] &\color{red} USU(S(\Psi\times\Phi)\otimes\I) \arrow[u, "USU(\Id\otimes\alpha)", red,dashed] & U(S(\Psi\times\Phi)\otimes\I) \arrow[u, "U(\Id\otimes\alpha)"] \arrow[l, "\eta", red,dashed] \\
      U(S\Psi\otimes\I)\times\Phi \arrow[r, "\Id\times\eta", red,dashed] \arrow[uuuu,sloped, "U(\Id\otimes\alpha)\times\Id"] \arrow[ruuuu,out=45,in=180,"\eta", red,dashed] &\color{red} U(S\Psi\otimes\I)\times US\Phi \arrow[r, "n", red,dashed] &\color{red} U(S\Psi\otimes\I\otimes S\Phi) \arrow[d,swap,"U(\lambda^{-1}\otimes\Id)", red,dashed] \arrow[r, "U(\Id\otimes\sigma)", red,dashed] & U(S\Psi\otimes S\Phi\otimes\I) \arrow[u, "U(m\otimes\Id)", red,dashed] \\
      US\Psi\times\Phi \arrow[u, "U\lambda\times\Id"] \arrow[rdd, "\eta",bend right] \arrow[r, "\Id\times\eta", red,dashed] &\color{red} US\Psi\times US\Phi \arrow[r, "n", red,dashed] \arrow[u,"U\lambda\times\Id", red,dashed] \arrow[d, "\eta", red,dashed] &\color{red} U(S\Psi\otimes S\Phi) \arrow[r, "Um", red,dashed] \arrow[d, "\eta", red,dashed] \arrow[ru, "U\lambda", red,dashed] & US(\Psi\times\Phi) \arrow[uu, "U\lambda", bend right=70,near end] \arrow[d, "\eta"', red,dashed, bend right] \\
      {1}^2 \arrow[u, "r\times u"] & US(US\Psi\times US\Phi) \arrow[r, "USn"] & US(U(S\Psi\otimes S\Phi)) \arrow[r, "USUm"] & USUS(\Psi\times\Phi) \arrow[u, "\mu"] \\
      & US(US\Psi\times\Phi) \arrow[u, "U(\Id\times\eta)"] & & 
\end{tikzcd}
    \end{center}
    \begin{enumerate}
    \item\label{distalphar:G} See next diagram.
    \item\label{distalphar:D} Naturality of $\eta$.
    \item\label{distalphar:E} Functoriality of product.
    \item\label{distalphar:F} Naturality of $n$.
    \item\label{distalphar:C} Coherence.
    \item\label{distalphar:A} Naturality of $\lambda$.
    \end{enumerate}

    \begin{center}
      \adjustbox{scale=.95,center}{
      \begin{tikzcd}
        [
        execute at end picture={
          \path (\tikzcdmatrixname-5-2) -- (\tikzcdmatrixname-1-4) node[midway,blue,yshift=-4mm]{\small\eqref{diag3:4}};
          \path (\tikzcdmatrixname-1-3) -- (\tikzcdmatrixname-4-3) coordinate[pos=0.5] (aux1)
          (\tikzcdmatrixname-1-4) -- (\tikzcdmatrixname-5-4) coordinate[pos=0.5] (aux2)
          (aux1) -- (aux2) node[midway,blue,xshift=6mm,yshift=-5mm]{\small\eqref{diag3:5}};
          \path (\tikzcdmatrixname-1-1) -- (\tikzcdmatrixname-2-1) coordinate[pos=0.5](aux1)
          (\tikzcdmatrixname-1-2) -- (\tikzcdmatrixname-3-2) coordinate[pos=0.5](aux2)
          (aux1) -- (aux2) node[midway,blue]{\small\eqref{diag3:1}};
          \path (aux1) -- (aux2) node[midway,blue,yshift=-1.6cm]{\small\eqref{diag3:1}};
          \path (aux1) -- (aux2) node[midway,blue,yshift=-2.7cm]{\small\eqref{diag3:2}};
          \path (aux1) -- (aux2) node[midway,blue,xshift=4.5cm]{\small\eqref{diag3:2}};
          \path (aux1) -- (aux2) node[midway,blue,xshift=7cm]{\small\eqref{diag3:3}};
          \path (aux1) -- (aux2) node[midway,blue,yshift=-4cm]{\small\eqref{diag3:3}};
          \path (aux1) -- (aux2) node[midway,blue,yshift=-3.8cm,xshift=4cm]{\small\eqref{diag3:6}};
          \path (aux1) -- (aux2) node[midway,blue,yshift=-3.8cm,xshift=7cm]{\small\eqref{diag3:7}};
        }]
        US\Psi\times\Phi\ar[r,"\eta"] & US(US\Psi\times\Phi)\ar[r,"U(\Id\times\eta)"] & US(US\Psi\times US\Phi)\ar[r,"USn"] & USU(S\Psi\otimes S\Phi)\ar[d,"USUm"]\\
        U(S\Psi\otimes\I)\times\Phi\ar[rd,red,dashed,"\eta"]\ar[u,"U\lambda^{-1}\times\Id"] &&& USUS(\Psi\times\Phi)\ar[d,"\eta"] \\
        &\color{red}US(U(S\Psi\otimes\I)\times\Phi)\ar[uu,red,dashed,"US((U\lambda^{-1})\times\Id)"]\ar[d,red,dashed,"US(\Id\times\eta)"]&& US(\Psi\times\Phi)\\
        U(S\Psi\otimes\I)\times\Phi\ar[uu,"U(\Id\otimes\alpha)\times\Id"]\ar[d,"\eta"]
        &\color{red}US(U(S\Psi\otimes\I)\times
        US\Phi)\ar[ruuu,red,dashed,"US((U\lambda^{-1})\times\Id)",out=20,in=270,sloped]\ar[d,red,dashed,"USn"] &\color{red}USU(S\Psi\otimes S\Phi\otimes\I)\ar[ruuu,red,dashed,sloped,"USU\lambda^{-1}"]\ar[rd,red,dashed,sloped,"USU(m\otimes\Id)",swap]&USUS(\Psi\times\Phi)\ar[u,"\eta"]\\
        US(U(S\Psi\otimes\I)\times\Phi)\ar[d,"US(\Id\times\eta)"]\ar[ruu,red,dashed,"US(U(\Id\otimes\alpha)\times\Id)",sloped] &\color{red}USU(S\Psi\otimes\I\otimes S\Phi)\ar[ru,red,dashed,sloped,"USU(\Id\otimes\sigma)"]\ar[rruuuu,red,dashed,sloped,"USU(\lambda^{-1}\otimes\Id)"]&& USU(S(\Psi\times\Phi)\otimes\I)\ar[u,"USU\lambda^{-1}"] \\
        US(U(S\Psi\otimes\I)\times US\Phi)\ar[r,swap,"USn"]\ar[ruu,red,dashed,"US(U(\Id\otimes\alpha)\times\Id)",sloped,bend
        right=8] & USU(S\Psi\otimes\I\otimes S\Phi)\ar[u,red,dashed,"USU(\Id\otimes\alpha\otimes\Id)"]\ar[r,swap,"USU(\Id\otimes\sigma)"] & USU(S\Psi\otimes S\Phi\otimes\I)\ar[r,swap,yshift=-1mm,"USU(m\otimes\Id)"]\ar[uu,red,dashed,"USU(\Id\otimes\alpha)"] &USU(S(\Psi\times\Phi)\otimes\I)\ar[u,"USU(\Id\otimes\alpha)"]
      \end{tikzcd}
      }
    \end{center}
    \begin{enumerate}
    \item\label{diag3:1} Naturality of $\eta$.
    \item\label{diag3:2} Functoriality of the product.
    \item\label{diag3:3} Naturality of $n$.
    \item\label{diag3:4} Coherence
    \item\label{diag3:5} Naturality of $\lambda$ and functoriality of $USU$.
    \item\label{diag3:6} Naturality of $\sigma$.
    \item\label{diag3:7} Functoriality of the tensor.
    \end{enumerate}

  \item[\rdistscall]  $\Uparrow_\ell
    u\times(\alpha.r)\lra \alpha.\Uparrow_\ell u\times r$. Analogous to
    case $\rdistscalr$.

  \item[\rdistzr] If $u$ has type $\Phi$, $\Uparrow_r\z[\Psi]\times
    u\lra\z[\Psi\times \Phi]$. We have
    \[
      \vcenter{
        \infer{\vdash\Uparrow_r\z[\Psi]\times u:S(\Psi\times \Phi)}
        {
          \infer{\vdash\z[\Psi]\times u:S(S\Psi\times \Phi)}
          {
            \infer{\vdash\z[\Psi]\times u:S\Psi\times \Phi}
            {
              \infer{\vdash\z[\Psi]:S\Psi}{}
              &
              \vdash u:\Phi
            }
          }
        }
      }
      \qquad\textrm{and}\qquad
      \vcenter{\infer{\vdash\z[\Psi\times \Phi]:S(\Psi\times \Phi)}{}}
    \]

    Then
    \begin{center}
      \begin{tikzcd}[row sep=1cm,
        execute at end picture={
          \path
          (\tikzcdmatrixname-1-1) -- (\tikzcdmatrixname-6-1) coordinate[pos=0.5](aux1)
          (\tikzcdmatrixname-1-4) -- (\tikzcdmatrixname-6-4) coordinate[pos=0.5](aux2)
          (aux1) -- (aux2) node[midway,blue,xshift=-3cm,yshift=1cm] {\small\eqref{distrzero:B}}
          (aux1) -- (aux2) node[midway,blue,xshift=-2.5cm,yshift=3cm] {\small\eqref{distrzero:C}}
          (aux1) -- (aux2) node[midway,blue,yshift=2.5cm] {\small\eqref{distrzero:D}}
          (aux1) -- (aux2) node[midway,blue,xshift=2.5cm,yshift=1.5cm] {\small\eqref{distrzero:C}}
          (aux1) -- (aux2) node[midway,blue,xshift=-2.5cm,yshift=-.7cm] {\small\eqref{distrzero:A}}
          (aux1) -- (aux2) node[midway,blue,xshift=-3mm,yshift=-.7cm] {\small\eqref{distrzero:E}}
          (aux1) -- (aux2) node[midway,blue,xshift=2.5cm,yshift=-2cm] {\small\eqref{distrzero:C}}
          (aux1) -- (aux2) node[midway,blue,xshift=-3cm,yshift=-1.8cm] {\small\eqref{distrzero:F}}
          (aux1) -- (aux2) node[midway,blue,xshift=-1cm,yshift=-3cm] {\small\eqref{distrzero:C}}
          ;
        }]
        1^2 \arrow[r, "\eta\times u"] \arrow[rdd, "\eta^2", red,dashed] & US1\times \Phi \arrow[d, "\Id\times\eta", red,dashed] \arrow[r, "U\mathbf 0\times\Id"] & US\Psi\times\Phi \arrow[r, "\eta"] \arrow[ddd, "\Id\times\eta", red,dashed] & US(US\Psi\times\Phi) \arrow[ddd, "US(\Id\times\eta)"] \\
        &\color{red} US1\times US\Phi \arrow[rdd, "U\mathbf 0\times\Id", red,dashed] & & \\
        1 \arrow[uu, "\Delta"] \arrow[d, "\eta"] &\color{red} (US1)^2 \arrow[u, "\Id\times USu",sloped, red,dashed] \arrow[d, "n", red,dashed] & & \\
        US1 \arrow[rd, "U\mathbf 0", red,dashed] \arrow[d, "U\mathbf 0"] \arrow[ru, "\Delta", red,dashed] &\color{red} U(S1\otimes S1) \arrow[d, "U(\mathbf 0\otimes Su)",sloped, red,dashed] &\color{red} US\Psi\times US\Phi \arrow[r, "\eta", red,dashed] \arrow[ld, "n", red,dashed]   & US(US\Psi\times US\Phi) \arrow[dd, "USn"]             \\
        US(\Psi\times\Phi) \arrow[d, "\eta", red,dashed, bend left=10] & \color{red}U(S\Psi\otimes S\Phi) \arrow[rrd, "\eta", red,dashed] \arrow[l, "Um", red,dashed] & & \\
        USUS(\Psi\times\Phi) \arrow[u, "\mu"] & & & USU(S\Psi\otimes S\Phi) \arrow[lll, "USUm"] 
      \end{tikzcd}
    \end{center}
    \begin{enumerate}
    \item \label{distrzero:C} Naturality of $\eta$.
    \item \label{distrzero:D} Functoriality of the product.
    \item \label{distrzero:B} Naturality of $\Delta$.
    \item \label{distrzero:A} $U(\mathbf 0\otimes Su)=U\mathbf 0$, hence, we conclude by \axiomz\ with the maps $n\circ\Delta$ and $\Id$.
    \item \label{distrzero:E} Naturality of $n$.
    \item \label{distrzero:F} Property of map $\mathbf 0$.
    \end{enumerate}

  \item[\rdistzl] If $u$ has type $\Psi$, $\Uparrow_\ell
    u\times\z[\Phi]\lra\z[\Psi\times\Phi]$. Analogous to case $\rdistzr$.

  \item[($\mathsf{dist}_\Uparrow^+$)] 
    $\Uparrow (t+u)\lra (\Uparrow t+\Uparrow u)$.
    We only give the details for $\Uparrow_r$, the case $\Uparrow_\ell$ is
    analogous.
    \[
      \vcenter{
        \infer{\vdash\Uparrow_r (t+u):S(\Psi\times \Phi)}
        {
          \infer{\vdash t+u:S(S\Psi\times \Phi)}
          {
            \vdash t:S(S\Psi\times \Phi)
            &
            \vdash u:S(S\Psi\times \Phi)
          }
        }
      }
      \qquad\textrm{and}\qquad
      \vcenter{
        \infer{\vdash\Uparrow_r  t+\Uparrow_r u:S(\Psi\times \Phi)}
        {
          \infer{\vdash\Uparrow_r  t:S(\Psi\times \Phi)}
          {\vdash t:S(S\Psi\times \Phi)}
          &
          \infer{\vdash\Uparrow_r  u:S(\Psi\times \Phi)}
          {\vdash u:S(S\Psi\times \Phi)}
        }
      }
    \]
    Then
    \begin{center}
      \begin{tikzcd}[column sep=1.8cm]
        (US(US\Psi\times \Phi))^2 \arrow[d, "(U(\Id\times\eta))^2"] \arrow[rr, "\hat +"] &  & US(US\Psi\times \Phi) \arrow[d, "U(\Id\times\eta)"] \\
        (US(US\Psi\times US\Phi))^2 \arrow[d, "(USn)^2"] \arrow[rr,red,dashed, "\hat +"]            &  & US(US\Psi\times US\Phi) \arrow[d, "USn"]            \\
        (USU(S\Psi\otimes S\Phi))^2 \arrow[d, "(USUm)^2"] \arrow[rr,red,dashed, "\hat +"]           &  & USU(S\Psi\otimes S\Phi) \arrow[d, "USUm"]           \\
        (USUS(\Psi\times \Phi))^2 \arrow[d, "\mu^2"] \arrow[rr,red,dashed, "\hat +"]                &  & USUS(\Psi\times \Phi) \arrow[d, "\mu"]              \\
        (US(\Psi\times \Phi))^2 \arrow[rr, "\hat +"]         &  & US(\Psi\times \Phi)                                
      \end{tikzcd}
    \end{center}
    This diagram commutes by naturality of $\hat +$.

  \item[($\mathsf{dist}_\Uparrow^\alpha$)] $\Uparrow (\alpha.t)\lra \alpha.\Uparrow t$.
    We only give the details for $\Uparrow_r$, the case $\Uparrow_\ell$ is
    similar.
    \[
      \vcenter{
        \infer{\vdash\Uparrow_r (\alpha.t):S(\Psi\times \Phi)}
        {
          \infer{\vdash\alpha.t:S(S\Psi\times \Phi)}
          {
            \vdash t:S(S\Psi\times \Phi)
          }
        }
      }
      \qquad\textrm{and}\qquad
      \vcenter{
        \infer{\vdash\alpha.\Uparrow_r t:S(\Psi\times \Phi)}
        {
          \infer{\vdash\Uparrow_r  t:S(\Psi\times \Phi)}
          {\vdash t:S(S\Psi\times \Phi)}
        }
      }
    \]

    Then
    \begin{center}
      \adjustbox{scale=.9,center}{
      \begin{tikzcd}[
        execute at end picture={
          \path
          (\tikzcdmatrixname-1-1) -- (\tikzcdmatrixname-5-1) coordinate[pos=0.5](aux1)
          (\tikzcdmatrixname-1-4) -- (\tikzcdmatrixname-5-4) coordinate[pos=0.5](aux2)
          (aux1) -- (aux2) node[midway,blue]{\small\eqref{distalphaup:A}}
          (aux1) -- (aux2) node[midway,blue,yshift=1.6cm]{\small\eqref{distalphaup:A}}
          (aux1) -- (aux2) node[midway,blue,yshift=-1.6cm]{\small\eqref{distalphaup:A}}
          (aux1) -- (aux2) node[midway,blue,xshift=-4cm,yshift=.6cm]{\small\eqref{distalphaup:B}}
          (aux1) -- (aux2) node[midway,blue,xshift=-4cm,yshift=-.6cm]{\small\eqref{distalphaup:B}}
          (aux1) -- (aux2) node[midway,blue,yshift=1.6cm,xshift=-4cm]{\small\eqref{distalphaup:B}}
          (aux1) -- (aux2) node[midway,blue,yshift=-1.6cm,xshift=-3cm]{\small\eqref{distalphaup:D}}
          (aux1) -- (aux2) node[midway,blue,xshift=4cm,yshift=.6cm]{\small\eqref{distalphaup:C}}
          (aux1) -- (aux2) node[midway,blue,xshift=4cm,yshift=-.6cm]{\small\eqref{distalphaup:C}}
          (aux1) -- (aux2) node[midway,blue,yshift=1.6cm,xshift=4cm]{\small\eqref{distalphaup:C}}
          (aux1) -- (aux2) node[midway,blue,yshift=-1.6cm,xshift=5cm]{\small\eqref{distalphaup:E}}
          ;
        }]
        US(US\Psi\times\Phi) \arrow[r, "U\lambda"] \arrow[d, "U(\Id\times\eta)"]  & U(S(US\Psi\times\Phi)\otimes\I) \arrow[r, "U(\Id\otimes\alpha)"] \arrow[d,red,dashed,swap, "U(S(\Id\times\eta)\otimes\Id)"'] & U(S(US\Psi\times \Phi)\otimes\I) \arrow[r, "U\lambda^{-1}"] \arrow[d,red,dashed,"U(S(\Id\times\eta)\otimes\Id)"]  & US(US\Psi\times \Phi) \arrow[d, "U(\Id\times\eta)"]                      \\
        US(US\Psi\times US\Phi) \arrow[d,"USn"] \arrow[r,red,dashed, "U\lambda"]                                & \color{red}U(S(US\Psi\times US\Phi)\otimes\I) \arrow[d,red,dashed, "U(Sn\otimes\Id)"] \arrow[r,red,dashed, "U(\Id\otimes\alpha)"]                            & \color{red}U(S(US\Psi\times US\Phi)\otimes\I) \arrow[d,red,dashed, "U(Sn\otimes\Id)"] \arrow[r,red,dashed, "U\lambda^{-1}"]                            & US(US\Psi\times US\Phi) \arrow[d,"USn"]                                  \\
        USU(S\Psi\otimes S\Phi) \arrow[d,"USUm"] \arrow[r,red,dashed, "U\lambda"]                               & \color{red}U(SU(S\Psi\otimes S\Phi)\otimes\I) \arrow[d,red,dashed, "U(SUm\otimes\Id)"]                                                            & \color{red}U(SU(S\Psi\otimes S\Phi)\otimes\I) \arrow[d,red,dashed, "U(SUm\otimes\Id)"] \arrow[r,red,dashed, "U\lambda^{-1}"]                           & US(U(S\Psi\otimes S\Phi)) \arrow[d, "USUm"]                               \\
        USUS(\Psi\times\Phi) \arrow[d, "\mu_{\Psi\times\Phi}{\color{red}=U\varepsilon_{S(\Psi\times\Phi)}}"] \arrow[r,red,dashed, "U\lambda"] & \color{red}U(SUS(\Psi\times \Phi)\otimes\I) \arrow[r,red,dashed, "U(\Id\otimes\alpha)"] \arrow[d, "\color{red}U(\varepsilon_{S(\Psi\times\Phi)}\otimes\Id)"] & \color{red}U(SUS(\Psi\times \Phi)\otimes\I) \arrow[r,red,dashed, "U\lambda^{-1}"] \arrow[d, "\color{red}U(\varepsilon_{S(\Psi\times\Phi)}\otimes\Id)"] & USUS(\Psi\times \Phi) \arrow[d, "\mu_{\Psi\times\Phi}{\color{red}=U\varepsilon_{S(\Psi\times\Phi)}}"] \\
        US(\Psi\times\Phi) \arrow[swap,r, "U\lambda"]                                                      & U(S(\Psi\times\Phi)\otimes\I) \arrow[swap,r, "S(\Id\otimes\alpha)"]                                                                         & U(S(\Psi\times\Phi)\otimes\I) \arrow[swap,r, "U\lambda^{-1}"]                                                                         & US(\Psi\times \Phi)                                                      
      \end{tikzcd}
      }
    \end{center}
    \begin{enumerate}
    \item\label{distalphaup:B} Naturality of $\lambda$.
    \item\label{distalphaup:D} Naturality of $\lambda$ and the definition of monad given by an adjunction.
    \item\label{distalphaup:A} Functoriality of tensor.
    \item\label{distalphaup:C} Naturality of $\lambda^{-1}$.
    \item\label{distalphaup:E} Naturality of $\lambda^{-1}$ and the definition of monad given by an adjunction.
    \end{enumerate}

  \item[\rdistcazeror]
    $\Uparrow_r\z[S(S\Psi)\times\Phi]\lra\Uparrow_r\z[S\Psi\times\Phi]$. We have

    \[
      \vcenter{
        \infer{\vdash\Uparrow_r\z[S(S\Psi)\times\Phi]:S(\Psi\times\Phi)}
        {
          \infer{\vdash\z[S(S\Psi)\times\Phi]:S(S(S\Psi)\times\Phi)}{}
        }
      }
      \qquad\textrm{and}\qquad
      \vcenter{
        \infer{\vdash\Uparrow_r\z[S\Psi\times\Phi]:S(\Psi\times\Phi)}
        {
          {
            \infer{\vdash\z[S\Psi\times\Phi]:S(S\Psi\times\Phi)}{}
          }
        }
      }
    \]

    Then, 
    \begin{center}
      \begin{tikzcd}
US1 \arrow[r, "U\mathbf 0"]                        & US((US)^2\Psi\times\Phi) \arrow[r, "U(\Id\times\eta)"] & US((US)^2\Psi\times US\Phi) \arrow[ddd, "USn"] \\
1 \arrow[d, "\eta"] \arrow[u, "\eta"]              &                                                        &                                                \\
US1 \arrow[d, "U\mathbf 0"]                        &                                                        &                                                \\
US(US\Psi\times\Phi) \arrow[d, "U(\Id\times\eta)"] & USUS(US\Psi\times\Phi) \arrow[l, "\mu"]                & USU(SUS\Psi\otimes S\Phi) \arrow[l, "USUm"]    \\
US(US\Psi\times US\Phi) \arrow[r, "USn"]           & USU(S\Psi\otimes S\Phi) \arrow[r, "USUm"]              & USUS(\Psi\times\Phi) \arrow[d, "\mu"]          \\
                                                   &                                                        & US(\Psi\times\Phi)
\end{tikzcd}
    \end{center}
    This diagram commutes by the property of the map $\mathbf 0$.

  \item[\rdistcazerol]
    $\Uparrow_\ell\z[\Phi\times S(S\Psi)]\lra\Uparrow_\ell\z[\Phi\times S\Psi]$.
    Analogous to case $\rdistcazeror$.

  \item[\rcaneutr] If $u\in\tbasis$, $\Uparrow_r u\times
    v\lra  u\times v$. We have
    \[
      \vcenter{
        \infer{\vdash\Uparrow_r u\times v:S(\Psi\times \Phi)}
        {
          \infer{\vdash u\times v:S(S\Psi\times \Phi)}
          {
            \infer{\vdash u\times v:S\Psi\times \Phi}
            {
              \infer{\vdash u:S\Psi}{\vdash u:\Psi}
              &
              \vdash v:\Phi
            }
          }
        }
      }
      \qquad\textrm{and}\qquad
      \vcenter{
        \infer{\vdash u\times v:S(\Psi\times \Phi)}
        {
          \infer{\vdash u\times v:\Psi\times \Phi}
          {
            \vdash u:\Psi & \vdash v:\Phi
          }
        }
      }
    \]
    Then
    \begin{center}
      \begin{tikzcd}[execute at end picture={
          \path (\tikzcdmatrixname-1-2) -- (\tikzcdmatrixname-2-2) coordinate[pos=0.5] (aux)
          (aux) -- (\tikzcdmatrixname-1-4) node[midway,blue,xshift=-17mm]{\small\eqref{neutrup:A}};
          \path (\tikzcdmatrixname-1-4) -- (\tikzcdmatrixname-3-4) coordinate[pos=0.5] (aux)
          (\tikzcdmatrixname-2-2) -- (aux) node[midway,blue,xshift=1cm]{\small\eqref{neutrup:B}};
          \path (\tikzcdmatrixname-2-2) -- (\tikzcdmatrixname-3-2) coordinate[pos=0.5] (aux)
          (aux) -- (\tikzcdmatrixname-3-4) node[xshift=-1cm,yshift=1mm,midway,blue]{\small\eqref{neutrup:C}};
          \path (aux) -- (\tikzcdmatrixname-3-4) node[xshift=-2.75cm,yshift=2.5mm,midway,blue]{\small\eqref{neutrup:D}};
        }]
        {1}\approx{1}^2 \arrow[r, "u\times v"] & \Psi\times\Phi \arrow[d, "\eta"] \arrow[r, "\eta\times\Id"] & US\Psi\times\Phi \arrow[r, "\eta"]  & US(US\Psi\times \Phi) \arrow[dd, "U(\Id\times\eta)"] \\
        & US(\Psi\times \Phi) \arrow[rru,red,dashed, "US(\eta\times\Id)"] \arrow[rrd,red,dashed, "US(\eta\times\eta)"] \arrow[d, "US\eta",red,dashed, bend left=50] & & \\
        & USUS(\Psi\times \Phi) \arrow[u, "\mu"] & US(U(S\Psi\otimes S\Phi)) \arrow[l, "USUm"] & US(US\Psi\times US\Phi) \arrow[l, "USn"]            
      \end{tikzcd}
    \end{center}
    \begin{enumerate}
    \item\label{neutrup:A} Naturality of $\eta$.
    \item\label{neutrup:B} Functoriality of product.
    \item\label{neutrup:D} Remark that $\mu\circ US\eta = \Id$.
    \item\label{neutrup:C} $\eta$ is a monoidal linear transformation.
    \end{enumerate}

  \item[\rcaneutl] If $v\in\tbasis$, $\Uparrow_\ell u\times
    v\lra  u\times v$. Analogous to case $\rcaneutr$.

  \item[\rcaneutzr] $\Uparrow_r\z[S(\B^n)\times\Phi]\lra\z[\B^n\times\Phi]$.
    We have
    \[
      \vcenter{
        \infer{\vdash\Uparrow_r\z[S(\B^n)\times\Phi]:S(\B^n\times\Phi)}
        {
          \infer{\vdash\z[S(\B^n)\times\Phi]:S(S(\B^n)\times\Phi)}{}
        }
      }
      \qquad\textrm{and}\qquad
      \vcenter{
        \infer{\vdash\z[\B^n\times\Phi]:S(\B^n\times\Phi)}{}
      }
    \]

    Then, 
    \begin{center}
      \begin{tikzcd}[column sep=15mm]
        US1\arrow[r,"U\mathbf 0"] & US(US\B^n\times\Phi) \arrow[r, "U(\Id\times\eta)"] & US(US\B^n\times
        US\Phi) \arrow[d, "USn"] \\
	1 \arrow[u,"\eta"]\arrow[d,"\eta"] & & USU(S\B^n\otimes S\Phi) \arrow[d, "USUm"] \\
        US1\arrow[r,"U\mathbf 0"] & US(\B^n\times\Phi) & USUS(\B^n\times\Phi) \arrow[l, "\mu"]
      \end{tikzcd}
    \end{center}
    Remark that $\mathbf 0 = f\circ \mathbf 0$ for any $f$.

  \item[\rcaneutzl] $\Uparrow_\ell\z[\Phi\times S(\B^n)]\lra\z[\Phi\times\B^n]$.
    Analogous to case \rcaneutzr.
    
  \item[Contextual rules] Trivial by composition law.
    \qed
  \end{description}
\end{proof}

\xrecap{Lemma}{Adequacy}{lem:Adequacy}{
  If $\Gamma\vdash t:A$ and $\sigma\vDash\Gamma$, then $\sigma t\in\com A$.}
\begin{proof}
  We proceed by structural induction on the derivation of $\Gamma\vdash t:A$.
  \begin{itemize}
  \item $\vcenter{\infer[^\tax] {\Gamma^\B,x:\Psi\vdash x:\Psi} {}}$\qquad
    Since $\sigma\vDash\Gamma^\B,x:\Psi$, we have $\sigma x\in\com\Psi$.
  \item $\vcenter{\infer[^{\tax_{\mathbf 0}}] {\Gamma^\B\vdash \z:S(A)} {}}$\qquad
    By definition, $\sigma\z[A]=\z[A]\in\com{S(A)}$.

  \item $\vcenter{\infer[^{\tax_{\ket 0}}] {\Gamma^\B\vdash\ket 0:\B} {}}$\qquad
    By definition, $\sigma\ket 0=\ket 0\in\com{\B}$.

  \item $\vcenter{\infer[^{\tax_{\ket 1}}] {\Gamma^\B\vdash\ket 1:\B} {}}$\qquad
    By definition, $\sigma\ket 1=\ket 1\in\com{\B}$. 

  \item $\vcenter{\infer[^{\alpha_I}] {\Gamma\vdash \alpha.t:S(A)}
      {\Gamma\vdash t:S(A)}}$

    By the induction hypothesis, $\sigma t\in\com{S(A)}$, hence, one of the
    following cases occur:
    \begin{itemize}
    \item $t\in\mathcal S\com A$, then $t=\sum_i\beta_ir_i$ with $r_i\in\com A$.
      Since $\alpha.\sum_i\beta_ir_i\lra^*\sum_i(\alpha\times\beta_i)r_i$, we
      have $\alpha.t\in\com{S(A)}$.
    \item $t\lra^* r$ with $r\in\mathcal S\com A$, then $t\in\overline{\mathcal
        S\com A}\subset\com{S(A)}$.
    \end{itemize}

  \item $\vcenter{\infer[^{+_I}] {\Gamma,\Delta,\Xi^\B\vdash\pair tu:S(A)}
      {\Gamma,\Xi^\B\vdash t:S(A) & \Delta,\Xi^\B\vdash u:S(A)}}$
    
    By the induction hypothesis, $\sigma_1\sigma t,\sigma_2\sigma
    u\in\com{S(A)}$, where $\sigma_1\vDash\Gamma$, $\sigma_2\vDash\Delta$, and
    $\sigma\vDash\Xi^\B$.

    By definition
    $\sigma_1\sigma t+\sigma_2\sigma u=\sigma_1\sigma_2\sigma(t+u)\in\com{S(A)}$.

  \item $\vcenter{\infer[^\tif]{\Gamma\vdash\ite{}tr:\B\Rightarrow
        A}{\Gamma\vdash t:A & \Gamma\vdash r:A}}$
    
    By the induction hypothesis, $\sigma t\in\com A$ and
    $\sigma r\in\com A$. Hence, for any $s\in\com\B$, $\ite s{\sigma
      t}{\sigma r}$ reduces either to $\sigma t$ or to $\sigma r$, hence it is in $\com A$, therefore, $\ite{}{\sigma t}{\sigma
      r}\in\com{\B\Rightarrow A}$.
  \item $\vcenter{\infer[^{\Rightarrow_I}] {\Gamma\vdash\lambda
        x{:}\Psi.t:\Psi\Rightarrow A} {\Gamma,x:\Psi\vdash t:A}}$
    
    Let $r\in\com\Psi$. Then, $\sigma(\lambda
    x{:}\Psi.t)r=(\lambda x{:}\Psi.\sigma t)r\rightarrow (r/x)\sigma
    t$. Since $(r/x)\sigma \vDash\Gamma,x:\Psi$, we have, by the induction
    hypothesis, that $(r/x)\sigma t\in\com A$. Therefore, $\lambda
    x{:}\Psi.t\in\com{\Psi\Rightarrow A}$.

  \item $\vcenter{\infer[^{\Rightarrow_E}] {\Delta,\Gamma,\Xi^\B\vdash tu:A}
      {\Delta,\Xi^\B\vdash u:\Psi & \Gamma,\Xi^\B\vdash t:\Psi\Rightarrow A}}$
    
    By the induction hypothesis, $\sigma_1\sigma u\in\com\Psi$ and $\sigma_2\sigma
    t\in\com{\Psi\Rightarrow A}$, where $\sigma_1\vDash\Delta$,
    $\sigma_2\vDash\Gamma$, and $\sigma\vDash\Xi^\B$. Then, by definition, $\sigma_1\sigma
    t\sigma_2\sigma r=\sigma_1\sigma_2\sigma(tr)\in\com A$.

  \item $\vcenter{\infer[^{\Rightarrow_{ES}}] {\Delta,\Gamma,\Xi^\B\vdash tu:S(A)}
      {\Delta,\Xi^\B\vdash u:S\Psi & \Gamma,\Xi^\B\vdash t:S(\Psi\Rightarrow
        A)}}$
    
    By the induction hypothesis
    $\sigma_1\sigma t\in\com{S(\Psi\Rightarrow A)}=\overline{S\com{\Psi\Rightarrow
        A}}$ and $\sigma_2\sigma u\in\com{S\Psi}=\overline{S\com\Psi}$, where $\sigma_1\vDash\Gamma$,
    $\sigma_2\vDash\Delta$, and $\sigma\vDash\Xi^\B$.
    Let $\sigma_1\sigma t\lra^*\sum_i\alpha_it_i$ with $t_i\in\com{\Psi\Rightarrow A}$ and
    $\sigma_2\sigma u\lra\sum_j\beta_ju_j$, with $u_j\in\com\Psi$. Then
    $\sigma_1\sigma_2\sigma (tu)=(\sigma_1\sigma t)(\sigma_2\sigma u)\lra^*\sum_{ij}\alpha_i\beta_jt_iu_j$
    with $t_iu_j\in\com A$, therefore, $\sigma_1\sigma_2\sigma (tu)\in\com{S(A)}$.

  \item $\vcenter{\infer[^{S_I}] {\Gamma\vdash t:S(A)} {\Gamma\vdash t:A}}$

    By the induction hypothesis, $\sigma t\in\com A\subseteq S\com A\subseteq\com{S(A)}$.
    
  \item $\vcenter{\infer[^{\times_I}] {\Gamma,\Delta,\Xi^\B\vdash t\times
        u:\Psi\times\Phi} {\Gamma,\Xi^\B\vdash t:\Psi & \Delta,\Xi^\B\vdash
        u:\Phi}}$
    
    By the induction hypothesis, $\sigma_1\sigma t\in\com\Psi$ and
    $\sigma_2\sigma u\in\com\Phi$, hence,
    $\sigma_1\sigma t\times\sigma_2\sigma u=\sigma_1\sigma_2\sigma (t\times u)\in\com\Psi\times\com\Phi\subseteq\com{\Psi\times\Phi}$.

  \item $\vcenter{\infer[^{\times_{Er}}] {\Gamma\vdash \head~t:\B} {\Gamma\vdash
        t:\B^n & {\scriptstyle n>1}}}$
    
    By the induction hypothesis, $\sigma
    t\in\com{\B^n}=\overline{\com\B\times\com{\B^{n-1}}}=\{u\mid
    u\lra^*u_1\times u_2\textrm{ with }u_1\in\com\B\textrm{ and
    }u_2\in\com{\B^{n-1}}\}$. Hence, $\sigma(\head\ t)=\head\ \sigma
    t\lra^*\head(u_1\times u_2)\lra u_1\in\com\B$.

  \item $\vcenter{\infer[^{\times_{El}}] {\Gamma\vdash \tail~t:\B^{n-1}} {\Gamma\vdash t:\B^n & {\scriptstyle n>1}}}$

    By the induction hypothesis, $\sigma
    t\in\com{\B^n}=\overline{\com\B\times\com{\B^{n-1}}}=\{u\mid
    u\lra^*u_1\times u_2\textrm{ with }u_1\in\com\B\textrm{ and
    }u_2\in\com{\B^{n-1}}\}$. Hence, $\sigma(\tail\ t)=\tail\ \sigma
    t\lra^*\tail(u_1\times u_2)\rightarrow u_2\in\com{\B^{n-1}}$.

  \item $\vcenter{\infer[^{\Uparrow_r}] {\Gamma\vdash \Uparrow_rt:S(\Psi\times
        \Phi)} {\Gamma\vdash t:S(S\Psi\times \Phi)}}$

    By the induction hypothesis, we have that
    $\sigma t\in\com{S(S\Psi\times\Phi)}$. Therefore, $\sigma
    t\in\overline{S(\overline{\overline{S\com\Psi}\times\com\Phi})}$.

    Hence, $\sigma t\lra^*\sum_i\alpha_i (
    (\sum_{j_i}\beta_{ij_i}r_{ij_i})
    \times u_i)$ with
    $u_i\in\com\Phi$ and
    $r_{ij_i}\in\com\Psi$.

    Hence, $\Uparrow_r t\lra^*\sum_{j_i}(\alpha_i\beta_{ij_i})\Uparrow_r(r_{ij_i}\times u_i)\in\com{S(\Psi\times\Phi)}$.

  \item $\vcenter{\infer[^{\Uparrow_\ell}] {\Gamma\vdash \Uparrow_\ell
        t:S(\Psi\times \Phi)} {\Gamma\vdash t:S(\Psi\times S\Phi)}}$\qquad
    Analogous to previous case.
    \qed
  \end{itemize}
\end{proof}

\end{document}